\documentclass[3p,review]{elsarticle}

\usepackage{lineno,hyperref,graphicx,epstopdf}
\usepackage{times}
\usepackage{type1cm}
\fontsize{12pt}{18pt}

\usepackage{lineno,hyperref}
\usepackage{amsmath,amsfonts}
\usepackage{algorithmic}
\usepackage{array}
\usepackage[caption=false,font=normalsize,labelfont=sf,textfont=sf]{subfig}
\usepackage{textcomp}
\usepackage{stfloats}
\usepackage{url}
\usepackage{verbatim}
\usepackage{graphicx}
\usepackage{epstopdf}
\usepackage{dcolumn}
\usepackage{bm}
\usepackage{braket}
\usepackage{color}
\usepackage{url}
\usepackage{algorithmic}
\usepackage{algorithm}
\usepackage{balance}
\usepackage{booktabs}
\allowdisplaybreaks[3]
\modulolinenumbers[5]

\usepackage{braket}
\usepackage{amsmath}
\usepackage{amsfonts}
\usepackage{amssymb}
\usepackage{amsthm}
\usepackage[thicklines]{cancel}

\theoremstyle{plain}
\newtheorem{theorem}{Theorem}

\newtheorem{lemma}{Lemma}

\theoremstyle{definition}
\newtheorem{definition}{Definition}

\makeatletter
\newenvironment{breakablealgorithm}
  {
   \begin{center}
     \refstepcounter{algorithm}
     \hrule height.8pt depth0pt \kern2pt
     \renewcommand{\caption}[2][\relax]{
       {\raggedright\textbf{\ALG@name~\thealgorithm} ##2\par}%
       \ifx\relax##1\relax 
         \addcontentsline{loa}{algorithm}{\protect\numberline{\thealgorithm}##2}%
       \else 
         \addcontentsline{loa}{algorithm}{\protect\numberline{\thealgorithm}##1}%
       \fi
       \kern2pt\hrule\kern2pt
     }
  }{
     \kern2pt\hrule\relax
   \end{center}
  }
\makeatother

\bibliographystyle{elsarticle-num}
\journal{}

\date{}

\begin{document}
\begin{frontmatter}
\title{Distributed exact quantum algorithms for  Bernstein-Vazirani and search problems}

\author[mymainaddress,mysecondaryaddress]{Xu Zhou}
\ead{zhoux229@mail2.sysu.edu.cn}

\author[mymainaddress,mysecondaryaddress,myfourthaddress]{Daowen Qiu\corref{mycorrespondingauthor}}

\cortext[mycorrespondingauthor]{Corresponding author}
\ead{issqdw@mail.sysu.edu.cn}

\author[mythirdaddress,myfourthaddress]{Le Luo}
\ead{luole5@mail.sysu.edu.cn}

\address[mymainaddress]{Institute of Quantum Computing and Computer Theory, School of Computer Science and Engineering, Sun Yat-sen University, Guangzhou 510006, China}
\address[mysecondaryaddress]{The Guangdong Key Laboratory of Information Security Technology, Sun Yat-sen University, Guangzhou 510006, China}
\address[mythirdaddress]{School of Physics and Astronomy, Sun Yat-sen University, Zhuhai 519082, China}
\address[myfourthaddress]{QUDOOR Technologies Inc., Zhuhai, 519099, China}

\begin{abstract}
Distributed quantum computation has gained extensive attention since small-qubit quantum computers seem to be built more practically  in the noisy intermediate-scale quantum (NISQ) era. In this paper, we give a distributed Bernstein-Vazirani algorithm (DBVA) with $t$ computing nodes, and a distributed exact Grover's algorithm (DEGA) that solve the search problem with only one target item in the unordered databases. 
Though the designing techniques are simple in the light of BV algorithm, Grover's algorithm, the improved Grover's algorithm by Long, and the distributed Grover's algorithm by Qiu et al, in comparison to BV algorithm, the circuit depth of DBVA is not greater than $2^{\text{max}(n_0, n_1, \cdots, n_{t-1})}+3$ instead of $2^{n}+3$, and the circuit depth of DEGA is $8(n~\text{mod}~2)+9$, which is less than the circuit depth of  Grover's algorithm, 
$1 + 8\left\lfloor \frac{\pi}{4}\sqrt{2^n} \right\rfloor$. 
In particular, we provide situations of our DBVA and DEGA on MindQuantum (a quantum software) to validate the correctness and practicability of our methods. By simulating the algorithms running in the depolarized channel, it further illustrates that distributed quantum algorithm has superiority of resisting noise.
\end{abstract}

\begin{keyword}
Distributed quantum computation; Noisy intermediate-scale quantum (NISQ) era; Distributed Bernstein-Vazirani algorithm (DBVA); Distributed exact Grover's algorithm (DEGA); MindQuantum
\end{keyword}

\end{frontmatter}


\section{Introduction}\label{introduction}
Quantum computation, a novel computing model, utilizes quantum information units to perform computation depending on the laws of quantum mechanics. It has attracted considerable attention in the past couple of decades since Benioff \cite{RefBenioff1980, RefBenioff1982} proposed the concept of the quantum Turing machine in 1980. Subsequently, a great number of remarkable and valuable quantum algorithms have been presented, such as Deutsch's algorithm \cite{RefDeutsch1985}, Deutsch-Jozsa (DJ) algorithm \cite{RefDeutsch1992}, Bernstein-Vazirani (BV) algorithm \cite{RefBernstein1993}, Simon's algorithm \cite{RefSimon1997}, Shor's algorithm \cite{RefShor1994}, Grover's algorithm \cite{RefGrover1997}, HHL algorithm \cite{RefHarrow2009} and so on. For some specific problems, quantum algorithms have the potential to offer essential advantages over classical algorithms.

Bernstein-Vazirani (BV) algorithm \cite{RefBernstein1993} was proposed after DJ algorithm. It proves that quantum algorithms are capable of not only determining properties of Boolean functions \cite{RefBernstein2015, RefXie2018, RefYounes2015}, but also determining the functions themselves. For  BV problem, it aims to find the hidden string $s\in\{0,1\}^n$ satisfying the function $f_s (x)=\langle s\cdot x\rangle= \left(\sum_{i=0}^{n-1}s_i\cdot x_i\right)~\text{mod}~2: \{0,1\}^{n} \rightarrow \{0,1\}$. Suppose we have a black box (Oracle) that can return $f_s (x_0)$ for an input $x_0$. In order to solve the BV problem, the classical algorithm will have to query Oracle $n$ times, while the BV algorithm will just need to query once. In 2017, Nagata et al. \cite{RefNagata2017} presented a generalized Bernstein-Vazirani algorithm.

Grover's algorithm \cite{RefGrover1997} was proposed by Grover in 1996, which is a quantum search algorithm that solves the search problem in an unordered databases with $2^n$ elements. Specifically, let Boolean function $f: \{0,1\}^n \rightarrow \{0,1\}$. For the $m$ targets search problem, it aims to find out one target string $x_i\in\{0,1\}^n$ satisfying $f (x_i)=1$, where $i\in\{0, 1, \cdots, m-1\}$. Suppose we have a black box (Oracle) that can recognize the inputs, which means it will return $f (x)=1$ when $x$ is a target string, and output $f (x)=0$ otherwise. The classical algorithms have to query Oracle $\mathcal{O} (2^n/m)$ times, while  Grover's algorithm  just need to query $\mathcal{O} (\sqrt{2^n/m})$ times to figure out the target string with high probability, which has square acceleration compared with the classical algorithms.

Grover's algorithm has been widely used in the minimum search problem \cite{RefChristoph1996}, string matching problem \cite{RefRamesh2003}, quantum dynamic programming \cite{RefAmbainis2019} and computational geometry problem \cite{RefAndris2020}, etc. Subsequently, a generalization of  Grover's algorithm known as the quantum amplitude amplification and estimation algorithms \cite{RefBrassard2002} were presented.

However, Grover's algorithm is unable to search for the target state accurately, except in the case of looking for one data in an unordered database with a size of four \cite{Refiao2010}. In 2001, Long \cite{RefLong2001} improved Grover's algorithm, which acquires the goal state exactly. 

Currently, quantum computation is entering the noisy intermediate-scale quantum (NISQ) era \cite{RefPreskill2018}. During this period, a great deal of theoretical and experimental work has been devoted to the development of quantum devices. Studies on small-scale quantum computers based on ions \cite{RefCirac1995}, photons \cite{RefLu2007}, superconduction \cite{RefMakhlin2001}, and other quantum systems \cite{RefBerezovsky2008, RefHanson2008} have been extensively conducted. It may contain tens to thousands of qubits with a gate error rate of less than $10^{-3}$ \cite{RefEndo2021}. Currently, small-qubit quantum computers seem to be built  more practically, compared to  large-scale universal quantum computers. Hence, researchers are considering how several small-scale devices might collaborate to accomplish a task on a large scale.

Distributed quantum computation, a research field in computer science, combines distributed computing and quantum computation. It aims to divide a large problem into many small parts and then distribute them across multiple quantum computers for processing. In this way, each component needs fewer qubits and has a shallower quantum circuit.

There has been great interest in distributed quantum algorithms \cite{RefBuhrman2003, RefYimsiriwattana2004, RefBeals2013, RefLi2017}. Avron et al. \cite{RefAvron2021} proposed a distributed method for constructing Oracles. 
Qiu et al. \cite{RefQiu2022} proposed two distributed Grover's algorithms (parallel and serial) and these algorithms require fewer qubits and have shallower depth of circuits. Also, the distributed algorithms by Qiu et al. \cite{RefQiu2022} can solve the search problem with multiple targets. They  \cite{RefQiu2022} divided a Boolean function $f$ into $2^k$ subfunctions, where $2^k$ represents the number of computing nodes. Theyr determine the number of solutions in each subfunction by  using the quantum counting algorithm, and then ran $(n-k)$-qubit Grover's algorithm for each subfunction. 
Particularly, Qiu et al. \cite{RefQiu2022} proposed an efficient algorithm of constructing quantum circuits for realizing the Oracle corresponding to any Boolean function with conjunctive normal form (CNF).

Tan, Xiao, and Qiu et al. \cite{RefTan2022} presented an interesting and novel distributed Simon's algorithm. Recently, Xiao and Qiu et al. \cite{RefXiao2022} proposed a distributed Shor's algorithm, which can separately estimate patrial bits of $s/r$ for some $s\in\{0,1,\cdots,r-1\}$ by two quantum computers during solving order-finding.

In this paper, we  give a distributed Bernstein-Vazirani algorithm (DBVA) with $t$ computing nodes. After that, we present a distributed exact Grover's algorithm (DEGA) for the case of only one target item in the unordered databases. 
DBVA and DEGA are distributed exact quantum algorithm. 
In comparison to the BV algorithm, the circuit depth of DBVA is not greater than $2^{\text{max}(n_0, n_1, \cdots, n_{t-1})}+3$ instead of $2^{n}+3$.  The circuit depth of DEGA is $8(n~\text{mod}~2)+9$, which is less than the circuit depths of the original Grover's algorithm, $1 + 8\left\lfloor \frac{\pi}{4}\sqrt{2^n} \right\rfloor$. 

MindQuantum (a quantum software) \cite{Refmq_2021} is a general software library supporting the development of applications for quantum computation. In this paper, we provide particular situations of our DBVA and DEGA on MindQuantum to validate the correctness and practicability of our methods. At the first, we explicate the concrete circuits of a 6-qubit DBVA on MindQuantum as an example, which shows how to decompose the original 6-qubit BV problem into two 3-qubit and three 2-qubit tasks. In the next part, we explicate the specific steps of implement $n$-qubit DEGA, where $n\in\{2,3,4,5\}$. In the end, by simulating the 6-qubit DBVA and 5-qubit DEGA running in the depolarized channel, it further illustrates that distributed quantum algorithm has  superiority of resisting noise.

The rest of the paper is organized as follows. We will review BV algorithm and Grover's algorithm in Section \ref{preliminaries} and mainly describe our DBVA and DEGA in Section \ref{DBVA} and Section \ref{DEGA}. Next, the analysis of our algorithms will be presented in Section \ref{analysis}. After that, we provide particular situations of our DBVA and DEGA on MindQuantum in Section \ref{experiment}. Finally, a brief conclusion is given in Section \ref{conclusion}.

\section{Preliminaries}\label{preliminaries}
In this section, we will review  BV algorithm \cite{RefBernstein1993}, Grover's algorithm \cite{RefGrover1997}, and the algorithm by Long \cite{RefLong2001}.

\subsection{Bernstein-Vazirani problem \cite{RefBernstein1993}}
Let Boolean function $f: \{0,1\}^n \rightarrow \{0,1\}$. Consider a function
\begin{eqnarray}\label{bvfdef}
f_s (x)=\langle s\cdot x\rangle = \left(\sum_{i=0}^{n-1}s_i\cdot x_i\right)~\text{mod}~2,
\end{eqnarray}
where $x\in\{0,1\}^n$. Suppose we have an Oracle ($O_{f_s(x)}:\vert x\rangle\rightarrow (-1)^{f_s (x)}\vert x\rangle$) that can return $f_s (x_0)$ for an input $x_0\in\{0,1\}^n$. For BV problem, it aims to identify the hidden string $s\in\{0,1\}^n$.

For BV algorithm, it will just need to query Oracle once (see Figure \ref{figure1}).
\begin{figure}[H]
\centering
\includegraphics[width=3.5in]{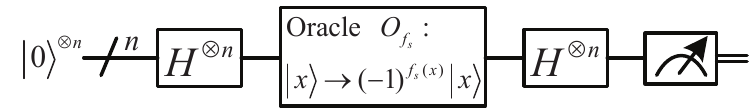}
\caption{\label{figure1} Quantum circuit of the BV algorithm.}
\end{figure}

The evolution of the quantum state in  BV algorithm is shown below:
\begin{eqnarray}\label{state_BV}
  \vert 0\rangle^{\otimes n} &\stackrel{H^{\otimes n}}{\longrightarrow}& \frac{1}{\sqrt{2^n}}\sum_{x\in\{0,1\}^n}\vert x\rangle \nonumber \\
    &\stackrel{O_{f_s(x)}}{\longrightarrow}& \frac{1}{\sqrt{2^n}}\sum_{x\in\{0,1\}^n} (-1)^{f_s (x)}\vert x\rangle = \frac{1}{\sqrt{2^n}}\sum_{x\in\{0,1\}^n} (-1)^{\langle s\cdot x\rangle}\vert x\rangle \nonumber \\
    &=& \frac{1}{\sqrt{2^n}}\sum_{x_0x_1\cdots x_{n-1}\in\{0,1\}^n} (-1)^{\left(\sum_{i=0}^{n-1} s_i \cdot x_i\right)~\text{mod}~2}\vert x_0x_1\cdots x_{n-1}\rangle \nonumber \\
    &\stackrel{H^{\otimes n}}{\longrightarrow}& \vert s_0s_1s_2\cdots s_{n-1}\rangle = \vert s\rangle.
\end{eqnarray}

\subsection{Grover's algorithm \cite{RefGrover1997}}
Suppose that the search task is carried out in an unordered set (or database) with $2^n$ elements. We establish a one-to-one correspondence between the elements in the database and the indexes (integers from 0 to $2^n-1$). We focus on searching the indexes of these elements.

Let Boolean function $f: \{0,1\}^n \rightarrow \{0,1\}$. Define a function for the search problem as follows:
\begin{eqnarray}\label{groverdef}
f(x)=
\begin{cases}
1,x=\tau,\\
0,x\neq \tau ,
\end{cases}
\end{eqnarray}
where $x\in\{0,1\}^n$ and $\tau$ is the target index. In a quantum system, indexes are represented by quantum states $\vert 0\rangle, \vert 1\rangle, \cdots, \vert 2^n-1\rangle$ (or $\vert 00\cdots0\rangle, \vert 00\cdots1\rangle, \cdots, \vert 11\cdots1\rangle$). 

Without losing of generality, we consider a search problem that includes only one target item in the unordered databases throughout this paper. Suppose we have a black box (Oracle  $U_{f(x)}:\vert x\rangle\rightarrow (-1)^{f (x)}\vert x\rangle$, where $x\in\{0,1\}^n$) that can recognize the inputs. The purpose of Grover's algorithm is to find out $\vert \tau\rangle$ with high probability.

Here, we briefly review  Grover's algorithm in Algorithm \ref{Grover's Algorithm}. The whole quantum circuit is shown in Figure \ref{grovercir}.

\begin{breakablealgorithm}
\caption{Grover's Algorithm (includes only one target).}
\label{Grover's Algorithm}
\begin{algorithmic}
\STATE \textbf{Input}: (1) The number of qubits $n$; (2) A function $f(x)$, where $f(x)=0$ for all $x\in\{0,1\}^n$ except $\tau$, for which $f(\tau)=1$.
\STATE \textbf{Initialization}: An Oracle $U_{f(x)}:\vert x\rangle\rightarrow (-1)^{f (x)}\vert x\rangle$, where $x\in\{0,1\}^n$.
\STATE \textbf{Output}: The string $\tau\in\{0,1\}^n$ such that $f(\tau)=1$ with great probability.
\STATE \textbf{Procedure:}
\STATE \textbf{Step 1.} Initialize $n$ qubits $\vert 0\rangle^{\otimes n}$.
\STATE \textbf{Step 2.} Apply $n$ Hadamard gates $H^{\otimes n}$.
\STATE \textbf{Step 3.} Grover operator $G = -H^{\otimes n} U_0 H^{\otimes n}U_{f(x)}$ is executed with $\left\lfloor \frac{\pi}{4}\sqrt{2^n} \right\rfloor$ times, where $U_0=I_n-2(\vert 0\rangle\langle0\vert )^{\otimes n}$ and $U_{f(x)}=I_n - 2\vert \tau\rangle\langle\tau\vert $.
\STATE \textbf{Step 4.} Measure each qubit in the basis $\{\vert 0\rangle, \vert 1\rangle\}$.
\end{algorithmic}
\end{breakablealgorithm}

\begin{figure}[H]
\centering
\includegraphics[width=3.3in]{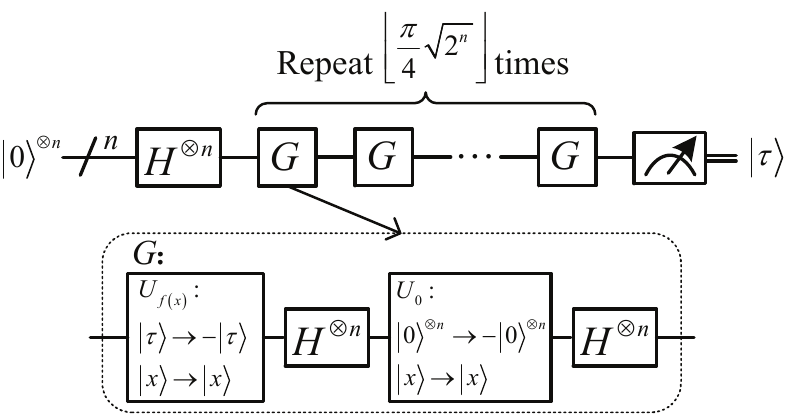}
\caption{\label{grovercir} Quantum circuit of Grover's algorithm.}
\end{figure}

\subsection{The algorithm by Long \cite{RefLong2001}}
In 2001, Long improved Grover's algorithm, which will acquire the goal state with a probability of $100 \%$. Its main idea is to extend Grover operator $G$ to another operator $L$. 

Here, we briefly review this algorithm in Algorithm \ref{Long's Algorithm}. The whole quantum circuit is shown in Figure \ref{longcir}.

\begin{breakablealgorithm}
\caption{The algorithm by Long (includes only one target).}
\label{Long's Algorithm}
\begin{algorithmic}
\STATE \textbf{Input}: (1) The number of qubits $n$; (2) A function $f(x)$, where $f(x)=0$ for all $x\in\{0,1\}^n$ except $\tau$, for which $f(\tau)=1$.
\STATE \textbf{Initialization}: An Oracle $R_{f(x)}:\vert x\rangle\rightarrow e^{i\phi \cdot f (x)}\vert x\rangle$, where $x\in\{0,1\}^n$, $\phi=2\arcsin\left(\sin\left(\frac{\pi}{4J+6}\right) / \sin \theta\right)$, $J=\lfloor(\pi/2-\theta)/(2\theta)\rfloor$, and $\theta= \arcsin {\sqrt{1/2^n}}$.
\STATE \textbf{Output}: The string $\tau\in\{0,1\}^n$ such that $f(\tau)=1$ with a probability of $100 \%$.
\STATE \textbf{Procedure:}
\STATE \textbf{Step 1.} Initialize $n$ qubits $\vert 0\rangle^{\otimes n}$.
\STATE \textbf{Step 2.} Apply $n$ Hadamard gates $H^{\otimes n}$.
\STATE \textbf{Step 3.} Operator $L = -H^{\otimes n} R_0 H^{\otimes n}R_{f(x)}$ is executed with $J+1$ times, where $R_0 = I_n + (e^{i\phi}-1)(\vert 0\rangle\langle0\vert )^{\otimes n}$ and $R_{f(x)} = I_n + (e^{i\phi}-1)\vert \tau\rangle\langle\tau\vert $.
\STATE \textbf{Step 4.} Measure each qubit in the basis $\{\vert 0\rangle, \vert 1\rangle\}$.
\end{algorithmic}
\end{breakablealgorithm}

\begin{figure}[H]
\centering
\includegraphics[width=3.3in]{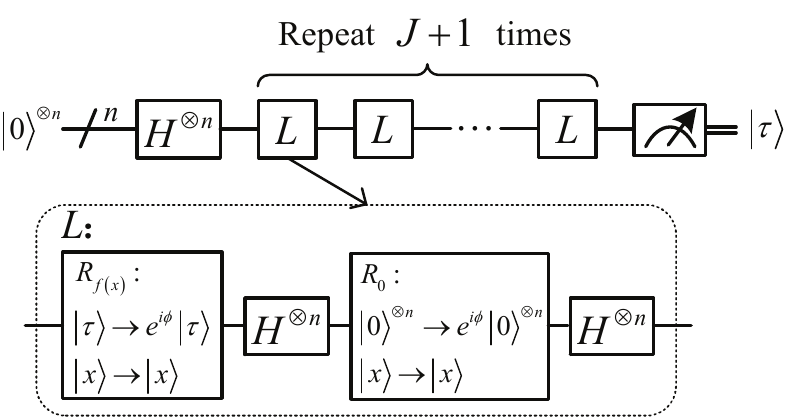}
\caption{\label{longcir} Quantum circuit of the algorithm by Long.}
\end{figure}

\section{Distributed Bernstein-Vazirani algorithm}\label{DBVA}
In this section, we mainly describe our distributed Bernstein-Vazirani algorithm (DBVA) with $t$ computing nodes.

\subsection{Subfunctions}\label{subfunctions}
As mentioned in \cite{RefAvron2021}, an $n$-qubit Boolean function $f: \{0,1\}^n \rightarrow \{0,1\}$ can be split into two subfunctions $f_{\text{even}/\text{odd}}: \{0,1\}^{n-1} \rightarrow \{0,1\}$,
\begin{eqnarray}
f_{\text{even}}(x)&=&f(x_0x_1\cdots x_{i-1}0x_{i+1}\cdots x_{n-2}x_{n-1}),\label{subfunctioneven} \\
f_{\text{odd}}(x)&=&f(x_0x_1\cdots x_{i-1}1x_{i+1}\cdots x_{n-2}x_{n-1}),\label{subfunctionodd}
\end{eqnarray}
where $x=x_0x_1\cdots x_{i-1}x_{i+1}\cdots x_{n-2}x_{n-1}\in\{0,1\}^{n-1}$.

Furthermore, if we choose last $k\in\{1,2,\cdots,n-1\}$ bits to divide $f$, we can divide the original function into $2^k$ subfunctions $f_p: \{0,1\}^{n-k} \rightarrow \{0,1\}$:
\begin{eqnarray}\label{subfunctions_fp}
f_p(x)&=&f(\underbrace{x_0x_1\cdots x_{n-k-1}}_{n-k} \underbrace{y_{p_{0}} y_{p_{1}}\cdots y_{p_{k-1}}}_{k}),
\end{eqnarray}
where $k\in\{1,2,\cdots,n-1\}$, $x=x_0x_1\cdots x_{n-k-1}\in\{0,1\}^{n-k}$, and $y_{p_{0}} y_{p_{1}}\cdots y_{p_{k-1}}\in\{0,1\}^{k}$ is the binary representation of $p\in\{0,1,\cdots,2^k-1\}$.

More generally, $y_{p_{0}} y_{p_{1}}\cdots y_{p_{k-1}}$ can be at any $k$ positions in $x$ of the original function $f$. For instance, we can also obtain another $2^k$ subfunctions $f_q: \{0,1\}^{n-k} \rightarrow \{0,1\}$:
\begin{eqnarray}\label{subfunctions_fq}
f_q(x)&=&f(\underbrace{y_{q_{0}} y_{q_{1}}\cdots y_{q_{k-1}}}_{k}   \underbrace{x_{k} x_{k+1} \cdots x_{n-1}}_{n-k}),
\end{eqnarray}
where $k\in\{1,2,\cdots,n-1\}$, $x=x_{k} x_{k+1} \cdots x_{n-1}\in\{0,1\}^{n-k}$, and $y_{q_{0}} y_{q_{1}}\cdots y_{q_{k-1}}\in\{0,1\}^{k}$ is the binary representation of $q\in\{0,1,\cdots,2^k-1\}$.

Next, we must declare our partition strategy for the original function $f_s (x)$ before providing our DBVA.

Suppose that the hidden string $s\in\{0,1\}^n$ of the original problem we are looking for can be expressed as
\begin{eqnarray}\label{tnodesstrings}
s=s_{n_0}s_{n_1}\cdots s_{n_{t-1}}\in\{0,1\}^{n},
\end{eqnarray}
where $s_{n_j}\in\{0,1\}^{n_j}$, $\sum_{j=0}^{t-1} n_j =n$, and $j \in \{0,1,\cdots,t-1 \}$.

Let the number of qubits of the $j$-th computing node be $n_{j}$, where $j \in \{0,1,\cdots,t-1 \}$.

For the zeroth computing node ($j=0$), we choose last $n_1+n_2+\cdots+n_{t-1}$ bits in $x$ to divide $f$, then we can get $2^{n_1+n_2+\cdots+n_{t-1}}$ subfunctions. We keep the subfunction when $p=0$, and denote it as
\begin{eqnarray}\label{tnodesfn00}
f_{{s_{n_0}}}(m_0) = f(m_0\underbrace{00\cdots00}_{n_1+n_2+\cdots+n_{t-1}}),
\end{eqnarray}
where $m_0\in\{0,1\}^{n_0}$.

For the $j$-th computing node ($j\in\{1,2,\cdots, t-2\}$), we choose first $n_0+n_1+\cdots+n_{j-2}$ and last $n_j+n_{j+1}+\cdots+n_{t-1}$ bits in $x$ to divide $f$, then we can get $2^{n_0+n_1+\cdots+n_{j-2}+n_j+n_{j+1}+\cdots+n_{t-1}}$ subfunctions.  We keep the subfunction when $p=0$, and denote it as
\begin{eqnarray}\label{tnodesf0n10}
f_{{s_{n_{j-1}}}}(m_{j-1})=f(\underbrace{00\cdots00}_{n_0+n_1+\cdots+n_{j-2}} m_{j-1} \underbrace{00\cdots00}_{n_j+n_{j+1}+\cdots+n_{t-1}}),
\end{eqnarray}
where $m_{j-1}\in\{0,1\}^{n_{j-1}}$.

For the last computing node ($j=t-1$), we choose first $n_0 + n_1+\cdots+n_{t-2}$ bits in $x$ to divide $f$, then we can get $2^{n_0 + n_1+\cdots+n_{t-2}}$ subfunctions. We keep the subfunction when $p=0$, and denote it as
\begin{eqnarray}\label{tnodesf0nt-1}
f_{{s_{n_{t-1}}}}(m_{t-1})=f(\underbrace{00\cdots00}_{n_0+n_1+\cdots+n_{t-2}}m_{t-1}),
\end{eqnarray}
where $m_{t-1}\in\{0,1\}^{n_{t-1}}$.

\subsection{DBVA}
Assume that it is easy to obtain the Oracle for each subfunction, which means we can obtain $t$ Oracles $O_{f_{s_{n_j}}(m_j)}$ corresponding to $t$ subfunctions $f_{{s_{n_j}}}(m_j)=\langle s_{n_j}\cdot m_j\rangle: \{0,1\}^{n_j} \rightarrow \{0,1\}$, where
\begin{eqnarray}\label{oraclefsmj}
O_{f_{s_{n_j}}(m_j)}:\vert m_j\rangle\rightarrow (-1)^{f_{{s_{n_j}}}(m_j)} \vert m_j\rangle,
\end{eqnarray}
$s_{n_j}\in\{0,1\}^{n_j}$, $\sum_{j=0}^{t-1} n_j =n$, and $j\in\{0,1,\cdots, t-1\}$.

Next, we provide a detailed introduction to our DBVA in Algorithm \ref{algo1}. The whole quantum circuit is shown in Figure \ref{dbvaalg}.

\begin{breakablealgorithm}
\caption{Distributed Bernstein-Vazirani algorithm (with $t$ computing nodes).}
\label{algo1}
\begin{algorithmic}
\noindent \textbf{Input}: (1) The number of original qubits $n$; (2) A function $f_s (x)=\langle s\cdot x\rangle = (\sum_{i=0}^{n-1}s_i\cdot x_i)~\text{mod}~2$, where $x\in\{0,1\}^n$; (3) The number of computing nodes $t$; (4) The number of qubits of the $j$-th computing node $n_{j}$, where $\sum_{j=0}^{t-1} n_j =n$ and $j \in \{0,1,\cdots,t-1 \}$.

\noindent \textbf{Output}: The string $s=s_{n_0}s_{n_1}\cdots s_{n_{t-1}}\in\{0,1\}^{n}$, where $s_{n_j}\in\{0,1\}^{n_j}$.

\noindent \textbf{Procedure:}

\noindent \textbf{Step 1}: Let $j=0$.

\noindent \textbf{Step 2}: Initialize $n_j$ qubits: $\vert 0\rangle^{\otimes n_j}$.

\noindent \textbf{Step 3}: Apply $n_j$ Hamadard gates and get $\frac{1}{\sqrt{2^{n_j}}}\sum_{m_j\in\{0,1\}^{n_j}}\vert m_j\rangle$.

\noindent \textbf{Step 4}: Generate subfunction $f_{{s_{n_j}}}(m_j)=\langle s_{n_j}\cdot m_j\rangle: \{0,1\}^{n_j} \rightarrow \{0,1\}$ according to function $f_s (x)$ and its corresponding Oracle $O_{f_{s_{n_j}}(m_j)}$.

\noindent \textbf{Step 5}: Query the Oracle $O_{f_{{s_{n_j}}}(m_j)}$ and get $\frac{1}{\sqrt{2^{n_j}}}\sum_{m_j\in\{0,1\}^{n_j}} (-1)^{\langle s_{n_j}\cdot m_j\rangle} \vert m_j\rangle$.

\noindent \textbf{Step 6}: Perform $n_j$ Hamadard gates and get $\vert s_{n_j}\rangle$.

\noindent \textbf{Step 7}: Measure each qubit in the basis $\{\vert 0\rangle, \vert 1\rangle\}$ and obtain $s_{n_j}$.

\noindent \textbf{Step 8}: Let $j=j+1$.

\noindent \textbf{Step 9}: Repeat Step 2 to Step 8 until $j=t$.
\end{algorithmic}
\end{breakablealgorithm}

\begin{figure}[H]
\centering
\includegraphics[width=4.5in]{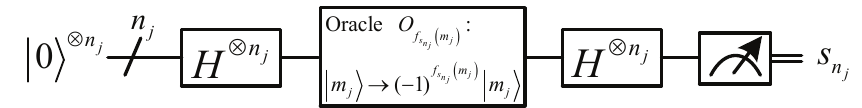}
\caption{\label{dbvaalg} Quantum circuit of DBVA with $t$ computing nodes.}
\end{figure}

The hidden string $s=s_{n_0}s_{n_1}\cdots s_{n_{t-1}}\in\{0,1\}^{n}$ will be exactly acquired after following the above algorithm steps. 
Hence, the entire process of our DBVA with $t$ computing nodes is achieved.

\section{Distributed exact Grover's algorithm}\label{DEGA}
In this section, we mainly describe our distributed exact Grover's algorithm (DEGA), which decomposes the original search problem into $\lfloor n/2 \rfloor$ parts to solve.

\subsection{Subfunctions}\label{subdega}
Similarly, we need to describe our partitioning strategy for the original function before providing our algorithm. We will generate a total of $\lfloor n/2 \rfloor$ subfunctions $g_i (m_i)$ for our DEGA, where $i \in \{0,1,\cdots, \lfloor n/2 \rfloor-1\}$. The specific generation method is as follows.

When $i \in \{0,1,\cdots, \lfloor n/2 \rfloor-2\}$, we choose first $2i$ and last $n-2(i+1)$ bits in $x$ to divide $f$, then we can get $2^{n-2}$ subfunctions $f_{{i},j}: \{0,1\}^2 \rightarrow \{0,1\}$:
\begin{eqnarray}\label{fij}
f_{{i},j}(m_i)=f(\underbrace{y_{j_{0}} y_{j_{1}}\cdots y_{j_{2i-1}}}_{2i}  m_i \underbrace{y_{j_{2i}} y_{j_{2i+1}}\cdots y_{j_{n-3}}}_{n-2(i+1)}),
\end{eqnarray}
where $m_i\in\{0,1\}^{2}$ and $y_{j_{0}} y_{j_{1}}\cdots y_{j_{2i-1}} y_{j_{2i}} y_{j_{2i+1}}\cdots y_{j_{n-3}} \in\{0,1\}^{n-2}$ is the binary representation of $j\in\{0,1,\cdots,$ $2^{n-2}-1\}$. Afterwards, we generate a new function $g_i: \{0,1\}^2 \rightarrow \{0,1\}$ in terms of these subfunctions,
\begin{eqnarray}\label{gi}
g_{i}(m_i)=\text{OR} (f_{{i},0}(m_i), f_{{i},1}(m_i), \cdots, f_{{i},2^{n-2}-1}(m_i) ),
\end{eqnarray}
where $m_i\in\{0,1\}^{2}$. Besides,
\begin{eqnarray}\label{orfunction}
\text{OR} (x) =
\begin{cases}
1,\vert x\vert \ge 1,\\
0,\vert x\vert =0,
\end{cases}
\end{eqnarray}
where $\vert x\vert $ is the Hamming weight of $x\in\{0,1\}^{n}$ (its number of 1's). It means we have acquired $\lfloor n/2 \rfloor-1$ subfunctions $g_i: \{0,1\}^2 \rightarrow \{0,1\}$, where $i \in \{0,1,\cdots, \lfloor n/2 \rfloor-2\}$.

When $i =\lfloor n/2 \rfloor-1$, we choose first $2i$ bits in $x$ to divide $f$, then we can get $2^{2i}$ subfunctions $f_{{i},j}: \{0,1\}^{n-2i} \rightarrow \{0,1\}$:
\begin{eqnarray}\label{flast}
f_{{i},j}(m_i)=f(\underbrace{y_{j_{0}} y_{j_{1}}\cdots y_{j_{2i-1}}}_{2i}  \underbrace{m_i}_{n-2i} ),
\end{eqnarray}
where $m_i\in\{0,1\}^{n-2i}$ and $y_{j_{0}} y_{j_{1}}\cdots y_{j_{2i-1}} \in\{0,1\}^{2i}$ is the binary representation of $j\in\{0,1,\cdots,$ $2^{2i}-1\}$. Afterwards, we also generate a new function $g_i: \{0,1\}^{n-2i} \rightarrow \{0,1\}$ by means of these subfunctions,
\begin{eqnarray}\label{glast}
g_{i}(m_i)=\text{OR} (f_{{i},0}(m_i), f_{{i},1}(m_i), \cdots, f_{{i},2^{2i}-1}(m_i) ),
\end{eqnarray}
where $m_i\in\{0,1\}^{n-2i}$. Specifically, when $n$ is an even number, we have
\begin{eqnarray}\label{even}
n - 2i = n- 2(\lfloor n/2 \rfloor-1)=n-2\left(\frac{n}{2} - 1\right)=2.
\end{eqnarray}
When $n$ is an odd number, we have
\begin{eqnarray}\label{odd}
n - 2i = n - 2(\lfloor n/2 \rfloor-1)=n -2\left(\frac{n-1}{2} - 1\right)=3.
\end{eqnarray}

\subsection{DEGA}
So far, we have successfully generated a total of $\lfloor n/2 \rfloor$ subfunctions $g_i (m_i)$ according to the original function $f(x)$ and $n$, where $i \in\{0,1,\cdots,$ $\lfloor n/2 \rfloor-1\}$. Additionally, we assume that obtaining the Oracle for each subfunction is simple. In other words, we can have $\lfloor n/2 \rfloor-1$ Oracles
\begin{eqnarray}\label{utaui}
U_{g_i (x)}:\vert x\rangle\rightarrow (-1)^{g_i (x)}\vert x\rangle,
\end{eqnarray}
where $x\in\{0,1\}^2$, $i \in\{0,1,\cdots,$ $\lfloor n/2 \rfloor-2\}$, and $\tau_i$ is the target string of $g_i (x)$ such that $g_i (\tau_i)=1$.

For the last subfunctions $g_i: \{0,1\}^{n-2i} \rightarrow \{0,1\}$, we need to first determine the parity of $n$. If $n$ is an even number, we can also have the Oracle as in Eq.~\eqref{utaui}, where $i =\lfloor n/2 \rfloor-1$. Otherwise, we can have another Oracle
\begin{eqnarray}\label{rtaui}
R_{g_i (x)}:\vert x\rangle\rightarrow e^{i\phi \cdot g_i (x)}\vert x\rangle,
\end{eqnarray}
where $x\in\{0,1\}^3$, $i =\lfloor n/2 \rfloor-1$, $\tau_i$ is the target string of $g_i (x)$ such that $g_i (\tau_i)=1$, $\phi=2\arcsin\left(\sin\left(\frac{\pi}{4J+6}\right) / \sin \theta\right)$, $J=\lfloor(\pi/2-\theta)/(2\theta)\rfloor$, and $\theta= \arcsin \left({\sqrt{1/2^3}}\right)$.

Next, we provide a detailed introduction to our DEGA in Algorithm \ref{algodega}. The whole quantum circuit is shown in Figure \ref{degacir}.

\begin{breakablealgorithm}
\caption{Distributed exact Grover's algorithm (includes only one target).}
\label{algodega}
\begin{algorithmic}
\STATE \textbf{Input}: (1) The number of qubits $n \ge 2$; (2) A function $f(x)$, where $f(x)=0$ for all $x\in\{0,1\}^n$ except $\tau$, for which $f(\tau)=1$; (3) $\lfloor n/2 \rfloor$ subfunctions $g_i (x)$ as in Eq.~\eqref{gi} and Eq.~\eqref{glast} generated according to $f(x)$ and $n$, where $i \in\{0,1,\cdots,$ $\lfloor n/2 \rfloor-1\}$.
\STATE \textbf{Output}: The target index string $\tau = \tau _{0}\tau_{1}\cdots\tau_{\lfloor n/2 \rfloor-1} \in\{0,1\}^n$ such that $f(\tau)=1$ with a probability of $100 \%$.
\STATE \textbf{Procedure:}
\STATE \textbf{Step 1.} Initialize $n$ qubits $\vert 0\rangle^{\otimes n}$.
\STATE \textbf{Step 2.} Apply $n$ Hadamard gates $H^{\otimes n}$.
\STATE \textbf{Step 3.} Let $i=0$.
\STATE \textbf{Step 4.} Execute Grover operator $G_i = -H^{\otimes 2} U_0 H^{\otimes 2}U_{g_i (x)}$ once, where $U_0=I_2-2(\vert 0\rangle\langle0\vert )^{\otimes 2}$ and $U_{g_i (x)}=I_2 - 2\vert \tau_i\rangle\langle\tau_i\vert $ act on the $(2i)$-th and $(2i+1)$-th qubits, and $\tau_i$ is the target string of $g_i (x)$ such that $g_i (\tau_i)=1$.
\STATE \textbf{Step 5.} Let $i=i+1$.
\STATE \textbf{Step 6.} Repeat Step 4 to Step 5 until $i=\lfloor n/2 \rfloor-1$.
\STATE \textbf{Step 7.} Determine the parity of $n$.
\STATE \textbf{Step 8.} If $n$ is an even number, execute Grover operator $G_i = -H^{\otimes 2} U_0 H^{\otimes 2}U_{g_i (x)}$ once, where $U_0=I_2-2(\vert 0\rangle\langle0\vert )^{\otimes 2}$ and $U_{g_i (x)}=I_2 - 2\vert \tau_i\rangle\langle\tau_i\vert $ act on the $(2i)$-th and $(2i+1)$-th qubits, and $\tau_i$ is the target string of $g_i (x)$ such that $g_i (\tau_i)=1$.
\STATE \textbf{Step 9.} If $n$ is an odd number, execute operator $L_i = -H^{\otimes 3} R_0 H^{\otimes 3}R_{g_i (x)}$ twice, where $R_0 = I_3 + (e^{i\phi}-1)(\vert 0\rangle\langle0\vert )^{\otimes 3}$ and $R_{g_i (x)} = I_3 + (e^{i\phi}-1)\vert \tau_i\rangle\langle\tau_i\vert $ act on the $(2i)$-th, $(2i+1)$-th, and $(2i+2)$-th qubits, and $\tau_i$ is the target string of $g_i (x)$ such that $g_i (\tau_i)=1$.
\STATE \textbf{Step 10.} Measure each qubit in the basis $\{\vert 0\rangle, \vert 1\rangle\}$.
\end{algorithmic}
\end{breakablealgorithm}

The target index string $\tau = \tau _{0}\tau_{1}\cdots\tau_{\lfloor n/2 \rfloor-1} \in\{0,1\}^n$ will be successfully searched with a probability of $100 \%$ after following the above algorithm steps.

Therefore, the entire process of our DEGA is accomplished.

\begin{figure}[H]
\centering
\includegraphics[width=6.4in]{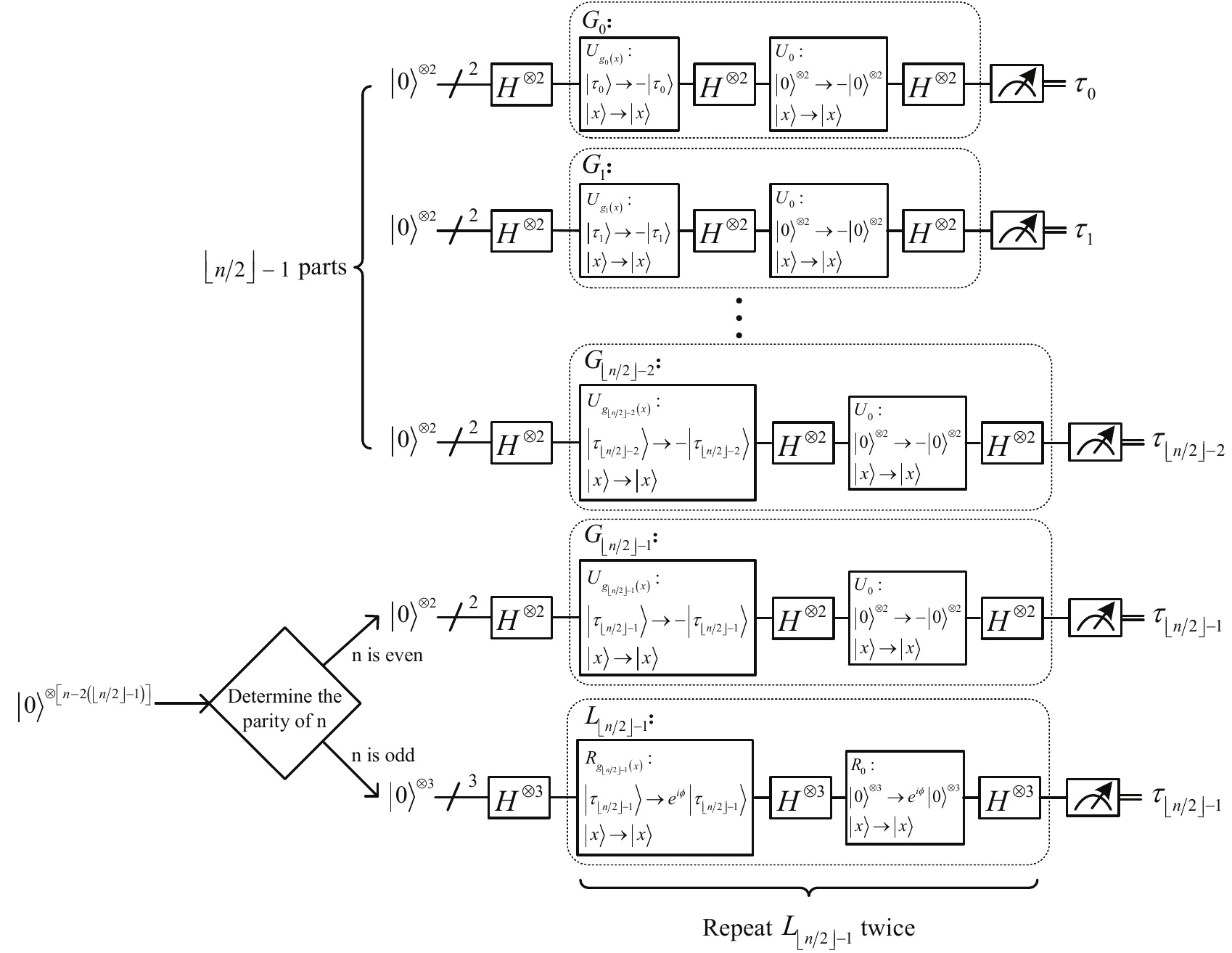}
\caption{\label{degacir} Quantum circuit of DEGA.}
\end{figure}

\section{Analysis}\label{analysis}
In this section, the analysis of our DBVA and DEGA is presented.



\subsection{The degree of $f_s (x)$}\label{degree}
We demonstrate the degree of $f_s (x)$ as in Eq.~\eqref{bvfdef} by the following theorem.

\begin{theorem}
The degree $f_s (x)$ is $\vert s\vert $, where $\vert s\vert $ means the Hamming weight of $s$ (its number of 1's).
\label{degreeth}
\end{theorem}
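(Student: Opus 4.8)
The plan is to read off the degree from the unique multilinear polynomial representation of $f_s$ over the reals, which is the standard meaning of the degree of a Boolean function (note the $\mathbb{F}_2$-degree is merely $1$ for $s\neq 0$, so it cannot be what is intended). First I would set $S=\{i:s_i=1\}$, so that $f_s(x)=\bigoplus_{i\in S}x_i$ is the parity of the coordinates indexed by $S$ and $|S|=|s|$. The theorem then amounts to showing that the real multilinear polynomial computing the parity of $|s|$ bits has degree exactly $|s|$.

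The key step is to exhibit that polynomial explicitly. Using the identity $(-1)^{x_i}=1-2x_i$, valid for $x_i\in\{0,1\}$, I would note that $(-1)^{f_s(x)}=\prod_{i\in S}(1-2x_i)$ and hence
\[
f_s(x)=\frac{1-\prod_{i\in S}(1-2x_i)}{2}.
\]
Since every $x_i\in\{0,1\}$ satisfies $x_i^2=x_i$, the right-hand side is already multilinear, and by the uniqueness of the multilinear extension on $\{0,1\}^n$ it is precisely the representation whose degree is sought.

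Finally I would expand $\prod_{i\in S}(1-2x_i)=\sum_{T\subseteq S}(-2)^{|T|}\prod_{i\in T}x_i$. Each subset $T\subseteq S$ yields a distinct monomial with coefficient $(-2)^{|T|}$, so no cancellation can occur; in particular the single monomial of top degree is $\prod_{i\in S}x_i$, whose coefficient in $f_s$ is $(-1)^{|S|+1}2^{|S|-1}\neq 0$ whenever $|S|\ge 1$. Hence $\deg f_s=|S|=|s|$, and the degenerate case $s=0$ gives $f_s\equiv 0$ of degree $0=|s|$, completing the argument. The only real subtlety — and thus the main thing to get right — is to confirm that the intended notion of degree is the real multilinear one and that the top-degree term survives; both are settled by the uniqueness of the multilinear extension together with the fact that $T=S$ is the only subset contributing degree $|S|$, so there is nothing to cancel it.
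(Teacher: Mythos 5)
Your proof is correct, and it takes a more self-contained route than the paper's. The paper makes the same initial reduction you do --- observing that $f_s$ is the parity of the coordinates in $S=\{i:s_i=1\}$ --- but then finishes by citing the known result of Beals et al.\ that $\mathrm{PARITY}_n$ has degree $n$, and concludes ``intuitively'' that $\deg f_s=\vert S\vert=\vert s\vert$. You instead prove that fact from scratch: the identity $(-1)^{x_i}=1-2x_i$ yields the explicit multilinear representation $f_s(x)=\bigl(1-\prod_{i\in S}(1-2x_i)\bigr)/2$, and expanding the product shows the unique top-degree monomial $\prod_{i\in S}x_i$ has coefficient $(-1)^{\vert S\vert+1}2^{\vert S\vert-1}\neq 0$, so by uniqueness of the multilinear extension on $\{0,1\}^n$ the degree is exactly $\vert S\vert$. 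Your version buys two things the paper glosses over: it pins down that the relevant notion is the real multilinear degree rather than the $\mathbb{F}_2$-degree (which would be $1$ for every $s\neq 0$) --- note that the paper's displayed ``multilinear polynomial'' $p_s(x)=\bigl(\sum_{i\in S}x_i\bigr)\bmod 2$ is not literally a real polynomial, so the paper's argument really does lean on the external citation --- and it handles the degenerate case $s=0$ explicitly. The paper's route is shorter given the cited lemma; yours is rigorous and self-contained, essentially inlining the standard proof of the cited fact.
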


\begin{proof}
For the function considering in BV problem:
\begin{eqnarray}\label{bvfdefdeg}
f_s (x)=\langle s\cdot x\rangle = \left(\sum_{i=0}^{n-1}s_i\cdot x_i\right)~\text{mod}~2,
\end{eqnarray}
where $x\in\{0,1\}^n$. Once the string $s = s_0s_1s_2\cdots s_{n-1}\in\{0,1\}^n$ is determined, then the multilinear polynomial of $f_s (x)$ can be expressed as
\begin{eqnarray}\label{multilinear}
p_s(x)= \left(\sum_{i\in\{j\vert  s_j=1\}} s_i\cdot x_i\right)~\text{mod}~2= \left(\sum_{i\in\{j\vert  s_j=1\}} x_i\right)~\text{mod}~2,
\end{eqnarray}
where $x\in\{0,1\}^n$.  Consider the following function,
\begin{eqnarray}\label{parity}
\text{PARITY}_n(z) = \left(\sum_{i=0}^{n-1} z_i\right)~\text{mod}~2,
\end{eqnarray}
where $z = z_0z_1z_2\cdots z_{n-1}\in\{0,1\}^n$. Furthermore, the degree of $\text{PARITY}_n(z)$ has been proved to be $n$ \cite{RefBeals1998}. Intuitively, the degree of $f_s (x)$ is $\vert \{j\vert  s_j=1\}\vert =\vert s\vert $. Note that $\vert s\vert $ means the Hamming weight of $s$ (its number of 1's).

Therefore, the degree $f_s (x)$ is $\vert s\vert $.
\end{proof}

\subsection{Correctness of DBVA}\label{correctnessdbva}
In this subsection, we focus on verifying the correctness of our DBVA. We demonstrate the correctness of Algorithm \ref{algo1} by the following Theorem \ref{tm1correctnessdbva}.

\begin{theorem}
(\textbf{Correctness of DBVA})
For an $n$-qubit BV problem, suppose that the hidden string $s\in\{0,1\}^n$  be expressed as $s=s_{n_0}s_{n_1}\cdots s_{n_{t-1}}\in\{0,1\}^{n}$, where $s_{n_j}\in\{0,1\}^{n_j}$ and $\sum_{j=0}^{t-1} n_j =n$. We can obtain $s_{j}$ due to Algorithm \ref{algo1}, where $j\in\{0,1,\cdots, t-1\}$.
\label{tm1correctnessdbva}
\end{theorem}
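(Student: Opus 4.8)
The plan is to show that each computing node $j$ independently recovers its block $s_{n_j}$ exactly, and that concatenating these blocks over $j=0,1,\dots,t-1$ reconstructs the full hidden string $s$. I would split the argument into two parts: (i) verifying that the subfunction handed to node $j$ really is the restricted inner product $\langle s_{n_j}\cdot m_j\rangle$, and (ii) showing that the single-query BV evolution of Algorithm~\ref{algo1}, applied to this subfunction on $n_j$ qubits, returns $s_{n_j}$ with probability $1$. Part (ii) is essentially the correctness of the original BV algorithm replayed on a smaller register, so the content lies mostly in part (i).

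For part (i), I would substitute the partitions defined in Eqs.~\eqref{tnodesfn00}--\eqref{tnodesf0nt-1} into the linear function $f_s(x)=\left(\sum_{i=0}^{n-1}s_i\cdot x_i\right)\bmod 2$. Each subfunction is obtained by fixing every bit of $x$ outside the $j$-th block to $0$; since $f_s$ is an inner product mod $2$, every term $s_i\cdot x_i$ with $i$ outside that block contributes $0$, so the sum collapses to a sum over only the $n_j$ free positions. This yields exactly $f_{s_{n_j}}(m_j)=\langle s_{n_j}\cdot m_j\rangle:\{0,1\}^{n_j}\to\{0,1\}$, matching the Oracle $O_{f_{s_{n_j}}(m_j)}$ of Eq.~\eqref{oraclefsmj}. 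The key observation is that linearity makes the restriction clean: no cross terms survive, regardless of whether the block sits at the front, the middle (with both leading and trailing zeros), or the end of $x$.

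For part (ii), I would trace the register of node $j$ through Steps~2--6 of Algorithm~\ref{algo1}, paralleling Eq.~\eqref{state_BV} but on $n_j$ qubits: the Hadamard layer produces the uniform superposition $\frac{1}{\sqrt{2^{n_j}}}\sum_{m_j}\vert m_j\rangle$, the Oracle imprints the phase $(-1)^{\langle s_{n_j}\cdot m_j\rangle}$, and the second Hadamard layer uses the identity $H^{\otimes n_j}\frac{1}{\sqrt{2^{n_j}}}\sum_{m_j}(-1)^{\langle s_{n_j}\cdot m_j\rangle}\vert m_j\rangle=\vert s_{n_j}\rangle$ to collapse the register onto $\vert s_{n_j}\rangle$, so the Step~7 measurement returns $s_{n_j}$ deterministically; iterating and concatenating gives $s=s_{n_0}s_{n_1}\cdots s_{n_{t-1}}$. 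The step most worth care is part (i): one must check that the three partition cases all reduce to the same inner product and that the block decomposition of $s$ is aligned consistently with those bit positions, since a misaligned index range would recover the wrong substring even though the per-node evolution is exact.
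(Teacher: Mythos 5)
Your proposal is correct: the paper itself dismisses this proof with the single line ``It is straightforward,'' and your two-part argument --- (i) that setting all bits outside the $j$-th block to $0$ in the linear function $f_s(x)=\left(\sum_{i=0}^{n-1}s_i\cdot x_i\right)\bmod 2$ collapses it to $\langle s_{n_j}\cdot m_j\rangle$ in all three partition cases of Eqs.~\eqref{tnodesfn00}--\eqref{tnodesf0nt-1}, and (ii) that the $n_j$-qubit BV evolution then yields $\vert s_{n_j}\rangle$ deterministically as in Eq.~\eqref{state_BV} --- is exactly the filling-in the paper leaves implicit. No gaps; if anything, your alignment check in part (i) makes explicit the one point (consistent block positioning of $s_{n_j}$ with the zeroed coordinates) that the paper's omitted proof glosses over entirely.
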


\begin{proof}
It is straightforward.
\end{proof}

\subsection{Selection of subfunctions}\label{selection of subfunctions}
In the previous section, we introduced our partition strategy of the original function. In this subsection, we analyze the selection of subfunctions at the 0-th computing node in DBVA, and similar analysis results can be obtained at other computing nodes.

We choose last $k=n_1+n_2+\cdots+n_{t-1}$ bits in $x$ to divide $f$, then we can get $2^{k}$ subfunctions $f_p: \{0,1\}^{n-k} \rightarrow \{0,1\}$:
\begin{eqnarray}
f_p(w)&=&f(\underbrace{w_0w_1\cdots w_{n-k-1}}_{n-k} \underbrace{y_{p_{0}} y_{p_{1}}\cdots y_{p_{k-1}}}_{k}),
\end{eqnarray}
where $w=w_0w_1\cdots w_{n-k-1}\in\{0,1\}^{n-k}$, and $y_{p_{0}} y_{p_{1}}\cdots y_{p_{k-1}}\in\{0,1\}^{k}$ is the binary representation of $p\in\{0,1,\cdots,2^k-1\}$. We keep the subfunction when $p=0$, and denote it as
\begin{eqnarray}
f_{{s_{n_0}}}(m_0) = f(m_0\underbrace{00\cdots00}_{n_1+n_2+\cdots+n_{t-1}}),
\end{eqnarray}
where $m_0\in\{0,1\}^{n_0}$. 

In fact, no matter which subfunction is selected, the final result will not be affected. 
For any $p\in\{0,1,\cdots,2^k-1\}$, the subfunction $f_p(w)$ will be determined accordingly, the final substring is always $s_0s_1\dots s_{n-k} \in \{0,1\}^{n-k}$.

%

The similar analysis results can be obtained at other computing nodes.

\subsection{The depth of circuir}\label{circuitdepthdbva}
In order to compare the circuit depth, we give the following definition.

\begin{definition} \label{defdepth}
(\textbf{Depth of circuit}) The depth of circuit is defined as the longest path from the input to the output, moving forward in time along qubit wires.
\end{definition}

The depth of the circuit on the left, for example, is 1, whereas the right is 11. (see Figure \ref{depthexample})

\begin{figure}[H]
\centering
\includegraphics[width=4in]{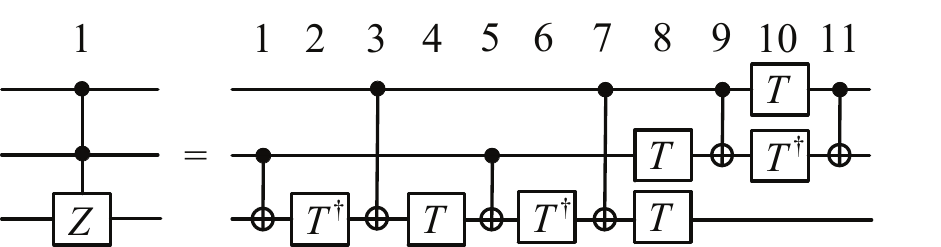}
\caption{\label{depthexample} Equivalent quantum circuit of Toffoli gate (or $C^2 Z$ gate), where $T$ represent $\pi / 8$ gate and $T^{\dagger}$ represents the conjugate transpose of $T$.}
\end{figure}


In the process of building the circuit of  BV algorithm, we will run into an issue of how to construct the Oracle required by the algorithm. Due to the algorithm for constructing quantum circuits to realizing Oracle by Qiu et al. \cite{RefQiu2022}, we here have the construction for the Oracles immediately as follows.

Given a function $f: \{0,1\}^n \rightarrow \{0,1\}$. Suppose there are $k$ inputs $x^{(i)}=(x_{0}x_{1}\cdots x_{n-1})^{(i)}$ satisfying $f(x^{(i)})=1$, where $i\in\{0,1,\cdots,k-1\}$. The Oracle construction is as follows:

\noindent \textbf{Step 1.} Let $i=0$.

\noindent \textbf{Step 2.} For $x^{(i)}$, execute Pauli $X$ gate on $j$-th qubit if $(x_{j})^{(i)}=0$, otherwise execute identity gate $I$, where $j\in\{0,1,\cdots,n-1\}$.

\noindent \textbf{Step 3.} Apply $C^{n-1}Z$ gate once.

\noindent \textbf{Step 4.} Execute the same operation as in Step 2.

\noindent \textbf{Step 5.} Let $i=i+1$.

\noindent \textbf{Step 6.} Repeat Step 2 to Step 5 until $i=k$.

\noindent \textbf{Step 7.} Summarize all quantum gates in the above steps.

Following the above steps, we can easily construct the quantum circuit corresponding to Oracle $O_f$.

For example, a Boolean function $f: \{0,1\}^3 \rightarrow \{0,1\}$ is defined as follows:
\begin{equation}\label{example}
f(x)=\begin{cases}1,x = 010, 101,\\
0,\text{otherwise}.
\end{cases}
\end{equation}
According to the above method, its corresponding Oracle $O_f$ quantum circuit is shown below (see Figure \ref{figure2}).

\begin{figure}[H]
\centering
\includegraphics[width=2.5in]{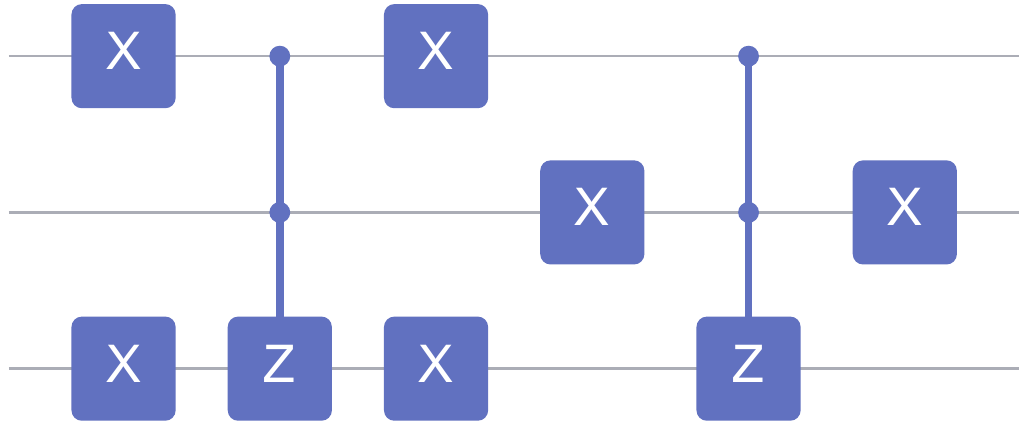}
\caption{\label{figure2} Quantum circuit corresponding to Oracle $O_f$ in the example.}
\end{figure}


Once we figure out how to build the circuit realizing Oracle, we can get the circuit depth of BV algorithm.  We need the following lemma.

\begin{lemma} \label{targetnum}
For any $s = s_0s_1s_2\dots s_{n-1}\in\{0,1\}^n/0^n$, we have $\vert \{x\vert  f_s (x)=1\}\vert =2^{n-1}$, where $f_s (x)$ as in Eq.~\eqref{bvfdef}.
\end{lemma}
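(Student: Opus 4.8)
The plan is to exploit the fact that $f_s(x)$ is a parity function supported on the nonzero coordinates of $s$, and to count its solutions by a fixed-point-free involution. First I would observe that $f_s(x) = \left(\sum_{i=0}^{n-1} s_i x_i\right)\bmod 2$ as in Eq.~\eqref{bvfdef} depends only on those coordinates $x_i$ for which $s_i = 1$. Since the hypothesis $s \in \{0,1\}^n \setminus \{0^n\}$ guarantees $s \neq 0^n$, there exists at least one index $\ell$ with $s_\ell = 1$; this pivot coordinate is what drives the entire argument, and it is precisely why the all-zeros string must be excluded.

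Next I would define the map $\sigma : \{0,1\}^n \to \{0,1\}^n$ that flips the $\ell$-th bit of its argument and leaves every other bit unchanged. This $\sigma$ is an involution with no fixed points, so it partitions $\{0,1\}^n$ into $2^{n-1}$ disjoint pairs $\{x, \sigma(x)\}$. The key computation is that flipping the $\ell$-th bit changes $\sum_{i} s_i x_i$ by exactly $s_\ell = 1$ and hence flips its parity, so that $f_s(\sigma(x)) = f_s(x)\oplus 1$. Consequently each pair $\{x, \sigma(x)\}$ contributes exactly one string on which $f_s$ equals $1$, and summing over the $2^{n-1}$ pairs yields $\vert\{x : f_s(x)=1\}\vert = 2^{n-1}$.

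There is essentially no hard step here; the only point requiring care is the appeal to the nonzero hypothesis to guarantee that the pivot bit $\ell$ exists (for $s = 0^n$ the count would be $0$ rather than $2^{n-1}$, which is why that case is removed). As a self-contained alternative that bypasses the combinatorial bookkeeping, I would instead evaluate the character sum $\sum_{x \in \{0,1\}^n} (-1)^{f_s(x)} = \prod_{i=0}^{n-1}\bigl(1 + (-1)^{s_i}\bigr)$, which vanishes as soon as some $s_i = 1$; since this sum also equals $\vert\{x : f_s(x)=0\}\vert - \vert\{x : f_s(x)=1\}\vert$ while the two cardinalities add to $2^n$, each must equal $2^{n-1}$. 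Either route settles the lemma immediately.
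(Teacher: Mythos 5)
Your proof is correct, and it takes a genuinely different route from the paper's. The paper proceeds by rearranging $s$ (without loss of generality) so that its $k \geq 1$ ones sit in the leftmost positions, observing that $f_s$ then reduces to the parity function $\mathrm{PARITY}_k$ on the first $k$ bits, asserting that parity is balanced so that $2^{k-1}$ of the $k$-bit prefixes satisfy it, and finally multiplying by the $2^{n-k}$ free choices of the remaining bits to obtain $2^{k-1}\cdot 2^{n-k} = 2^{n-1}$. Your involution argument — pairing $x$ with $\sigma(x)$ obtained by flipping a single pivot bit $\ell$ with $s_\ell = 1$, and noting $f_s(\sigma(x)) = f_s(x)\oplus 1$ — is more self-contained: it establishes balancedness directly in one stroke, whereas the paper's step ``$\mathrm{PARITY}_k$ is balanced'' is exactly the same fact one level down and is left as obvious (indeed, your bijection is the natural proof of it). Your character-sum alternative, $\sum_{x}(-1)^{f_s(x)} = \prod_{i=0}^{n-1}\bigl(1+(-1)^{s_i}\bigr) = 0$ whenever $s \neq 0^n$, is also sound and generalizes most gracefully (it is the standard orthogonality of characters of $\mathbb{Z}_2^n$, applicable to any nontrivial linear form). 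What the paper's version buys is a concrete structural picture — the split of $x$ into $k$ ``active'' and $n-k$ ``inert'' coordinates — which it reuses in the surrounding discussion of Oracle construction; what yours buys is avoiding the WLOG rearrangement and the implicit appeal to balancedness of parity, at no cost in length. You also correctly isolate where the hypothesis $s \neq 0^n$ enters (existence of the pivot, or the vanishing of a factor), which the paper handles only implicitly through $k \neq 0$.
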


\begin{proof}
First of all, $f_s (x)=\langle s\cdot x\rangle= \left(\sum_{i=0}^{n-1}s_i\cdot x_i\right)~\text{mod}~2: \{0,1\}^{n} \rightarrow \{0,1\}$, where $s\in\{0,1\}^n$. Suppose $k\neq0$ bits in $s$ are 1, and the rest are 0. Without loss of generality, for the purpose of analysis, $s$ can be rearranged so that the bits with the value of 1 are all on the left, as shown below
\begin{eqnarray}
s &=& \underbrace{s_{i_0} s_{i_1} \cdots s_{i_{k-1}}}_{k} \underbrace{s_{j_0} s_{j_1} \cdots s_{j_{n-k-1}}}_{n-k} \nonumber \\
&=& \underbrace{11\cdots11}_{k} \underbrace{00\cdots00}_{n-k}.
\end{eqnarray}
Then we have
\begin{eqnarray}
f_s (x) &=& \left(\sum_{i=0}^{k-1} 1 \cdot x_i + \sum_{i=k}^{n-1} 0 \cdot x_i \right)~\text{mod}~2 \nonumber \\
        &=& \left(\sum_{i=0}^{k-1} x_i \right)~\text{mod}~2. \label{kparity}
\end{eqnarray}
It means that once we determine the first $k$ bits in $x\in\{0,1\}^n$, no matter what the value of the last $n-k$ bits in $x$ is, it will not affect the final function value. Therefore, we can regard Eq.~\eqref{kparity} as a parity function: $\{0,1\}^{k} \rightarrow \{0,1\}$:
\begin{eqnarray}
\text{PARITY}_k(z) = \left(\sum_{i=0}^{k-1} z_i\right)~\text{mod}~2.
\end{eqnarray}
Obviously, since $\text{PARITY}_k(z)$ is a balanced function (half of the inputs will output 0, and the other will output 1), there are $2^{k-1}$ strings such that $\text{PARITY}_k(z)=1$. Finally, we add the remaining $n-k$ bits, which means there are $2^{k-1}\cdot 2^{n-k}=2^{n-1}$ strings making $f_s (x)=1$. In other words, we have $\vert \{x\vert  f_s (x)=1\}\vert =2^{n-1}$.
\end{proof}

We can build a complete quantum circuit of  BV algorithm by the Oracle construction method introduced above. According to Lemma \ref{targetnum}, there are $2^{n-1}$ target states that need to flip the phase. Furthermore, we should execute a quantum circuit with a circuit depth of 3 to realize phase flipping of a target state. In addition, $H^{\otimes n}$ needs to be executed twice. So the circuit depth of the BV algorithm satisfies
\begin{eqnarray}
dep(\text{BV}) &&=
\begin{cases}
3 \cdot 2^{n-1}+2 , & \text{target states do not contain}~\vert 1\rangle^{\otimes n}, \nonumber \\
3 \cdot 2^{n-1}-2+2, & \text{target states contain}~\vert 1\rangle^{\otimes n},
\end{cases}\\
&&=
\begin{cases}
3 \cdot 2^{n-1}+2 , & \text{target states do not contain}~\vert 1\rangle^{\otimes n}, \nonumber \\
3 \cdot 2^{n-1}, & \text{target states contain}~\vert 1\rangle^{\otimes n},
\end{cases}\\
&&\leq 3 \cdot 2^{n-1}+2 \label{BVbeforeoptidepth}
\end{eqnarray}
When the target states contain $\vert 1\rangle^{\otimes n}$, it is not necessary to execute Step 2 and Step 4 of the Oracle construction circuit during flipping the phase of $\vert 1\rangle^{\otimes n}$. So the circuit depth can be reduced by 2.

Additionally, since $X$ gate can be executed in parallel on different qubits and $X^2=I$, the circuit can be further optimized. For example, the circuit depth of Figure \ref{figure2} is 6 before optimization and 5 after optimization. Thus, the circuit depth of  BV algorithm after optimization satisfies
\begin{eqnarray}
dep(\text{BV, optimized}) &&=
\begin{cases}
2 \cdot 2^{n-1} +1+2 , & \text{target states do not contain}~\vert 1\rangle^{\otimes n},\nonumber \\
2 \cdot 2^{n-1} +1-1+2, & \text{target states contain}~\vert 1\rangle^{\otimes n},
\end{cases}\\&&=
\begin{cases}
2^n +3 , & \text{target states do not contain}~\vert 1\rangle^{\otimes n},\nonumber \\
2^n +2, & \text{target states contain}~\vert 1\rangle^{\otimes n},
\end{cases}\\
&&\leq 2^n +3. \label{BVafterdepth}
\end{eqnarray}

Similarly, we can get the circuit depth of DBVA by the following theorem.

\begin{theorem}\label{DBVAdepth}
The circuit depth of Algorithm \ref{algo1} is not greater than $2^{\text{max}(n_0, n_1, \cdots, n_{t-1})}+3$.
\end{theorem}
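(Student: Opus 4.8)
The plan is to reduce the depth of the distributed circuit to the depth of a single node's sub-circuit and then invoke the optimized Bernstein--Vazirani depth bound already established in Eq.~\eqref{BVafterdepth}.

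First I would observe that in Algorithm \ref{algo1} each computing node $j$ executes, on its own register of $n_j$ qubits, precisely an $n_j$-qubit BV algorithm for the subfunction $f_{s_{n_j}}(m_j)=\langle s_{n_j}\cdot m_j\rangle$: Steps 2--3 prepare the uniform superposition via $H^{\otimes n_j}$, Step 5 queries the Oracle $O_{f_{s_{n_j}}(m_j)}$ once, and Step 6 reapplies $H^{\otimes n_j}$. Since this subfunction is itself a Bernstein--Vazirani function on $n_j$ bits, the Oracle-construction method and the counting Lemma \ref{targetnum} apply to it verbatim, so the (optimized) depth $dep_j$ of node $j$ is bounded exactly as in Eq.~\eqref{BVafterdepth}, namely $dep_j \le 2^{n_j}+3$. (In the degenerate case $s_{n_j}=0^{n_j}$ the Oracle is the identity, so the node contributes no phase-flip layer and has depth merely $2$, which is still below the bound.)

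Next I would argue that the depth of the whole DBVA circuit is the \emph{maximum}, not the sum, of the per-node depths. The essential point is that the $t$ nodes act on pairwise-disjoint qubit registers and DBVA contains no gate coupling two different nodes; the sequential ``$j=j+1$'' bookkeeping in Steps 8--9 merely indexes which classical sub-problem is being solved and does not force the sub-circuits onto a shared time axis. Under Definition \ref{defdepth}, the depth is the longest directed path along qubit wires, and since no wire crosses between nodes, every such path lies entirely inside a single node. Hence $dep(\text{DBVA})=\max_{0\le j\le t-1} dep_j$. Combining this with the per-node bound gives $dep(\text{DBVA}) \le \max_{j}\,(2^{n_j}+3)$, and because $2^{m}+3$ is strictly increasing in $m$, the maximum is attained at $m=\max(n_0,n_1,\dots,n_{t-1})$, yielding $dep(\text{DBVA}) \le 2^{\max(n_0,n_1,\dots,n_{t-1})}+3$, which is Theorem \ref{DBVAdepth}.

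The step I expect to be the crux is the second one: making precise that the distributed model lets the node circuits run in parallel so that depth aggregates as a maximum rather than a sum. This is really a modeling claim about what ``distributed'' means here---disjoint registers and no inter-node entangling gates---and it is where the advertised improvement from $2^{n}+3$ to $2^{\max_j n_j}+3$ originates. The per-node bound in the first step, by contrast, is an immediate transcription of the single-processor BV analysis already carried out, so the burden of the argument rests on justifying the parallel (rather than serial) accounting of circuit depth.
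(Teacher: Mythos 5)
Your proposal is correct and takes essentially the same route as the paper: the paper's proof likewise observes that each node runs a small-scale $n_j$-qubit BV circuit whose optimized depth is bounded as in Eq.~\eqref{BVafterdepth}, and that only the node with the largest register matters, i.e.\ the depth aggregates as a maximum over the disjoint per-node registers. You merely spell out (more carefully than the paper) the parallel-accounting step and the degenerate case $s_{n_j}=0^{n_j}$, both of which are consistent with the paper's argument.
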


\begin{proof}
Since each computing node will execute a quantum circuit of small-scale BV algorithm, we just need to consider the computing node with the largest number when calculating the circuit depth. Therefore, we can know that the circuit depth of DBVA after optimization satisfies
\begin{eqnarray}
dep(\text{DBVA, optimized}) \leq 2^{\text{max}(n_0, n_1, \cdots, n_{t-1})}+3.
\end{eqnarray}
When the target states do not contain $\vert 1\rangle^{\otimes \text{max}(n_0, n_1, \cdots, n_{t-1})}$, the equal sign is true. When the target states contain $\vert 1\rangle^{\otimes \text{max}(n_0, n_1, \cdots, n_{t-1})}$, the circuit depth can be $2^{\text{max}(n_0, n_1, \cdots, n_{t-1})}+2$.
\end{proof}

\subsection{Correctness of DEGA}\label{correctnessdega}
To verify the correctness of our DEGA, it is equivalent to proving that we can obtain the target index string $\tau \in\{0,1\}^n$ by Algorithm \ref{algodega} with a probability of $100 \%$. Firstly, we give the following theorem.

\begin{theorem}\label{proofsubfunction}
Each subfunction $g_i (x)$ (as in Eq.~\eqref{gi} and Eq.~\eqref{glast}) has its corresponding target string $\tau_i$ such that $g_i (\tau_i)=1$, which satisfies
\begin{eqnarray}\label{targetstring}
\tau = \tau _{0}\tau_{1}\cdots\tau_{\lfloor n/2 \rfloor-1},
\end{eqnarray}
where $i \in\{0,1,\cdots,$ $\lfloor n/2 \rfloor-1\}$ and $\tau\in\{0,1\}^n$ is the target string of the original function $f(x)$.
\end{theorem}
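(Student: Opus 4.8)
The plan is to decompose the unique original target $\tau = \tau_0 \tau_1 \cdots \tau_{\lfloor n/2 \rfloor - 1} \in \{0,1\}^n$ according to the very windowing that defines the subfunctions: for $i \in \{0, 1, \ldots, \lfloor n/2 \rfloor - 2\}$ let $\tau_i \in \{0,1\}^2$ denote the pair of bits of $\tau$ sitting at positions $2i$ and $2i+1$, and let $\tau_{\lfloor n/2 \rfloor - 1}$ denote the terminal block of $n - 2(\lfloor n/2 \rfloor - 1)$ bits, which is $2$ bits when $n$ is even and $3$ bits when $n$ is odd by Eq.~\eqref{even} and Eq.~\eqref{odd}. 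I would then prove that each such $\tau_i$ is the \emph{unique} target of $g_i$; establishing uniqueness for every block simultaneously yields the existence of the per-node targets and the concatenation identity Eq.~\eqref{targetstring}.

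First I would verify $g_i(\tau_i) = 1$. By Eq.~\eqref{gi}, $g_i(\tau_i) = \text{OR}(f_{i,0}(\tau_i), \ldots, f_{i,2^{n-2}-1}(\tau_i))$, and by Eq.~\eqref{fij} each term $f_{i,j}(\tau_i)$ is $f$ evaluated on the full string obtained by inserting $\tau_i$ into the window while filling the remaining $n-2$ bits with the binary representation of $j$. Since $f$ vanishes everywhere except at $\tau$, there is exactly one index $j^\ast$ whose binary string matches the bits of $\tau$ outside the window; for this $j^\ast$ the argument equals $\tau$, so $f_{i,j^\ast}(\tau_i) = f(\tau) = 1$ and the OR is $1$.

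Next I would show $g_i(m_i) = 0$ for every $m_i \neq \tau_i$. For any such $m_i$ and any $j$, the string fed to $f$ in Eq.~\eqref{fij} agrees with $m_i$ on the window positions $\{2i, 2i+1\}$, which differ from the corresponding bits $\tau_i$ of the unique target; hence that string is never $\tau$, so $f_{i,j}(m_i) = 0$ for all $j$ and the OR is $0$. Combining the two steps, $\tau_i$ is the unique string with $g_i(\tau_i) = 1$. The identical argument handles the terminal subfunction $i = \lfloor n/2 \rfloor - 1$ via Eq.~\eqref{flast} and Eq.~\eqref{glast}, with the window being the final $2$ or $3$ bits. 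Reading the windows off in order then recovers $\tau = \tau_0 \tau_1 \cdots \tau_{\lfloor n/2 \rfloor - 1}$, which is Eq.~\eqref{targetstring}.

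The argument is conceptually immediate; the only point demanding care is the bookkeeping of bit positions, namely confirming that the window of $g_i$ genuinely occupies positions $2i, 2i+1$ and that these windows tile $\{0, 1, \ldots, n-1\}$ without overlap or gap once the terminal $2$-or-$3$-bit block is appended (positions $0,1 \mid 2,3 \mid \cdots \mid n-2,n-1$ for even $n$, ending in $n-3,n-2,n-1$ for odd $n$). This ensures the concatenation of the per-node targets reproduces $\tau$ exactly rather than a permutation of its bits. I expect this index accounting, together with the even/odd case split for the last block, to be the main — though modest — obstacle.
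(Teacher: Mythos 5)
Your proof is correct, and it supplies precisely the argument the paper leaves out: the paper's own ``proof'' of Theorem~\ref{proofsubfunction} is the single sentence that the method is simple and the details are omitted. Your argument --- decomposing $\tau$ along the windows $\{2i,2i+1\}$ (with the terminal $2$- or $3$-bit block per Eq.~\eqref{even} and Eq.~\eqref{odd}), noting that uniqueness of the original target forces $f_{i,j}(m_i)=1$ exactly when $m_i$ matches $\tau$ on the window and $j=j^{\ast}$ matches it outside, and checking that the windows tile $\{0,1,\ldots,n-1\}$ without gap or overlap --- is the natural intended proof, and it is complete; indeed it establishes not just $g_i(\tau_i)=1$ but the uniqueness of each $\tau_i$, which the algorithm's single-target Grover and Long iterations in Steps 4, 8, and 9 actually require.
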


\begin{proof} The method of proof is simple, so we omit the details.
\end{proof}

So far, we have proved that the target string corresponding to each subfunction $g_i (m_i)$ is $\tau_i$, which satisfies $\tau = \tau _{0}\tau_{1}\cdots\tau_{\lfloor n/2 \rfloor-1}$, where $i \in\{0,1,\cdots,$ $\lfloor n/2 \rfloor-1\}$.



Next, we demonstrate the correctness of Algorithm \ref{algodega} by the following Theorem \ref{proofcorrectness}.

\begin{theorem}\label{proofcorrectness}
(\textbf{Correctness of DEGA})
For an $n$-qubit search problem that includes only one target index string, suppose that the goal can be expressed as $\tau = x_{0} x_{1} \cdots x_{n-1}\in\{0,1\}^n$ of the original function $f(x)$. Then we can obtain $\tau = \tau _{0}\tau_{1}\cdots\tau_{\lfloor n/2 \rfloor-1}$ by Algorithm \ref{algodega} with a probability of $100 \%$, where $\tau_i$ is the target string of $g_i (x)$ and satisfies
\begin{eqnarray}
\tau_i =
\begin{cases}
x_{2i} x_{2i+1},          &i \in\{0,1,\cdots,$ $\lfloor n/2 \rfloor-2\},\\
x_{2i} x_{2i+1} \cdots x_{n-1}=
\begin{cases}
x_{2i} x_{2i+1} = x_{n-2} x_{n-1},          &n~\text{is even},\\
x_{2i} x_{2i+1}x_{2i+2} = x_{n-3} x_{n-2} x_{n-1}, &n~\text{is odd},
\end{cases}, &i =\lfloor n/2 \rfloor-1.
\end{cases}
\end{eqnarray}
\end{theorem}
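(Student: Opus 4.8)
The plan is to reduce correctness of the full $n$-qubit algorithm to the exact correctness of a single small-scale search on each disjoint block of qubits, and then to recombine the blocks by a tensor-product argument. First I would invoke Theorem \ref{proofsubfunction} to fix, for each $i\in\{0,1,\cdots,\lfloor n/2\rfloor-1\}$, the unique target string $\tau_i$ of the subfunction $g_i$ and to identify $\tau_i$ with the corresponding bits $x_{2i}x_{2i+1}$ of $\tau$ (respectively $x_{2i}x_{2i+1}x_{2i+2}$ for the last block when $n$ is odd). The point to pin down here is uniqueness of the per-block target: since $f$ has a single solution $\tau$, the OR-aggregated subfunction $g_i$ of Eq.~\eqref{gi} and Eq.~\eqref{glast} evaluates to $1$ on a block value $m_i$ if and only if $m_i$ agrees with the corresponding bits of $\tau$, so each $g_i$ has exactly one target $\tau_i$.

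Next I would observe that Algorithm \ref{algodega} partitions the $n$ wires into disjoint blocks $B_i$, with $B_i=\{2i,2i+1\}$ for $i\le\lfloor n/2\rfloor-2$ and the last block being $\{2i,2i+1\}$ when $n$ is even or $\{2i,2i+1,2i+2\}$ when $n$ is odd. Because the operators $G_i$ (and $L_i$ in the odd case) act nontrivially only on $B_i$ and as the identity elsewhere, operators attached to different blocks act on disjoint wires and therefore commute. Since the post-Hadamard state $H^{\otimes n}\vert 0\rangle^{\otimes n}$ factorizes as $\bigotimes_i\bigl(H^{\otimes\vert B_i\vert}\vert 0\rangle^{\otimes\vert B_i\vert}\bigr)$, the whole evolution decouples into independent evolutions on each block, so it suffices to show that each block is driven exactly onto $\vert\tau_i\rangle$.

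For a two-qubit block I would appeal to the exactness of Grover search on a size-four database: after $H^{\otimes 2}$ the rotation angle satisfies $\sin\theta=\sqrt{1/4}=1/2$, so $\theta=\pi/6$, and a single application of $G_i=-H^{\otimes 2}U_0H^{\otimes 2}U_{g_i(x)}$ rotates the uniform superposition through $(2\cdot 1+1)\theta=\pi/2$, landing exactly on $\vert\tau_i\rangle$ with probability $1$. For the three-qubit last block (when $n$ is odd) I would invoke the correctness of Long's algorithm (Algorithm \ref{Long's Algorithm}): with $\theta=\arcsin\sqrt{1/2^3}$ one checks $J=\lfloor(\pi/2-\theta)/(2\theta)\rfloor=1$, so executing $L_i$ exactly $J+1=2$ times with the tuned phase $\phi$ deterministically produces $\vert\tau_i\rangle$.

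Finally I would reassemble the blocks: the tensor-product factorization gives the pre-measurement state $\bigotimes_i\vert\tau_i\rangle=\vert\tau_0\tau_1\cdots\tau_{\lfloor n/2\rfloor-1}\rangle=\vert\tau\rangle$, so the computational-basis measurement in Step 10 returns $\tau$ with probability $100\%$. The only genuine obstacle is the decoupling step: one must verify that the oracles $U_{g_i(x)}$ and $R_{g_i(x)}$ really act on their own block alone, so that the global unitary factorizes across blocks. Once that is confirmed, the per-block exactness reduces to the two cited facts — the $N=4$ Grover identity and the probability-one guarantee of Long's algorithm — and the theorem follows.
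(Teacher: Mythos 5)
Your proposal is correct and follows essentially the same route as the paper's own proof: invoke Theorem \ref{proofsubfunction} to identify each per-block target $\tau_i$, establish per-block exactness (single-iteration 2-qubit Grover search, and the algorithm by Long with $J+1=2$ iterations on the three-qubit block when $n$ is odd), and concatenate the block results to recover $\tau=\tau_0\tau_1\cdots\tau_{\lfloor n/2\rfloor-1}$. The only difference is that you make explicit what the paper leaves implicit, namely the tensor-product decoupling of the evolution across disjoint qubit blocks and the numerical checks ($\sin\theta=1/2$ so one Grover iteration reaches angle $3\theta=\pi/2$, and $\theta=\arcsin\sqrt{1/8}$ gives $J=\lfloor(\pi/2-\theta)/(2\theta)\rfloor=1$), which strengthens rather than alters the argument.
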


\begin{proof}
From subsection \ref{subdega}, we have declared how to generate $\lfloor n/2 \rfloor$ subfunctions $g_i (x)$ as in Eq.~\eqref{gi} and Eq.~\eqref{glast} according to $f(x)$ and $n$, which decomposes the original search problem into $\lfloor n/2 \rfloor$ parts to solve, where $i \in\{0,1,\cdots,$ $\lfloor n/2 \rfloor-1\}$.

Furthermore, suppose that the goal can be expressed as $\tau = x_{0} x_{1} \cdots x_{n-1}\in\{0,1\}^n$ of the original function $f(x)$. Theorem \ref{proofsubfunction} has already explained each subfunction $g_i (x)$ has its corresponding target string $\tau_i$ such that $g_i (\tau_i)=1$, which satisfies
\begin{eqnarray}
\tau = \tau _{0}\tau_{1}\cdots\tau_{\lfloor n/2 \rfloor-1}.
\end{eqnarray}

Last but not least, we will accurately obtain the solution of each part. For $i \in\{0,1,\cdots,$ $\lfloor n/2 \rfloor-2\}$, by running the 2-qubit Grover's algorithm, we can precisely determine $\tau_i = x_{2i} x_{2i+1}\in\{0,1\}^2$. For $i =\lfloor n/2 \rfloor-1$, by running the 2-qubit Grover's algorithm when $n$ is even or the 3-qubit algorithm by Long when $n$ is odd, we can exactly obtain
\begin{eqnarray}
\tau_i = x_{2i} x_{2i+1} \cdots x_{n-1} =
\begin{cases}
x_{2i} x_{2i+1} = x_{n-2} x_{n-1} \in\{0,1\}^2,          &n~\text{is even},\\
x_{2i} x_{2i+1} x_{2i+2}= x_{n-3} x_{n-2} x_{n-1}\in\{0,1\}^3, &n~\text{is odd}.
\end{cases}
\end{eqnarray}

In conclusion, we can obtain $\tau = \tau _{0}\tau_{1}\cdots\tau_{\lfloor n/2 \rfloor-1}$ by Algorithm \ref{algodega} with a probability of $100 \%$, where $\tau_i$ is the target string of $g_i (x)$, where $i \in\{0,1,\cdots,$ $\lfloor n/2 \rfloor-1\}$.
\end{proof}

\subsection{Comparison}\label{comparisondega}


In order to compare the circuit depth, we will introduce a general construction for the Oracles in Grover's algorithm.

$U_{f(x)}=I_n - 2\vert \tau\rangle\langle\tau\vert$ is employed in the Grover operator $G = -H^{\otimes n} U_0 H^{\otimes n}U_{f(x)}$ to flip the phase of the target state $\vert \tau\rangle = \vert x_{0} x_{1} \cdots x_{n-1}\rangle$, where $\tau = x_{0} x_{1} \cdots x_{n-1}\in\{0,1\}^n$. Thus, the specific construction circuit of $U_{f(x)}$ (see Figure \ref{oracleOf}) is as follows:
\begin{eqnarray}
O_{U_{f(x)}} = \left(\bigotimes_{i=0} ^{n-1} X^{1-x_i}\right) \left(C^{n-1}Z\right) \left(\bigotimes_{i=0} ^{n-1} X^{1-x_i}\right),
\end{eqnarray}
where $X$ and $Z$ represent the Pauli-$X$ and Pauli-$Z$ gates, and $C^{n-1} Z$ will only flip the phase of $\vert 1\rangle ^{\otimes n}$,
\begin{eqnarray}
\left(C^{n-1} Z\right) \vert x \rangle =
\begin{cases}
-\vert x \rangle, & \vert x \rangle = \vert 1\rangle ^{\otimes n},\\
\vert x \rangle, & \text{otherwise},
\end{cases}
\end{eqnarray}
where $x = x_{0} x_{1} \cdots x_{n-1}\in\{0,1\}^n$. Besides, it should be noted that
\begin{eqnarray}
X^{1-x_i} =
\begin{cases}
X, & x_i = 0,\\
I, & x_i = 1,
\end{cases}
\end{eqnarray}
where $i \in\{0,1,\cdots,n-1\}$.
In particular, the construction circuit of $U_0=I_n-2(\vert 0\rangle\langle0\vert )^{\otimes n}$ (see Figure \ref{oracle00}) is
\begin{eqnarray}
O_{U_0} = \left( X ^{\otimes n} \right) \left(C^{n-1}Z\right) \left( X ^{\otimes n} \right).
\end{eqnarray}

\begin{figure}[H]
\centering
\begin{minipage}{0.33\textwidth}
\centering
\includegraphics[width=2.1in]{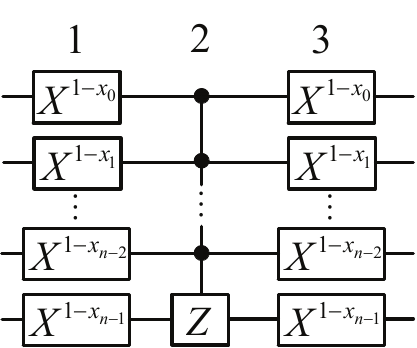}
\caption{\label{oracleOf}The quantum circuit corresponding to the Oracle $O_{U_{f(x)}}$.}
\end{minipage}
\qquad
\begin{minipage}{0.33\textwidth}
\centering
\includegraphics[width=1.9in]{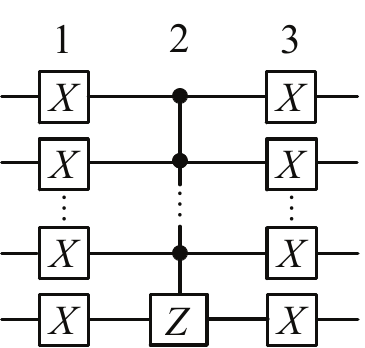}
\caption{\label{oracle00}The quantum circuit corresponding to the Oracle $O_{U_0}$.}
\end{minipage}
\end{figure}
It can  be found that the quantum circuit in Figure \ref{oracleOf} or Figure \ref{oracle00} has a circuit depth of 3.

Similarly, we can also give the specific construction circuit of $R_{f(x)} = I_n + (e^{i\phi}-1)\vert \tau\rangle\langle\tau\vert $ with $L = -H^{\otimes n} R_0 H^{\otimes n}R_{f(x)}$ (see Figure \ref{oracleRf}) as follows:
\begin{eqnarray}
O_{R_{f(x)}} = \left(\bigotimes_{i=0} ^{n-1} X^{1-x_i}\right)\left( C^{n-1}R(\phi)\right) \left(\bigotimes_{i=0} ^{n-1} X^{1-x_i}\right),
\end{eqnarray}
where $R(\phi)$ represents the rotation gate,
\begin{equation}
 R(\phi) = \left(
  \begin{array}{cc}
    1 & 0 \\
    0 & e^{i\phi}
  \end{array}
\right),
\end{equation}
and $C^{n-1}R(\phi)$ will only rotate the phase of $\vert 1\rangle ^{\otimes n}$,
\begin{eqnarray}
\left(C^{n-1}R(\phi)\right) \vert x \rangle =
\begin{cases}
e^{i\phi}\vert x \rangle, & \vert x \rangle = \vert 1\rangle ^{\otimes n},\\
\vert x \rangle, & \text{otherwise},
\end{cases}
\end{eqnarray}
where $x = x_{0} x_{1} \cdots x_{n-1}\in\{0,1\}^n$. In particular, the construction circuit of $R_0 = I_n + (e^{i\phi}-1)(\vert 0\rangle\langle0\vert )^{\otimes n}$  is
\begin{eqnarray}
O_{R_0} = \left( X ^{\otimes n} \right) \left(C^{n-1}R(\phi)\right) \left( X ^{\otimes n} \right).
\end{eqnarray}

It can be found that the quantum circuits in Figure \ref{oracleRf} or Figure \ref{oracleR0} also have a circuit depth of 3.


After identifying the specific quantum gates in  Grover's algorithm and the algorithm by Long, we might calculate the circuit depth of these two methods.

\begin{figure}[H]
\centering
\begin{minipage}{0.33\textwidth}
\centering
\includegraphics[width=2.1in]{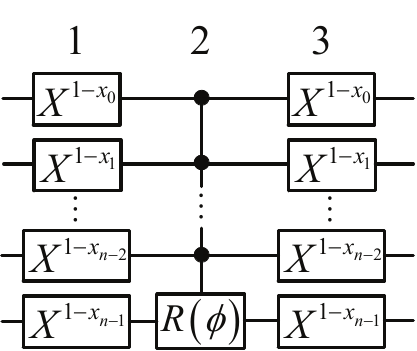}
\caption{\label{oracleRf}The quantum circuit corresponding to the Oracle $O_{R_{f(x)}}$.}
\end{minipage}
\qquad
\begin{minipage}{0.33\textwidth}
\centering
\includegraphics[width=1.9in]{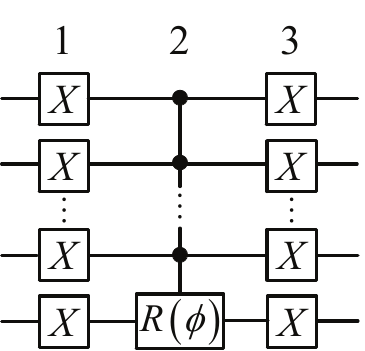}
\caption{\label{oracleR0}The quantum circuit corresponding to the Oracle $O_{R_0}$.}
\end{minipage}
\end{figure}

\begin{theorem}
The circuit depths for  Grover's algorithm and the algorithm by Long are $1 + 8\left\lfloor \frac{\pi}{4}\sqrt{2^n} \right\rfloor$ and $9 + 8\left( \left\lfloor \frac{\pi}{4}\sqrt{2^n} - \frac{1}{2}\right\rfloor \right)$, respectively.
\label{GroverLonddepth}
\end{theorem}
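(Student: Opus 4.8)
The plan is to decompose each algorithm into time-ordered layers of gates and add up the contributions, reusing the fact (established above via Figures~\ref{oracleOf}--\ref{oracleR0}) that each oracle-construction circuit $O_{U_{f(x)}}$, $O_{U_0}$, $O_{R_{f(x)}}$, and $O_{R_0}$ has depth $3$, while a single layer $H^{\otimes n}$ has depth $1$ because the Hadamard gates act on disjoint wires in parallel. Throughout I would use the uniform depth-$3$ value for each oracle block, matching the figures, rather than the $\tau=1^{\otimes n}$ special case.

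First, for Grover's algorithm I would compute the depth of one Grover iteration $G = -H^{\otimes n}U_0 H^{\otimes n}U_{f(x)}$. Reading right to left, it is the concatenation of $U_{f(x)}$ (depth $3$), $H^{\otimes n}$ (depth $1$), $U_0$ (depth $3$), and $H^{\otimes n}$ (depth $1$); the global phase $-1$ is a scalar and adds no depth. Each internal boundary places an $X/I$ layer next to a Hadamard layer on overlapping wires, so no two adjacent layers can be merged and $dep(G)=3+1+3+1=8$. Prepending the single initialization layer $H^{\otimes n}$ from Step~2 (depth $1$) and repeating $G$ exactly $\lfloor \frac{\pi}{4}\sqrt{2^n}\rfloor$ times then gives the claimed $1+8\lfloor \frac{\pi}{4}\sqrt{2^n}\rfloor$.

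For the algorithm by Long, the operator $L=-H^{\otimes n}R_0 H^{\otimes n}R_{f(x)}$ has the identical layered shape, with $C^{n-1}R(\phi)$ replacing $C^{n-1}Z$ but still counted as one gate, so $dep(L)=8$ by the same non-merging argument. With the initial $H^{\otimes n}$ and $J+1$ repetitions this yields $1+8(J+1)=9+8J$. It then remains to rewrite $J$: substituting $\theta=\arcsin\sqrt{1/2^n}$ into $J=\lfloor(\pi/2-\theta)/(2\theta)\rfloor$ and simplifying gives $J=\lfloor \frac{\pi}{4\theta}-\frac{1}{2}\rfloor$.

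The hard part will be the final identification $J=\lfloor \frac{\pi}{4}\sqrt{2^n}-\frac{1}{2}\rfloor$, since replacing $\theta$ by $\sin\theta=1/\sqrt{2^n}$ inside the floor relies on the small-angle approximation $\arcsin\!\big(1/\sqrt{2^n}\big)\approx 1/\sqrt{2^n}$, which is not an exact equality. The cleanest route is to adopt the same approximation already used to write the optimal iteration count as $\tfrac{\pi}{4}\sqrt{2^n}-\tfrac12$, under which $\frac{\pi}{4\theta}=\frac{\pi}{4\sin\theta}=\frac{\pi}{4}\sqrt{2^n}$; then $9+8J=9+8\lfloor \frac{\pi}{4}\sqrt{2^n}-\frac{1}{2}\rfloor$, completing the argument. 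A fully rigorous version would instead bound $|\theta-\sin\theta|$ and verify it never pushes the argument across an integer boundary of the floor, which is the only genuinely quantitative obstacle I would flag.
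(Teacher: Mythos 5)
Your proposal matches the paper's proof essentially step for step: the paper likewise assigns depth $3$ to each oracle block and depth $1$ to each $H^{\otimes n}$ layer, computes $1 + 8\left\lfloor \frac{\pi}{4}\sqrt{2^n} \right\rfloor$ for Grover's algorithm and $1 + 8(J+1)$ for the algorithm by Long, and then writes $J = \left\lfloor \frac{\pi}{4}\sqrt{2^n} - \frac{1}{2}\right\rfloor$ directly. The only difference is that you explicitly flag the substitution of $\sin\theta$ for $\theta$ inside the floor as a small-angle approximation that could in principle shift the floor value, whereas the paper performs this replacement silently; this is added care on your part rather than a divergence in method.
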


\begin{proof}
In the Grover's algorithm, Grover operator $G = -H^{\otimes n} U_0 H^{\otimes n}U_{f(x)}$ is executed with $\left\lfloor \frac{\pi}{4}\sqrt{2^n} \right\rfloor$ times, where $dep(U_f)=dep(U_0)=3$ and $dep(H^{\otimes n})=1$. Thus, the circuit depth of  Grover's algorithm is
\begin{eqnarray}
dep(\text{Grover}) &=& 1 + \left\lfloor \frac{\pi}{4}\sqrt{2^n} \right\rfloor * (3+1+3+1) \nonumber \\
                   &=& 1 + 8\left\lfloor \frac{\pi}{4}\sqrt{2^n} \right\rfloor.
\end{eqnarray}

Similarly, since the operator $L = -H^{\otimes n} R_0 H^{\otimes n}R_{f(x)}$ in the algorithm by Long is executed with $J+1$ times, where $J=\lfloor(\pi/2-\theta)/(2\theta)\rfloor$, $\theta= \arcsin \left(\sqrt{1/2^n}\right)$, and $dep(R_f)=dep(R_0)=3$, the circuit depth of the algorithm by Long is
\begin{eqnarray}
dep(\text{Long}) &=& 1 + \left( \left\lfloor \frac{\pi}{4}\sqrt{2^n} - \frac{1}{2}\right\rfloor + 1 \right) * (3+1+3+1) \nonumber \\
                 &=& 9 + 8\left( \left\lfloor \frac{\pi}{4}\sqrt{2^n} - \frac{1}{2}\right\rfloor \right).
\end{eqnarray}
\end{proof}


Next, we give the circuit depth of our DEGA by the following theorem.

\begin{theorem}
The circuit depth of Algorithm \ref{algodega} is $8(n~\text{mod}~2)+9$.
\label{DEGAdepth}
\end{theorem}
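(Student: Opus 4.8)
The plan is to exploit the fact that the distinct Grover operators $G_i$ (and the final Long operator $L_i$ when $n$ is odd) act on pairwise disjoint qubit registers, so that the circuit is a parallel composition of independent branches rather than a sequential one. Concretely, by construction $G_i$ (resp.\ $L_i$) touches only the $(2i)$-th and $(2i+1)$-th qubits (resp.\ additionally the $(2i+2)$-th qubit), and these supports are disjoint for different $i$. Hence, by Definition \ref{defdepth}, any longest input-to-output path after the initial layer of Hadamards runs entirely inside a single branch, and the total depth equals $dep(H^{\otimes n})$ plus the \emph{maximum} (not the sum) of the per-branch depths.

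First I would record the depth of one invocation of each operator. From the Oracle constructions in Figures \ref{oracleOf}--\ref{oracleR0} we have $dep(U_0)=dep(U_{g_i})=dep(R_0)=dep(R_{g_i})=3$, while $dep(H^{\otimes k})=1$ for any $k$. Since the leading global sign contributes no depth, this gives
\begin{eqnarray}
dep(G_i)=1+3+1+3=8,\qquad dep(L_i)=1+3+1+3=8.
\end{eqnarray}
In particular each branch $i\in\{0,\dots,\lfloor n/2\rfloor-2\}$, which runs $G_i$ once, has depth $8$, whereas the final branch $i=\lfloor n/2\rfloor-1$ runs either $G_i$ once (depth $8$) or $L_i$ twice (depth $2\cdot 8=16$), according to the parity of $n$.

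Then I would split on the parity of $n$ and take the maximum over branches. When $n$ is even, every branch, including $i=\lfloor n/2\rfloor-1$, executes a single $G_i$, so the maximum branch depth is $8$ and the total depth is $1+8=9=8(n~\text{mod}~2)+9$. When $n$ is odd, the branches with $i\le\lfloor n/2\rfloor-2$ have depth $8$ while the final branch has depth $16$; the maximum is $16$, giving total depth $1+16=17=8(n~\text{mod}~2)+9$. Combining the two cases yields the claimed formula.

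The main obstacle is the parallelism step: one must argue carefully that the disjointness of the qubit supports of the $G_i$ and $L_i$ forces the longest path to traverse only one branch, so that the branch depths combine by $\max$ rather than by summation. Once this observation is in place, the remainder is the elementary per-operator depth bookkeeping above, and the final identity follows immediately by evaluating $8(n~\text{mod}~2)+9$ in the two parity cases.
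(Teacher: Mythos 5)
Your proposal is correct and follows essentially the same route as the paper's own proof: both decompose the circuit into the initial Hadamard layer (depth $1$) plus per-iteration blocks of depth $3+1+3+1=8$, with one iteration per branch except the final branch for odd $n$, which runs $L$ twice, yielding $1+8=9$ and $1+16=17$ respectively. The only difference is cosmetic: you explicitly justify that disjoint qubit supports make the branch depths combine by maximum rather than summation, a point the paper takes for granted without comment.
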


\begin{proof}
When $n$ is an even number, our DEGA will run $\lfloor n/2 \rfloor$ parts 2-qubit Grover's algorithm, and each part only needs to iterate the Grover operator $G$ once, so the circuit depth is
\begin{eqnarray}
dep(\text{DEGA}, n~\text{is even}) = 1 + 1 * (3+1+3+1) = 9.
\end{eqnarray}
When $n$ is an odd number, our DEGA will run $\lfloor n/2 \rfloor-1$ parts 2-qubit Grover's algorithm and 1 part 3-qubit  algorithm by Long, and the last part needs to iterate the operator $L$ twice, so the circuit depth is
\begin{eqnarray}
dep(\text{DEGA}, n~\text{is odd}) = 1 + 2 * (3+1+3+1) = 17.
\end{eqnarray}
In other words, the actual depth of our circuit is
\begin{eqnarray}
dep(\text{DEGA}) = 8(n~\text{mod}~2)+9 =
\begin{cases}
9,          &n~\text{is even},\\
17, &n~\text{is odd}.
\end{cases}
\end{eqnarray}
\end{proof}

Therefore, the actual circuit depth will be shallower compared with the original Grover's and the algorithm by Long. The circuit depth of our DEGA only depends on the parity of $n$, and it is not deepened as $n$ increases.

Next, we will contrast our DEGA with the existing distributed Grover's algorithms.


In \cite{RefQiu2022}, Qiu  et al. presented the serial and parallel distributed Grover's algorithms and divided the original function $f$ into $2^k$ subfunctions $f_p: \{0,1\}^{n-k} \rightarrow \{0,1\}$ as in Eq.~\eqref{subfunctions_fp}, where $2^k$ represents the number of computing nodes. In particular, the method in \cite {RefQiu2022} can solve the search problem having multiple objectives, but in the present paper we must realize the search problem only having unique objective, otherwise we cannot solve it here. The total number of qubits used in their parallel scheme by Qiu et al. is $2^k(n-k)$.

\section{Experiment}\label{experiment}
In this section, we provide particular situations of our DBVA and DEGA on MindQuantum. 




\subsection{The Bernstein-Vazirani algorithm, the hidden string $s=001011$}\label{BVMQ}
Let Boolean function $f: \{0,1\}^6 \rightarrow \{0,1\}$. Consider a function $f_s (x)=\langle s\cdot x\rangle = (\sum_{i=0}^{5}s_i\cdot x_i)~\text{mod}~2$, where $x\in\{0,1\}^6$. Suppose the hidden string we want to find is $s=001011$. The function values corresponding to all input strings are shown in TABLE \ref{examplefsx}. 
Then we can build a complete quantum circuit of the BV algorithm (see Figure \ref{fullcir}).

\begin{table}[h]
\centering
\scalebox{0.85}{
\begin{tabular}{ccc||ccc||ccc||ccc}
\toprule
$i$ & $x^{(i)}$ & $f_s(x^{(i)})$ & $i$ & $x^{(i)}$ & $f_s(x^{(i)})$ & $i$ & $x^{(i)}$ & $f_s(x^{(i)})$ & $i$ & $x^{(i)}$ & $f_s(x^{(i)})$ \\
\midrule
0 & 000000 & 0 & 16 & 010000 & 0 & 32 & 100000 & 0 & 48 & 110000 & 0 \\ 
1 & 000001 & 1 & 17 & 010001 & 1 & 33 & 100001 & 1 & 49 & 110001 & 1 \\ 
2 & 000010 & 1 & 18 & 010010 & 1 & 34 & 100010 & 1 & 50 & 110010 & 1 \\ 
3 & 000011 & 0 & 19 & 010011 & 0 & 35 & 100011 & 0 & 51 & 110011 & 0 \\ 
4 & 000100 & 0 & 20 & 010100 & 0 & 36 & 100100 & 0 & 52 & 110100 & 0 \\ 
5 & 000101 & 1 & 21 & 010101 & 1 & 37 & 100101 & 1 & 53 & 110101 & 1 \\ 
6 & 000110 & 1 & 22 & 010110 & 1 & 38 & 100110 & 1 & 54 & 110110 & 1 \\ 
7 & 000111 & 0 & 23 & 010111 & 0 & 39 & 100111 & 0 & 55 & 110111 & 0 \\ 
8 & 001000 & 1 & 24 & 011000 & 1 & 40 & 101000 & 1 & 56 & 111000 & 1 \\ 
9 & 001001 & 0 & 25 & 011001 & 0 & 41 & 101001 & 0 & 57 & 111001 & 0 \\ 
10 & 001010 & 0 & 26 & 011010 & 0 & 42 & 101010 & 0 & 58 & 111010 & 0 \\
11 & 001011 & 1 & 27 & 011011 & 1 & 43 & 101011 & 1 & 59 & 111011 & 1 \\
12 & 001100 & 1 & 28 & 011100 & 1 & 44 & 101100 & 1 & 60 & 111100 & 1 \\ 
13 & 001101 & 0 & 29 & 011101 & 0 & 45 & 101101 & 0 & 61 & 111101 & 0 \\ 
14 & 001110 & 0 & 30 & 011110 & 0 & 46 & 101110 & 0 & 62 & 111110 & 0 \\ 
15 & 001111 & 1 & 31 & 011111 & 1 & 47 & 101111 & 1 & 63 & 111111 & 1 \\ 
\bottomrule
\end{tabular}}
\caption{Function value of original function $f_s (x)$.}
\label{examplefsx}
\end{table}


\begin{figure}[H]
\centering
\includegraphics[width=6.5in]{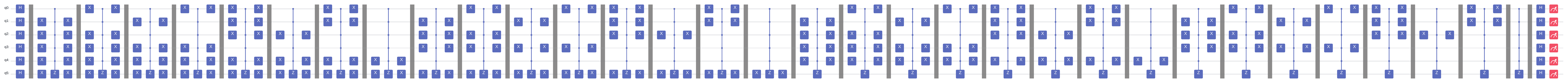}
\caption{\label{fullcir} The quantum circuit of  BV algorithm.}
\end{figure}

Furthermore, 
the above circuit can be further optimized (see Figure \ref{fullciropti}).
\begin{figure}[H]
\centering
\includegraphics[width=6.5in]{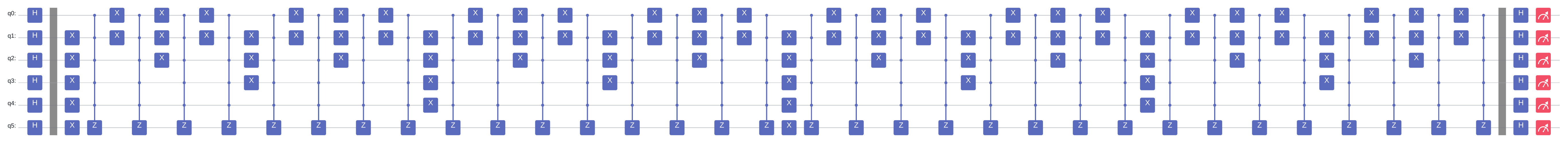}
\caption{\label{fullciropti} The quantum circuit of the BV algorithm after optimization.}
\end{figure}

The total number of quantum gates before optimization is 236, and after optimization is 130. The circuit depth is reduced from 96 to 66 after optimization.


Sampling the circuit 10,000 times, the results can be found in Figure \ref{measure001011}.
\begin{figure}[H]
\centering
\includegraphics[width=3in]{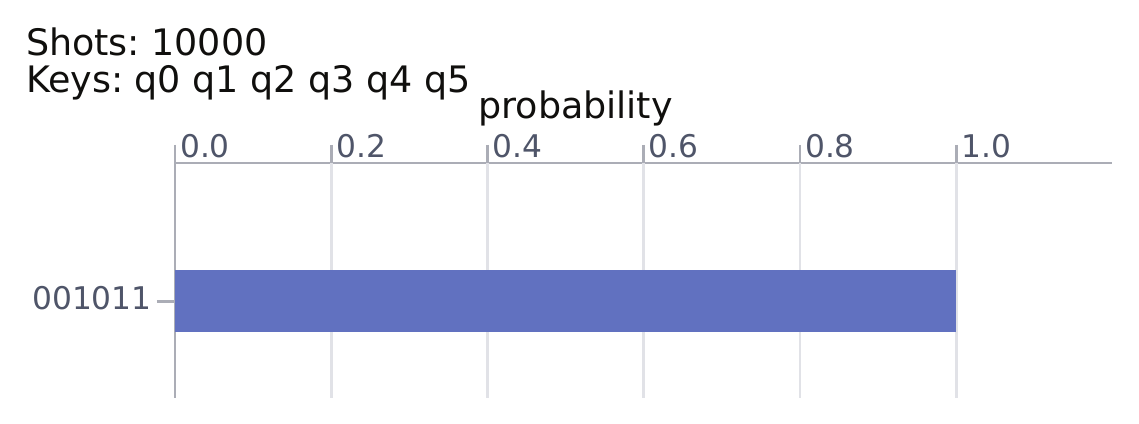}
\caption{\label{measure001011} The sampling results of BV algorithm after optimization.}
\end{figure}

The sampling results demonstrate that $s=001011$ is obtained after measurement. 


\subsection{The DBVA with two computing nodes, the hidden string $s=001011$}\label{experiment2}
The next step is to examine the quantum circuit of our DBVA with two computing nodes. Here, computing nodes $t=2$, and the number of qubits per computing node $n_0 = n_1 =3$. 

Firstly, generate two subfunctions $f_{s_{n_0}}(y)$ and $f_{s_{n_1}}(y)$ according to function $f_s (x)$:
\begin{eqnarray}\label{f3pq}
f_{s_{n_0}}(y)&=&f_s(y000),\\
f_{s_{n_1}}(y)&=&f_s(000y),
\end{eqnarray}
where $y\in\{0,1\}^{3}$. Each subfunction value for all input strings is shown in TABLE \ref{examplesub}.


\begin{table}[H]
\centering
\scalebox{0.85}{
\begin{tabular}{cccc}
\toprule
$i$  & $y^{(i)}$ & $f_{s_{n_0}}\left(y^{(i)}\right)$ & $f_{s_{n_1}}\left(y^{(i)}\right)$\\
\midrule
   0       & 000     & 0             & 0        \\
   1       & 001     & 1             & 1       \\ 
   2       & 010     & 0             & 1         \\
   3       & 011     & 1             & 0        \\
   4       & 100     & 0             & 0         \\
   5       & 101     & 1             & 1            \\
   6       & 110     & 0             & 1           \\ 
   7       & 111     & 1             & 0           \\
\bottomrule
\end{tabular}}
\caption{Function values of subfunctions $f_{s_{n_0}}(y)=f_s(y000)$ and $f_{s_{n_1}}(y)=f_s(000y)$.}
\label{examplesub}
\end{table}


We build the complete quantum circuit of our DBVA with two computing nodes (see Figure \ref{dbva2cir}). Note that since the reading order on MindQuantum is from right to left, which is the opposite of our reading order from left to right, the 3-qubit BV algorithm corresponding to $f_{s_{n_1}}(y)$ is located on the top of the circuit.

Furthermore, the optimized quantum circuit can be obtained (see Figure \ref{dbva2ciropti}).

The total number of quantum gates before optimization is 40, and after optimization is 36.  The circuit depth is reduced from 14 to 11 after optimization.

By sampling the circuit 10,000 times, the results can be found in Figure \ref{001011measureparallel11}.
\begin{figure}[H]
\centering
\includegraphics[width=4in]{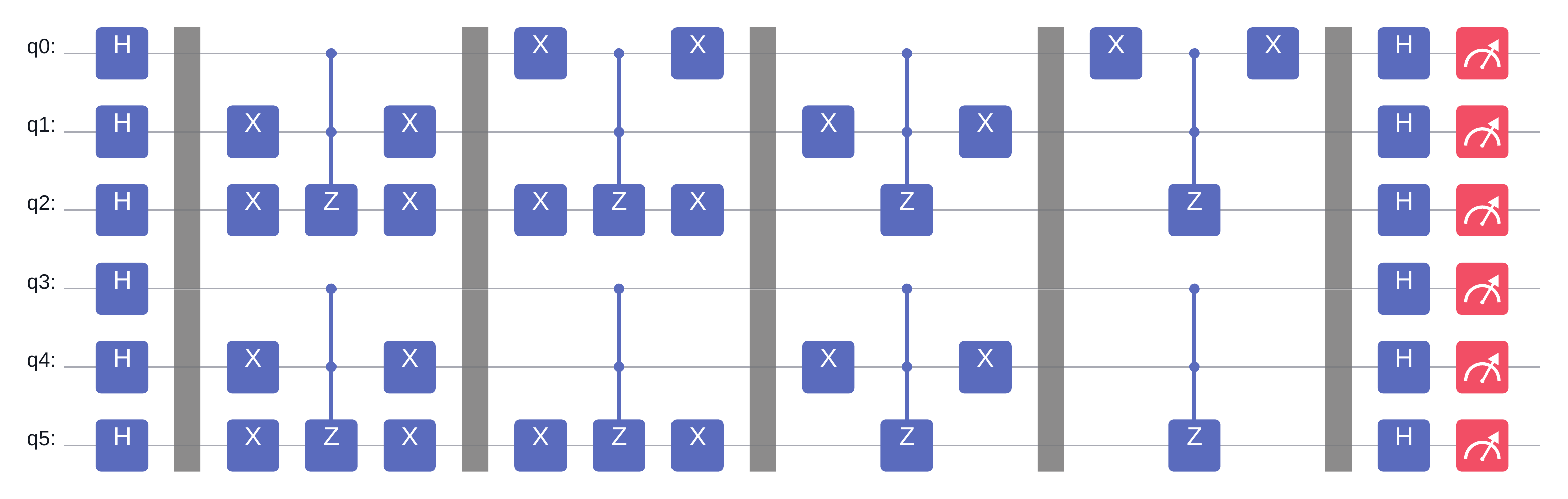}
\caption{\label{dbva2cir} The quantum circuit of DBVA with two computing nodes.}
\end{figure}

\begin{figure}[H]
\centering
\includegraphics[width=4in]{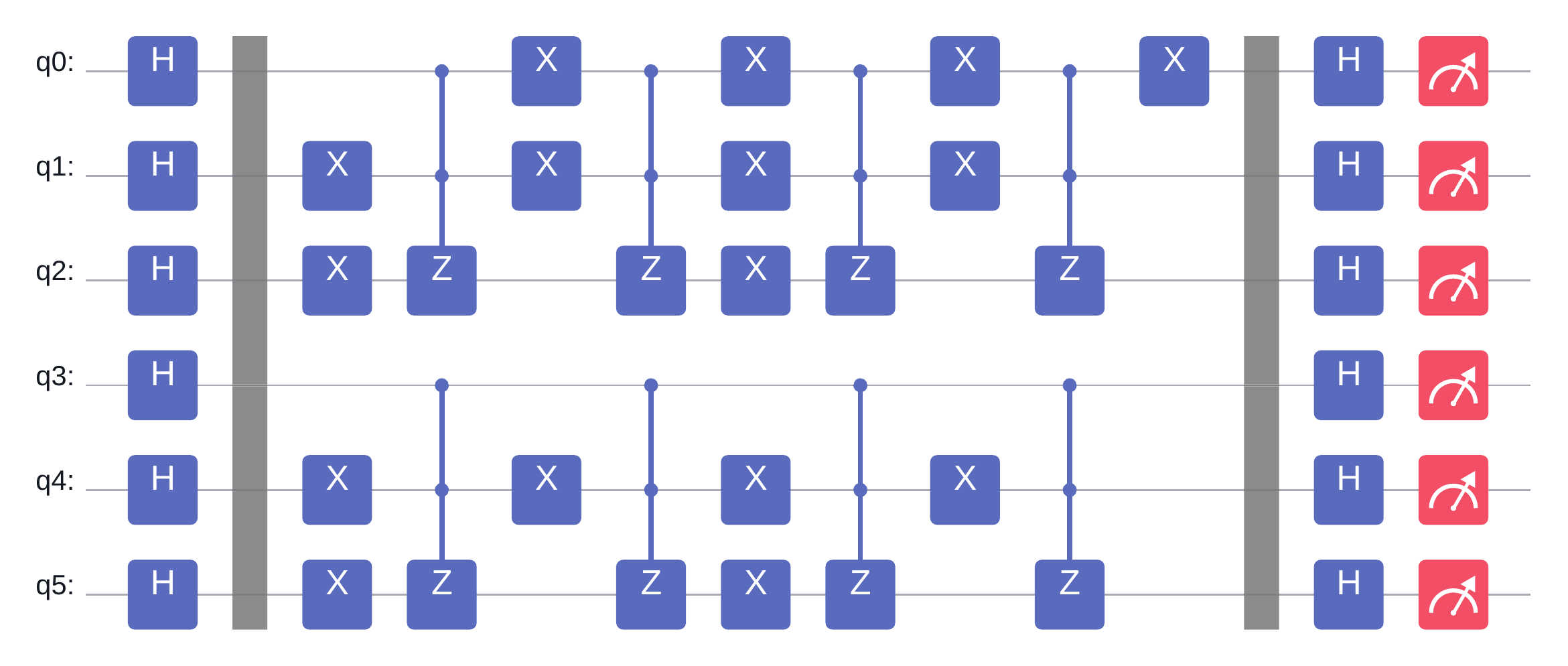}
\caption{\label{dbva2ciropti} The quantum circuit of DBVA with two computing nodes after optimization.}
\end{figure}

\begin{figure}[H]
\centering
\includegraphics[width=3in]{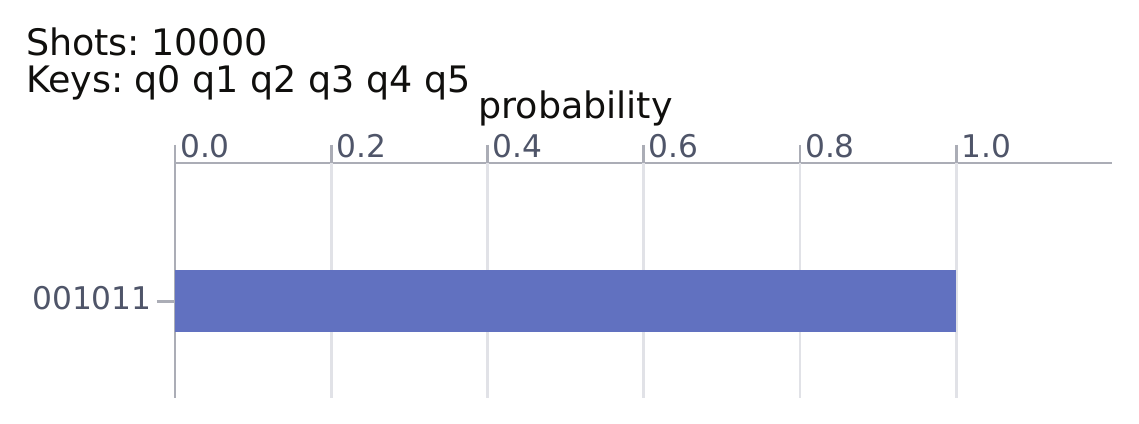}
\caption{\label{001011measureparallel11} The sampling results of DBVA with two computing nodes after optimization.}
\end{figure}

As can be seen from the sampling results, we definitely obtain the hidden string $s=s_{n_0}s_{n_1}=001011$ after our DBVA with two computing nodes, which is the same as the result of the original BV algorithm. 


\subsection{The DBVA with $t=3$ computing nodes, the hidden string $s=001011$}\label{experiment3}
Further, we consider the experiment of our DBVA with $t=3$ computing nodes. Here, computing nodes $t=3$, and the number of qubits per computing node $n_0 = n_1 =n_2 =2$.

Firstly, generate three subfunctions $f_{s_{n_0}}(m)$, $f_{s_{n_1}}(m)$, and $f_{s_{n_2}}(m)$ according to function $f_s (x)$:
\begin{eqnarray}\label{f2pq}
f_{s_{n_0}}(m)&=&f(m0000),\\
f_{s_{n_1}}(m)&=&f(00m00),\\
f_{s_{n_2}}(m)&=&f(0000m),
\end{eqnarray}
where $m\in\{0,1\}^{2}$. Each subfunction value for all input strings is shown in TABLE \ref{examplesubsub}.


\begin{table}[h]
  \centering
\scalebox{0.85}{
  \begin{tabular}{ccccc}
    \toprule
      $i$ & $m^{(i)}$ & $f_{s_{n_0}}\left(m^{(i)}\right)$  & $f_{s_{n_1}}\left(m^{(i)}\right)$ & $f_{s_{n_2}}\left(m^{(i)}\right)$\\
    \midrule
    0 & 00 & 0 & 0 & 0\\
    1 & 01 & 0 & 0 & 1 \\
    2 & 10 & 0 & 1 & 1 \\
    3 & 11 & 0 & 1 & 0\\
    \bottomrule
  \end{tabular}}
\caption{Function values of subfunctions $f_{s_{n_0}}=f(m0000)$, $f_{s_{n_1}}(m)=f(00m00)$, $f_{s_{n_2}}(m)=f(0000m)$.}
  \label{examplesubsub}
\end{table}

We can easily draw the whole quantum circuit of our DBVA with $t=3$ computing nodes (see Figure \ref{dbva3ciropti}). The 2-qubit BV algorithms corresponding to $f_{s_{n_2}}(y)$, $f_{s_{n_1}}(y)$ and $f_{s_{n_0}}(y)$ are located at the top, middle and bottom of the circuit respectively.

The number of quantum gates is 22 and the circuit depth is 7. By sampling the circuit 10,000 times, the results can be found in Figure \ref{001011measureparallel}.


\begin{figure}[H]
\centering
\includegraphics[width=3.8in]{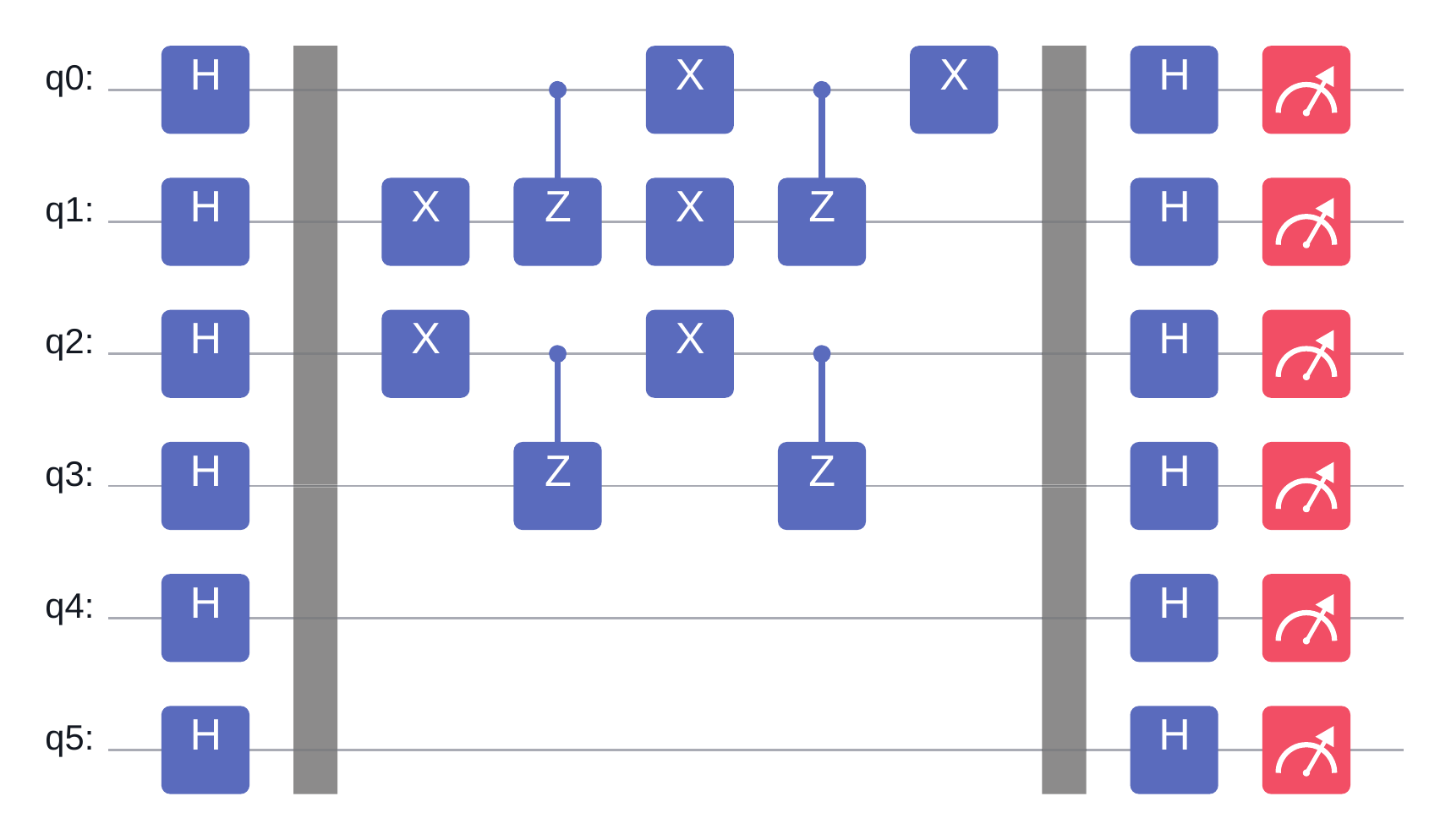}
\caption{\label{dbva3ciropti} The quantum circuit of DBVA with $t=3$ computing nodes.}
\centering
\includegraphics[width=3in]{001011measureparallel.pdf}
\caption{\label{001011measureparallel} The sampling results of DBVA with $t=3$ computing nodes after optimization.}
\end{figure}

As can be seen from the sampling results, we definitely obtain the hidden string $s=s_{n_0}s_{n_1}s_{n_2}=001011$ after our DBVA with $t=3$ computing nodes, which is the same as the result of the original BV algorithm. 


It can be found that the final result of DBVA with two or three computing nodes is consistent with the result of the original BV algorithm. However, DBVA requires fewer quantum gates and has a shallower circuit depth (see TABLE \ref{comparetable}). With the increase of the number of computing nodes (from 2 to 3), the number of quantum gates and the circuit depth are further reduced.


\begin{table}[h]
  \centering
\scalebox{0.9}{
  \begin{tabular}{lccccc}
    \toprule
    &BV & optimized BV & DBVA-2 & optimized DBVA-2 & optimized DBVA-3 \\ \midrule
    \text{1. The number of computing nodes} &1&1&2&2&3 \\
    \text{2. Figure number} &\text{Figure} \ref{fullcir}&\text{Figure} \ref{fullciropti}&\text{Figure} \ref{dbva2cir}&\text{Figure} \ref{dbva2ciropti}&\text{Figure} \ref{dbva3ciropti} \\ 
    \text{3. Final result}&001011&001011&001011&001011&001011 \\ 
    \text{4. The number of quantum gates}&236&130&40&36&22 \\ 
    \text{5. The circuit depth}&96&66&14&11&7 \\ 
    \bottomrule
  \end{tabular}}
  \caption{A simple comparison between our DBVA and the original BV algorithm.}
\label{comparetable}
\end{table}


\subsection{The 2-qubit DEGA, the target string $\tau = 01$}\label{dega2example}
Let Boolean function $f: \{0,1\}^2 \rightarrow \{0,1\}$. Suppose
\begin{eqnarray}
f(x)=
\begin{cases}
1,x=01,\\
0,x\neq 01 ,
\end{cases}
\end{eqnarray}
where $x\in\{0,1\}^2$ and $\tau = 01$ is the target string. When $n=2$, our DEGA is a 2-qubit Grover's algorithm. Thus, we can build its circuit (see Figure \ref{2bitgrover}). 
By sampling the circuit 10,000 times, the results can be found in Figure \ref{measure01}. The total number of quantum gates is 14 and the circuit depth is 9.

\begin{figure}[H]
\centering
\includegraphics[width=4in]{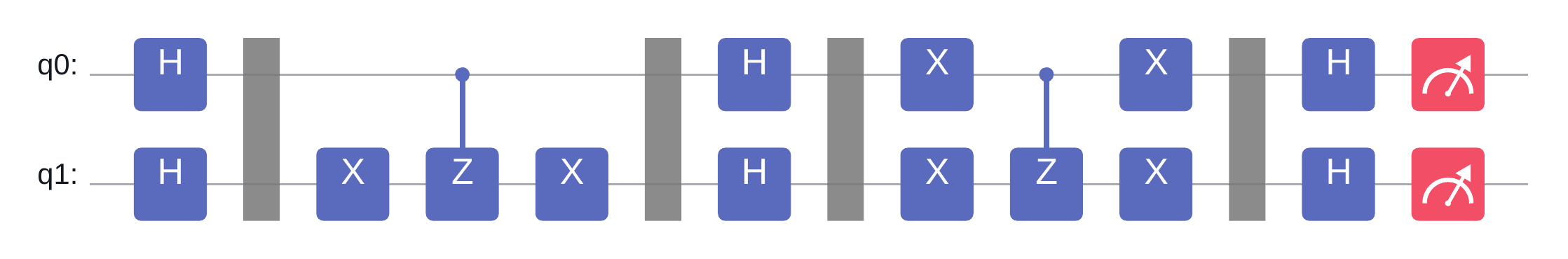}
\caption{\label{2bitgrover} The quantum circuit of the 2-qubit Grover's algorithm (target string $\tau = 01$).}
\centering
\includegraphics[width=3in]{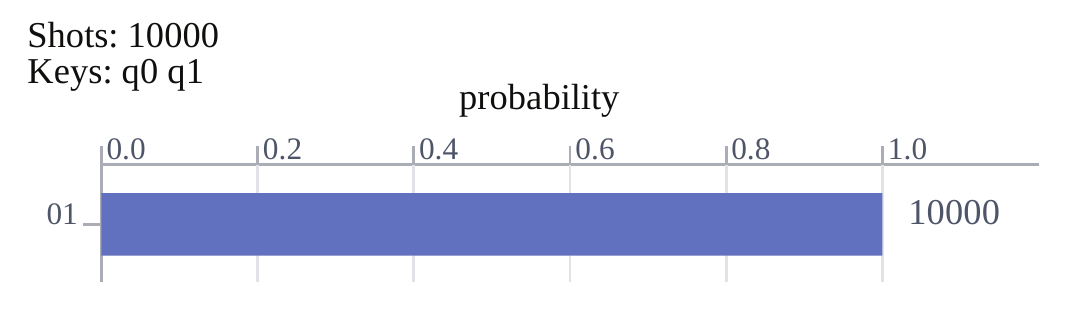}
\caption{\label{measure01} The sampling results of the 2-qubit Grover's algorithm (target string $\tau = 01$).}
\end{figure}

The sampling results demonstrate that $\tau=01$ is obtained exactly after measurement. 

\subsection{The 3-qubit DEGA, the target string $\tau = 101$}\label{dega3example}
Let Boolean function $f: \{0,1\}^3 \rightarrow \{0,1\}$. Suppose
\begin{eqnarray}
f(x)=
\begin{cases}
1,x=101,\\
0,x\neq 101 ,
\end{cases}
\end{eqnarray}
where $x\in\{0,1\}^3$ and $\tau = 101$ is the target string. When $n=3$, our DEGA is a 3-qubit algorithm by Long. Thus, we can build its circuit (see Figure \ref{3bitlong}). 
By sampling the circuit 10,000 times, the results can be found in Figure \ref{measure101}. Note that PhaseShift (called the PS gate) is a single-qubit gate, whose matric is as follows
\begin{eqnarray}
\text{PS($\phi$)} =
\left(
\begin{array}{cc}
1 & 0 \\
0 & e^{i\phi}
\end{array}
\right).
\end{eqnarray}

\begin{figure}[H]
\centering
\includegraphics[width=5.2in]{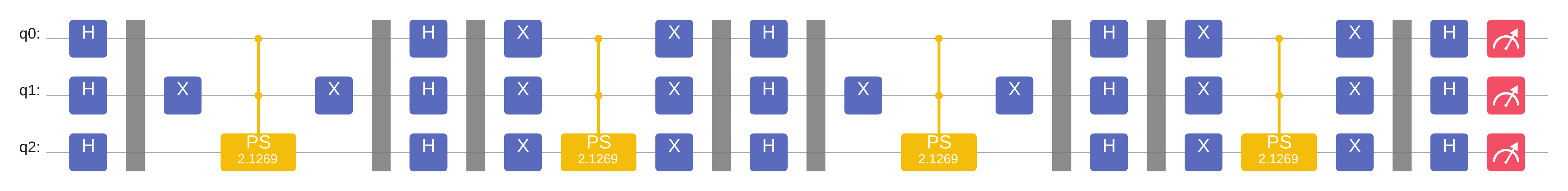}
\caption{\label{3bitlong} The quantum circuit of the 3-qubit algorithm  by Long (target string $\tau = 101$). The parameter in the circuit is $\phi=2\arcsin\left(\sin\left(\frac{\pi}{4J+6}\right) / \sin \theta\right) =2.1268800471555034\approx 2.1269$, where $J=\lfloor(\pi/2-\theta)/(2\theta)\rfloor$ and $\theta= \arcsin {\sqrt{1/2^3}}$.}
\centering
\includegraphics[width=3in]{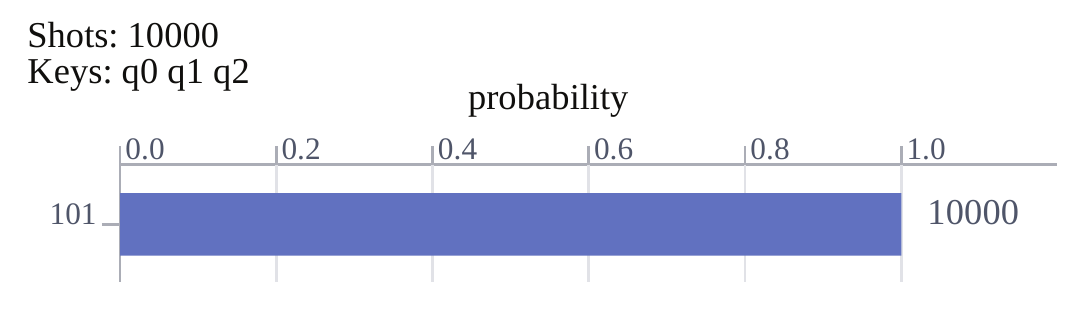}
\caption{\label{measure101} The sampling results of the 3-qubit algorithm by Long (target string $\tau = 101$).}
\end{figure}

The total number of quantum gates is 35 and the circuit depth is 17.

The sampling results demonstrate that $\tau=101$ is obtained exactly after measurement. 

\subsection{The 4-qubit DEGA, the target string $\tau = 1001$}\label{dega4example}
Let Boolean function $f: \{0,1\}^4 \rightarrow \{0,1\}$. Suppose
\begin{eqnarray}
f(x)=
\begin{cases}
1,x=1001,\\
0,x\neq 1001 ,
\end{cases}
\end{eqnarray}
where $x\in\{0,1\}^4$ and $\tau = 1001$ is the target string. When $n=4$, we can decomposes the original search problem into $\lfloor n/2 \rfloor=2$ parts to solve. To be specific, we choose last $2$ bits in $x$ to divide $f$, then we can get $2^2=4$ subfunctions $f_{{0},j}: \{0,1\}^2 \rightarrow \{0,1\}$:
\begin{eqnarray}
f_{{0},0}(m_0)&=&f( m_0 00), \\
f_{{0},1}(m_0)&=&f( m_0 01),\\
f_{{0},2}(m_0)&=&f( m_0 10),\\
f_{{0},3}(m_0)&=&f( m_0 11),
\end{eqnarray}
where $m_0\in\{0,1\}^{2}$ and $j\in\{0,1,2,3\}$. Obviously, $f_{{0},0}(m_0)=f_{{0},2}(m_0)=f_{{0},3}(m_0)\equiv0$, and
\begin{eqnarray}
f_{{0},1}(m_0)=
\begin{cases}
1,m_0=10,\\
0,m_0\neq 10,
\end{cases}
\end{eqnarray}
where $m_0\in\{0,1\}^2$ and $10$ is the target substring.

Afterwards, we generate a new function $g_0: \{0,1\}^2 \rightarrow \{0,1\}$ in terms of above four subfunctions,
\begin{eqnarray}
g_{0}(m_0)=\text{OR} \left(f_{{0},0}(m_0), f_{{0},1}(m_0), f_{{0},2}(m_0), f_{{0},3}(m_0)\right)=
\begin{cases}
1,m_0=10,\\
0,m_0\neq 10,
\end{cases}
\end{eqnarray}
where $m_0\in\{0,1\}^{2}$.

Similarly, we choose first $2$ bits in $x$ to divide $f$, then we can get $2^2=4$ subfunctions $f_{{1},j}: \{0,1\}^2 \rightarrow \{0,1\}$:
\begin{eqnarray}
f_{{1},0}(m_1)&=&f( 00 m_1 ), \\
f_{{1},1}(m_1)&=&f( 01 m_1 ),\\
f_{{1},2}(m_1)&=&f( 10 m_1 ),\\
f_{{1},3}(m_1)&=&f( 11 m_1 ),
\end{eqnarray}
where $m_1\in\{0,1\}^{2}$ and $j\in\{0,1,2,3\}$. Obviously, $f_{{1},0}(m_1)=f_{{1},1}(m_1)=f_{{1},3}(m_1)\equiv0$, and
\begin{eqnarray}
f_{{1},2}(m_1)=
\begin{cases}
1,m_1=01,\\
0,m_1\neq 01,
\end{cases}
\end{eqnarray}
where $m_1\in\{0,1\}^2$ and $01$ is the target substring.

Afterwards, we generate a new function $g_1: \{0,1\}^2 \rightarrow \{0,1\}$ in terms of above four subfunctions,
\begin{eqnarray}
g_{1}(m_1)=\text{OR} \left(f_{{1},0}(m_1), f_{{1},1}(m_1), f_{{1},2}(m_1), f_{{1},3}(m_1)\right)=
\begin{cases}
1,m_1=01,\\
0,m_1\neq 01,
\end{cases}
\end{eqnarray}
where $m_1\in\{0,1\}^{2}$.

Next, we can build the completed circuit (see Figure \ref{4bitdega}). The 2-qubit Grover's algorithm corresponding to $g_{1}(m_1)$ is located on the top of the circuit.
By sampling the circuit 10,000 times, the results can be found in Figure \ref{measure1001}.

\begin{figure}[H]
\centering
\includegraphics[width=4in]{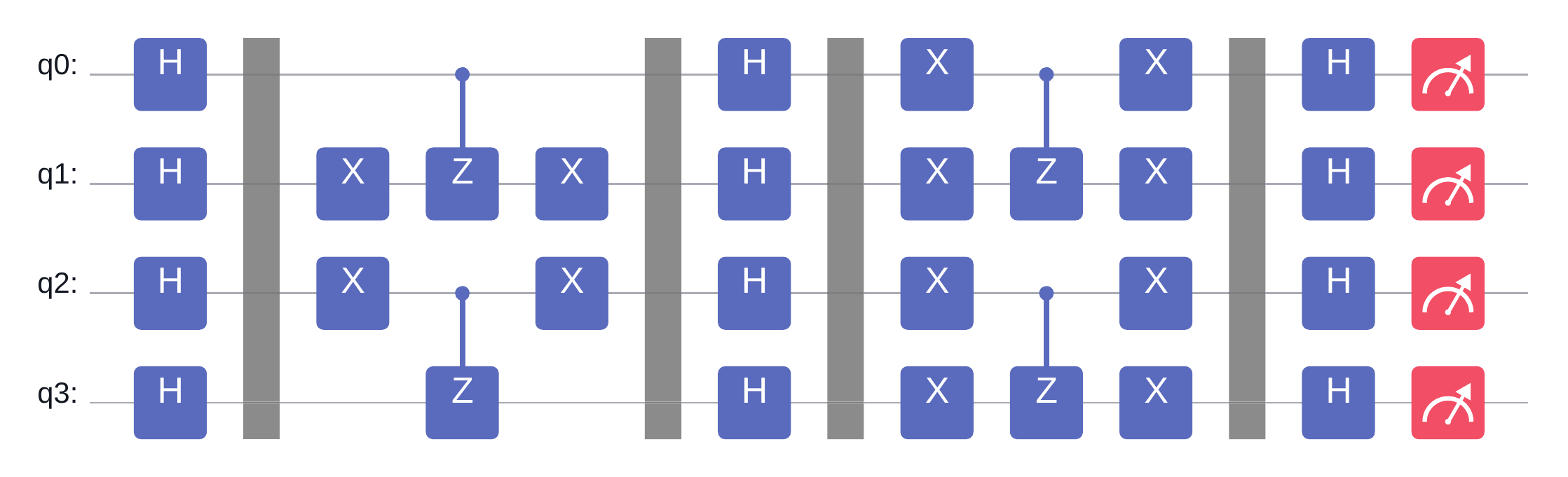}
\caption{\label{4bitdega} The quantum circuit of the 4-qubit DEGA (target string $\tau = 1001$).}
\end{figure}

\begin{figure}[H]
\centering
\includegraphics[width=3in]{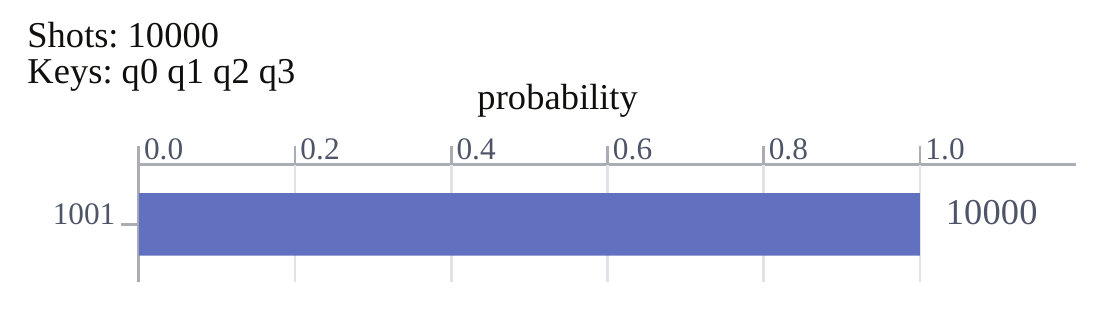}
\caption{\label{measure1001} The sampling results of the 4-qubit DEGA (target string $\tau = 1001$).}
\end{figure}

The total number of quantum gates is 28 and the circuit depth is 9.

The sampling results demonstrate that $\tau=1001$ is obtained exactly after measurement. 

If the Grover's algorithm or the algorithm by Long is employed to search 1001, we can build the completed circuits (see Figure \ref{grover1001} and Figure \ref{long1001}) respectively. 
By sampling the circuits 10,000 times respectively, the results can be found in Figure \ref{grovermeasure1001} and Figure \ref{longmeasure1001}. The number of quantum gates required by two algorithms is 70, and the circuit depth is 25.

\begin{figure}[H]
\centering
\includegraphics[width=4.5in]{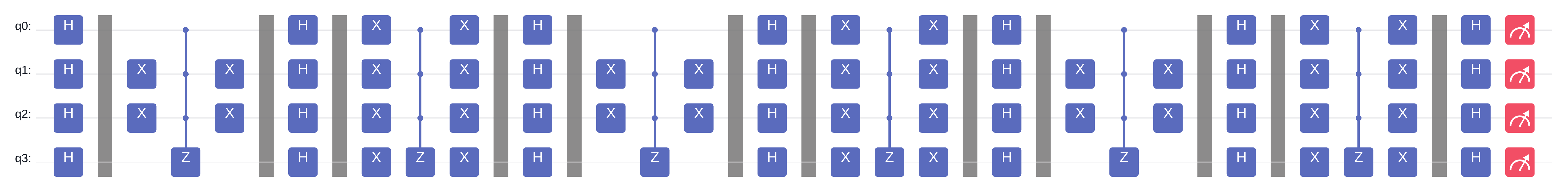}
\caption{\label{grover1001} The quantum circuit of the 4-qubit Grover's algorithm (target string $\tau = 1001$).}
%
\centering
\includegraphics[width=4.5in]{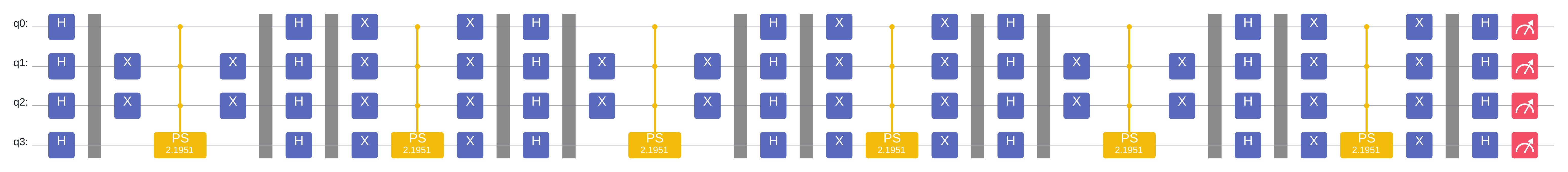}
\caption{\label{long1001} The quantum circuit of the 4-qubit algorithm by Long (target string $\tau = 1001$). The parameter in the circuit is $\phi=2\arcsin\left(\sin\left(\frac{\pi}{4J+6}\right) / \sin \theta\right) =2.195057699090115\approx 2.1951$, where $J=\lfloor(\pi/2-\theta)/(2\theta)\rfloor$ and $\theta= \arcsin {\sqrt{1/2^4}}$.}
\end{figure}

\begin{figure}[H]
\centering
\begin{minipage}{0.49\textwidth}
\centering
\includegraphics[width=2in]{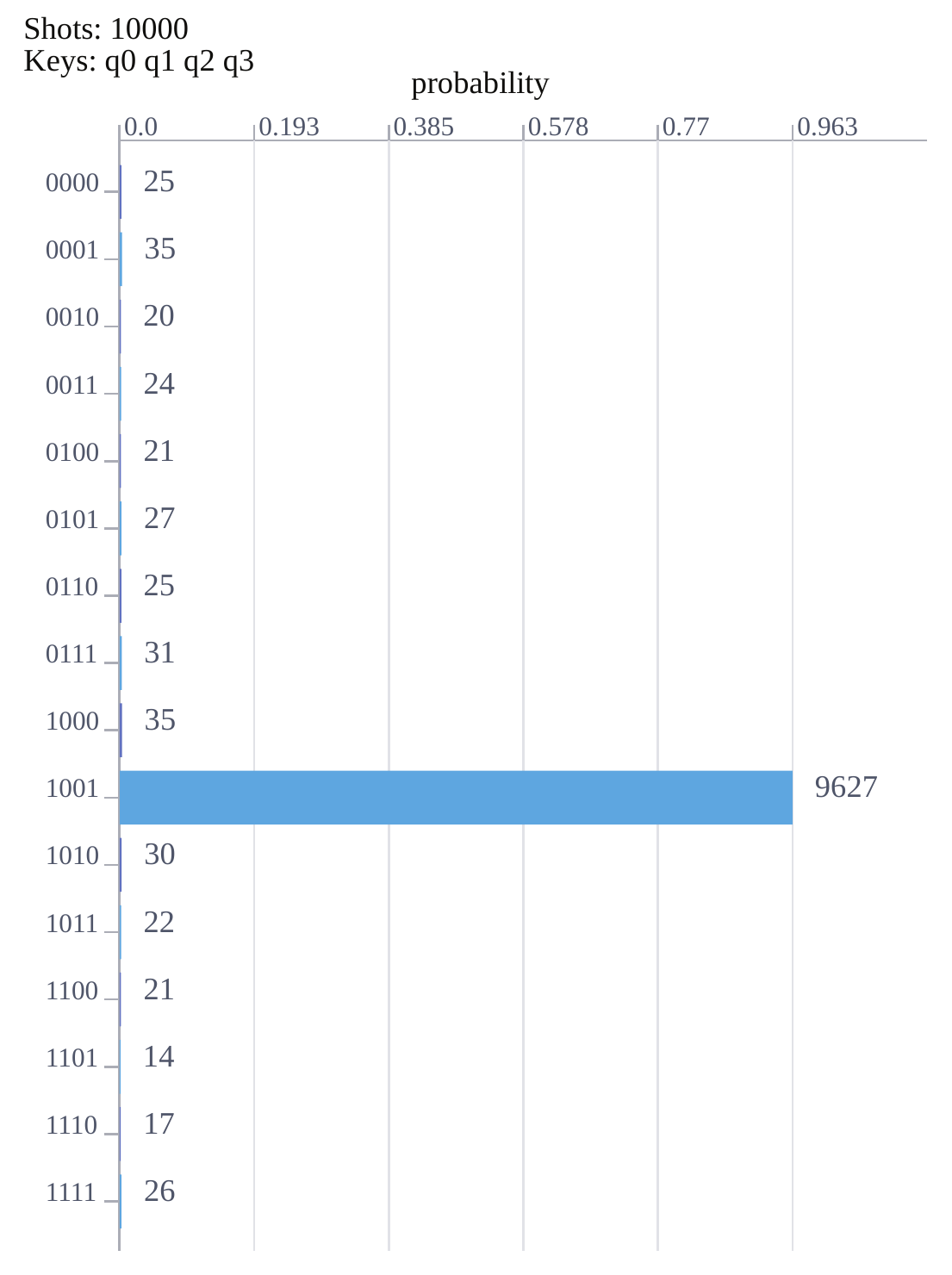}
\caption{\label{grovermeasure1001} The sampling results of the 4-qubit Grover's algorithm (target string $\tau = 1001$).}
\end{minipage}
\begin{minipage}{0.49\textwidth}
\centering
\includegraphics[width=2.5in]{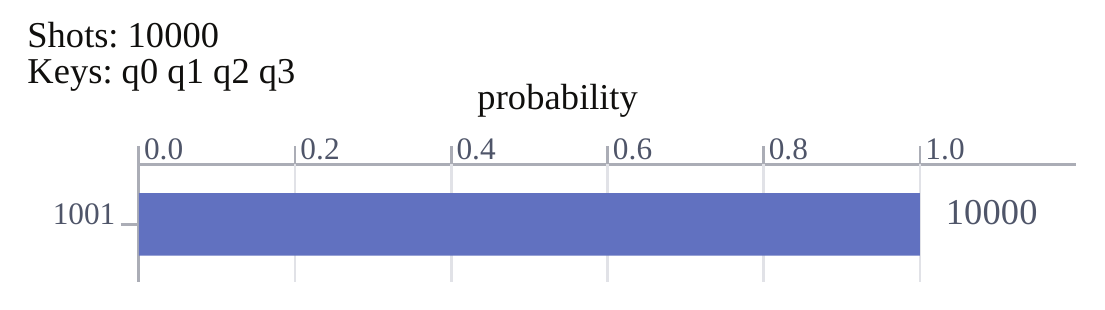}
\caption{\label{longmeasure1001} The sampling results of the 4-qubit algorithm by Long (target string $\tau = 1001$).}
\end{minipage}
\end{figure}

It can be found that the probability of the Grover's algorithm acquiring target string $\tau=1001$ is 0.9627, which is not exact, while the algorithm by Long can accurately obtain $\tau=1001$.

\subsection{The 5-qubit DEGA, the target string $\tau = 01001$}\label{dega5example}
Let Boolean function $f: \{0,1\}^5 \rightarrow \{0,1\}$. Suppose
\begin{eqnarray}
f(x)=
\begin{cases}
1,x=01001,\\
0,x\neq 01001 ,
\end{cases}
\end{eqnarray}
where $x\in\{0,1\}^5$ and $\tau = 01001$ is the target string. When $n=5$, we can decomposes the original search problem into $\lfloor n/2 \rfloor=2$ parts to solve. To be specific, we choose last $2$ bits in $x$ to divide $f$, then we can get $2^2=4$ subfunctions $f_{{0},j}: \{0,1\}^3 \rightarrow \{0,1\}$:
\begin{eqnarray}
f_{{0},0}(m_0)&=&f( m_0 00), \\
f_{{0},1}(m_0)&=&f( m_0 01),\\
f_{{0},2}(m_0)&=&f( m_0 10),\\
f_{{0},3}(m_0)&=&f( m_0 11),
\end{eqnarray}
where $m_0\in\{0,1\}^{3}$ and $j\in\{0,1,2,3\}$. Obviously, $f_{{0},0}(m_0)=f_{{0},2}(m_0)=f_{{0},3}(m_0)\equiv0$, and
\begin{eqnarray}
f_{{0},1}(m_0)=
\begin{cases}
1,m_0=010,\\
0,m_0\neq 010,
\end{cases}
\end{eqnarray}
where $m_0\in\{0,1\}^3$ and $010$ is the target substring.

Afterwards, we generate a new function $g_0: \{0,1\}^3 \rightarrow \{0,1\}$ in terms of above four subfunctions,
\begin{eqnarray}
g_{0}(m_0)=\text{OR} \left(f_{{0},0}(m_0), f_{{0},1}(m_0), f_{{0},2}(m_0), f_{{0},3}(m_0)\right)=
\begin{cases}
1,m_0=010,\\
0,m_0\neq 010,
\end{cases}
\end{eqnarray}
where $m_0\in\{0,1\}^{3}$.

Similarly, we choose first $3$ bits in $x$ to divide $f$, then we can get $2^3=8$ subfunctions $f_{{1},j}: \{0,1\}^2 \rightarrow \{0,1\}$:
\begin{eqnarray}
f_{{1},0}(m_1)&=f( 000 m_1 ), f_{{1},4}(m_1)&=f( 100 m_1 ),\\
f_{{1},1}(m_1)&=f( 001 m_1 ), f_{{1},5}(m_1)&=f( 101 m_1 ),\\
f_{{1},2}(m_1)&=f( 010 m_1 ), f_{{1},6}(m_1)&=f( 110 m_1 ),\\
f_{{1},3}(m_1)&=f( 011 m_1 ), f_{{1},7}(m_1)&=f( 111 m_1 ),
\end{eqnarray}
where $m_1\in\{0,1\}^{2}$ and $j\in\{0,1, \cdots,7\}$. Obviously, $f_{{1},0}(m_1)=f_{{1},1}(m_1)=f_{{1},3}(m_1)=f_{{1},4}(m_1)=f_{{1},5}(m_1)=f_{{1},6}(m_1)=f_{{1},7}(m_1)\equiv0$, and
\begin{eqnarray}
f_{{1},2}(m_1)=
\begin{cases}
1,m_1=01,\\
0,m_1\neq 01,
\end{cases}
\end{eqnarray}
where $m_1\in\{0,1\}^2$ and $01$ is the target substring.

Afterwards, we generate a new function $g_1: \{0,1\}^2 \rightarrow \{0,1\}$ in terms of above eight subfunctions,
\begin{eqnarray}
g_{1}(m_1)=\text{OR} \left(f_{{1},0}(m_1), f_{{1},1}(m_1), \cdots , f_{{1},7}(m_1)\right)=
\begin{cases}
1,m_1=01,\\
0,m_1\neq 01,
\end{cases}
\end{eqnarray}
where $m_1\in\{0,1\}^{2}$.

Next, we can build the completed circuit (see Figure \ref{5bitdega}). By sampling the circuit 10,000 times, the results can be found in Figure \ref{measure01001}. The total number of quantum gates is 53 and the circuit depth is 17.



\begin{figure}[H]
\centering
\includegraphics[width=5in]{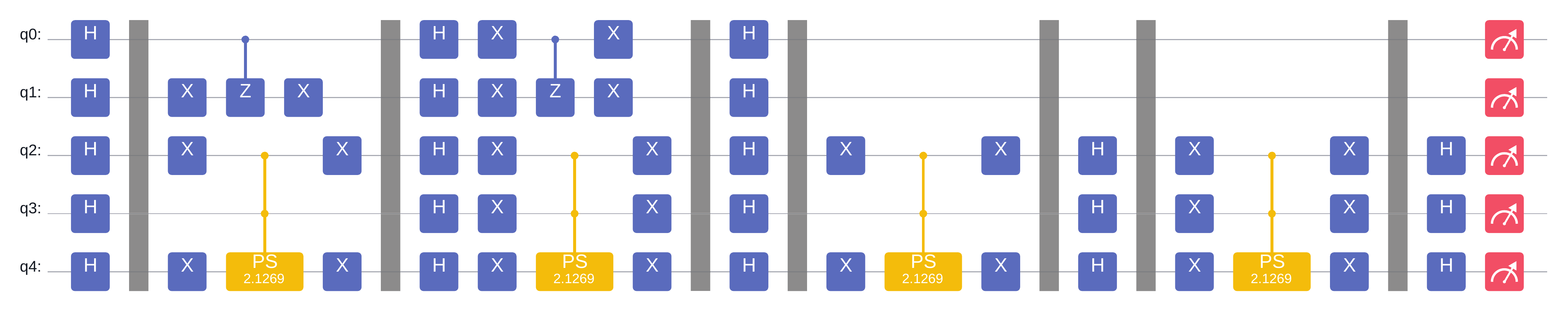}
\caption{\label{5bitdega} The quantum circuit of the 5-qubit DEGA (target string $\tau = 01001$). The parameter in the circuit is $\phi=2\arcsin\left(\sin\left(\frac{\pi}{4J+6}\right) / \sin \theta\right) =2.1268800471555034\approx 2.1269$, where $J=\lfloor(\pi/2-\theta)/(2\theta)\rfloor$ and $\theta= \arcsin {\sqrt{1/2^3}}$.}
\centering
\includegraphics[width=3in]{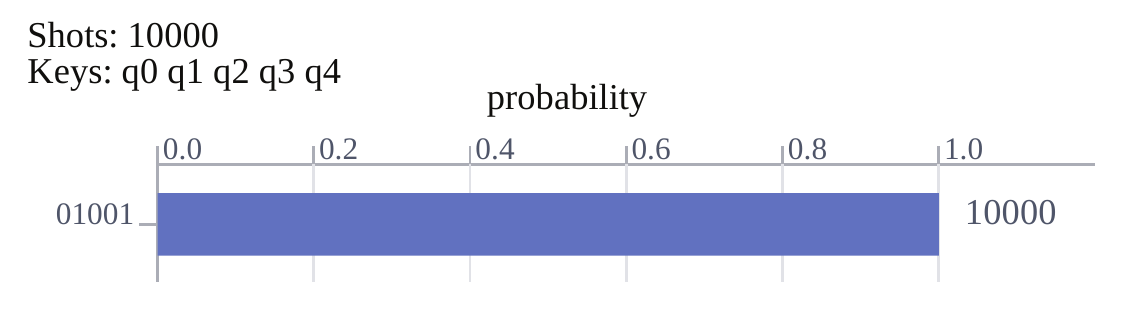}
\caption{\label{measure01001} The sampling results of the 5-qubit DEGA (target string $\tau = 01001$).}
\end{figure}

The sampling results demonstrate that $\tau=01001$ is obtained exactly after measurement. 

If  Grover's algorithm or the algorithm by Long is employed to search $\tau = 01001$, we can build the completed circuits (see Figure \ref{grover01001} and Figure \ref{long01001}) respectively. 
By sampling the circuits 10,000 times respectively, the results can be found in Figure \ref{grovermeasure01001} and Figure \ref{longmeasure01001}.

\begin{figure}[H]
\centering
\includegraphics[width=6in]{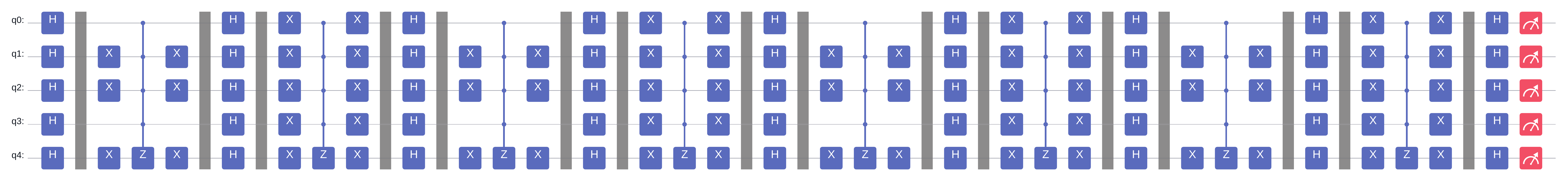}
\caption{\label{grover01001} The quantum circuit of the 5-qubit Grover's algorithm (target string $\tau = 01001$).}
\end{figure}

\begin{figure}[H]
\centering
\includegraphics[width=6in]{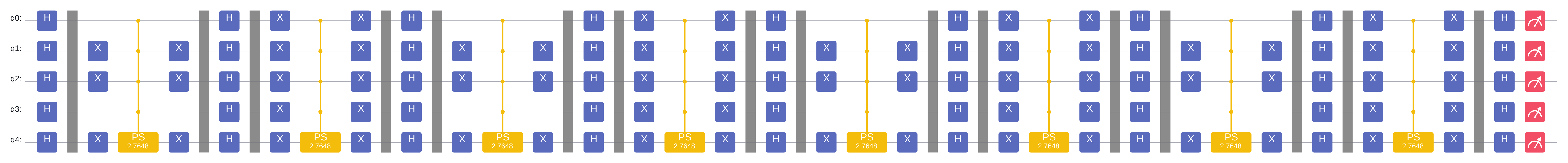}
\caption{\label{long01001} The quantum circuit of the 5-qubit algorithm by Long (target string $\tau = 01001$). The parameter in the circuit is $\phi=2\arcsin\left(\sin\left(\frac{\pi}{4J+6}\right) / \sin \theta\right) =2.764763603060391\approx 2.7648$, where $J=\lfloor(\pi/2-\theta)/(2\theta)\rfloor$ and $\theta= \arcsin {\sqrt{1/2^5}}$.}
\end{figure}

\begin{figure}[H]
\centering
\begin{minipage}{0.49\textwidth}
\centering
\includegraphics[width=2.6in]{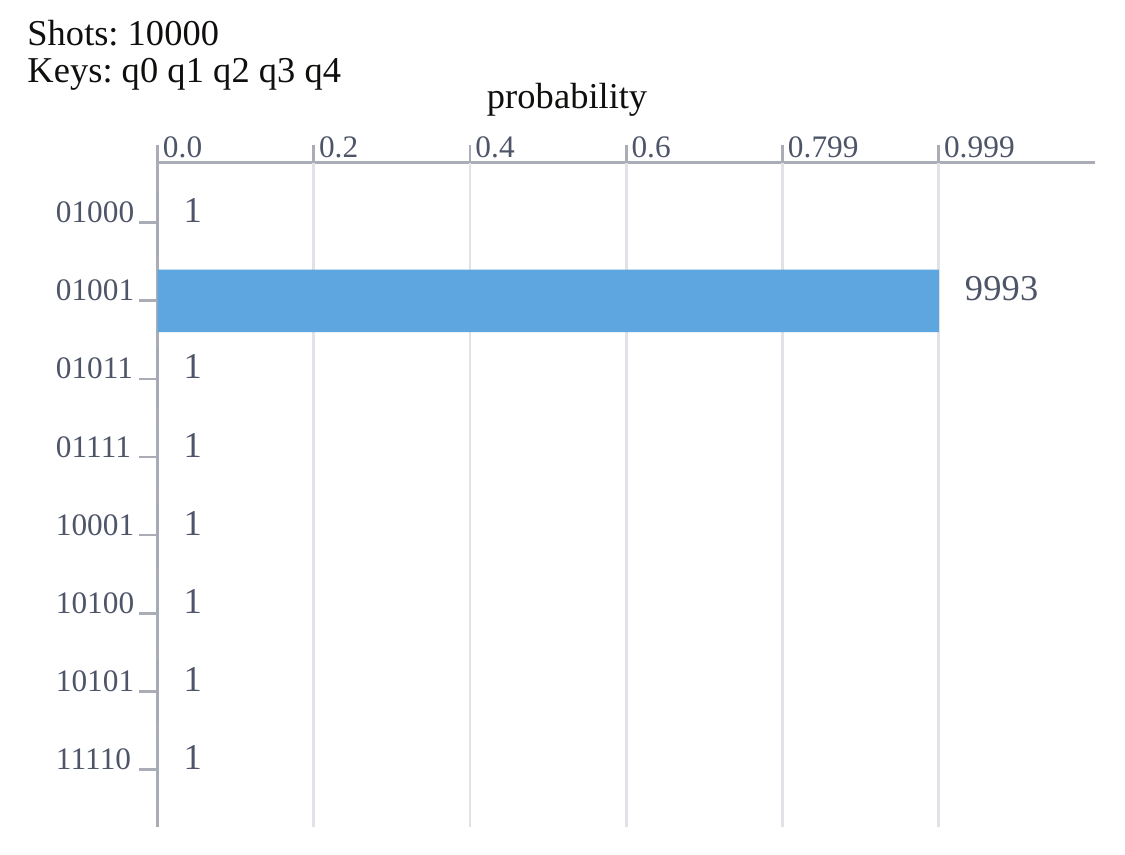}
\caption{\label{grovermeasure01001} The sampling results of the 5-qubit Grover's algorithm (target string $\tau = 01001$).}
\end{minipage}
\begin{minipage}{0.49\textwidth}
\centering
\includegraphics[width=2.6in]{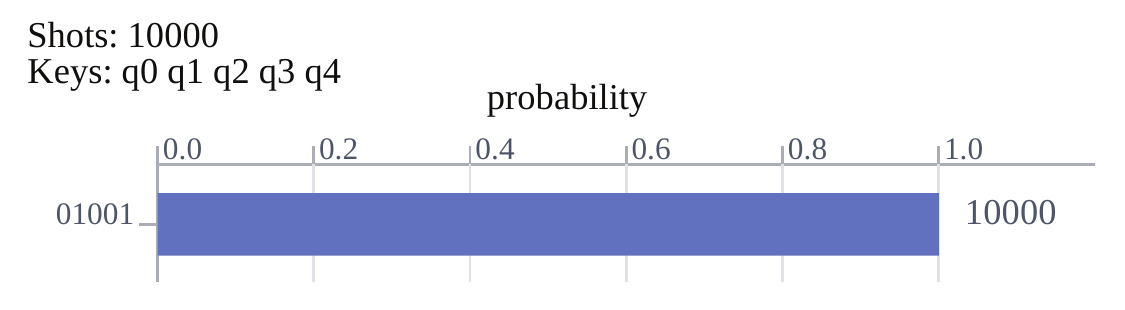}
\caption{\label{longmeasure01001} The sampling results of the 5-qubit algorithm by Long (target string $\tau = 01001$).}
\end{minipage}
\end{figure}

The number of quantum gates required by two algorithms is 117, and the circuit depths are 33.

The algorithm by Long and our DEGA can achieve exact search, while  Grover's algorithm can obtain target strings with high probability (see Figure \ref{probability}). 

\begin{figure}[H]
\centering
\includegraphics[width=3in]{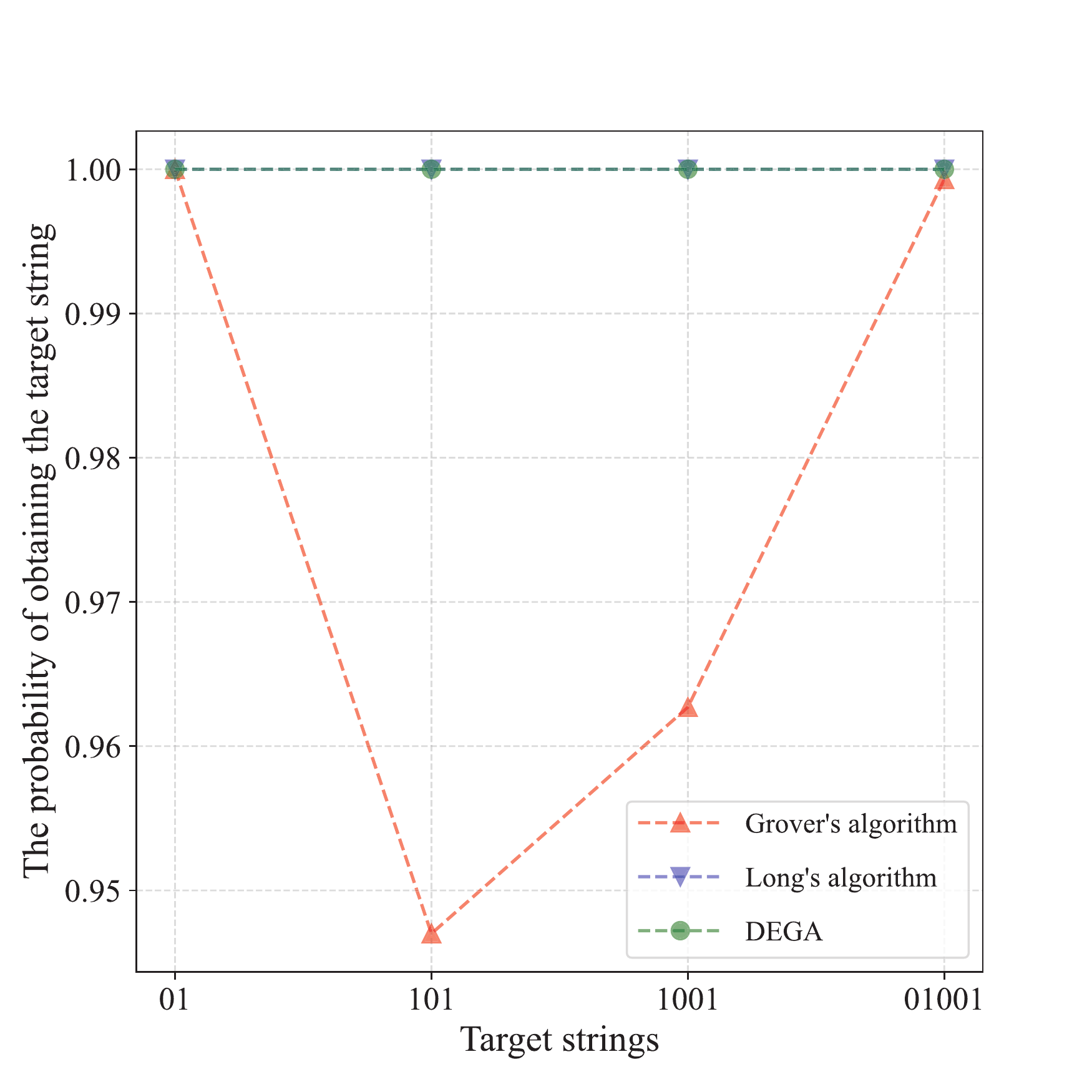}
\caption{\label{probability} The probability of obtaining the target string by different algorithms.}
\end{figure}

In addition, the number of quantum gates and circuit depth required by Grover's and the algorithm by Long are the same, which will be deepened as $n$ increases. However, our DEGA requires fewer quantum gates (see Figure \ref{quantumgatesnumber}) and shallower circuit depth (Figure \ref{circuitdepth}). The circuit depth of our DEGA only depends on the parity of $n$, and it is not deepened as $n$ increases.


\begin{figure}[H]
\centering
\begin{minipage}{0.49\textwidth}
\centering
\includegraphics[width=3in]{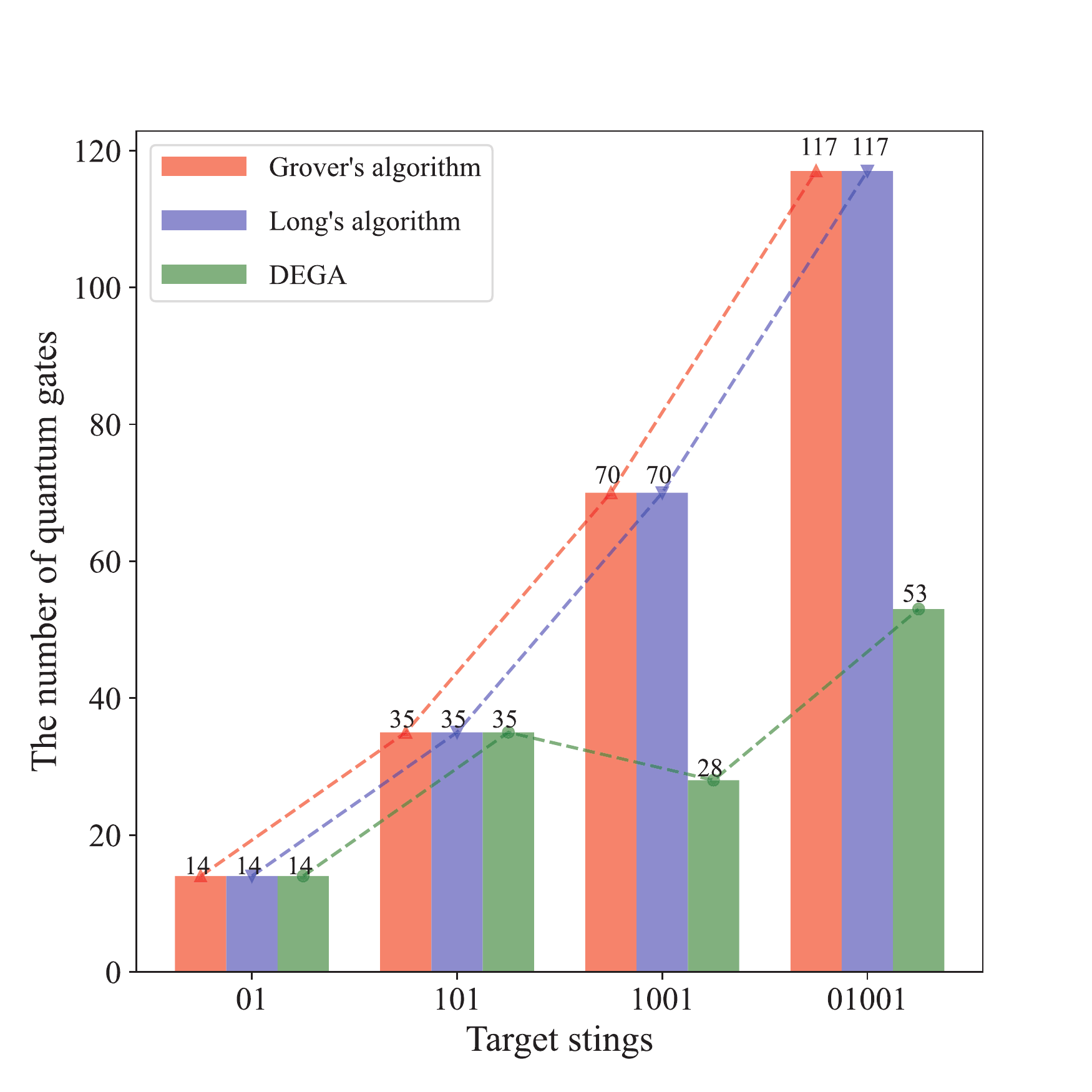}
\caption{\label{quantumgatesnumber} The number of quantum gates required by different algorithms..}
\end{minipage}
\begin{minipage}{0.49\textwidth}
\centering
\includegraphics[width=3in]{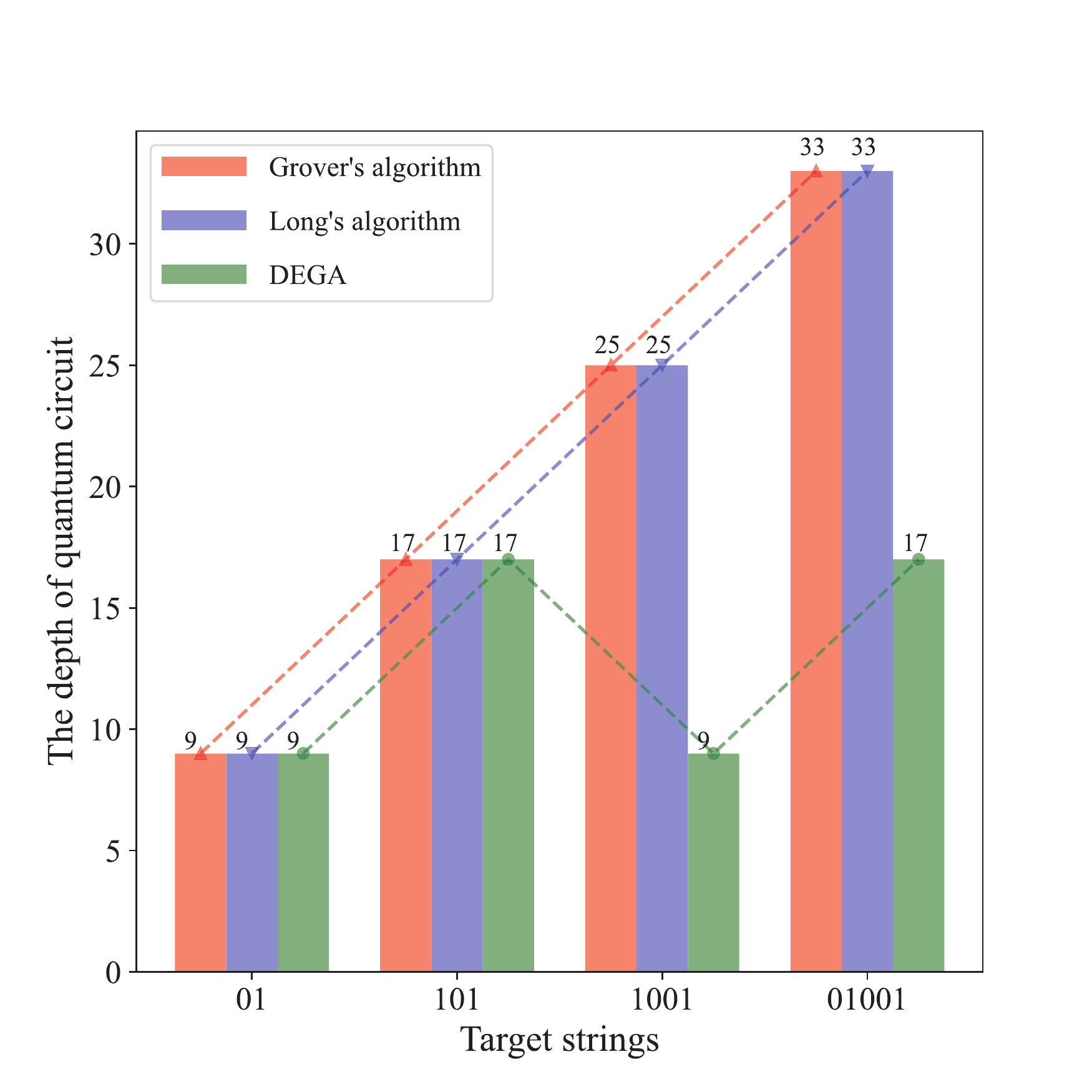}
\caption{\label{circuitdepth} The depth of quantum circuit of different algorithms.}
\end{minipage}
\end{figure}

\subsection{The depolarization channel}\label{channel}
In the above experiment, the quantum circuits we built ignore the influence of noise. However, to run quantum algorithms on real quantum computers, there is a certain error in the operation of each type of quantum gate. In this subsection, we will simulate the 6-qubit DBVA and 5-qubit DEGA running in the depolarization channel that is a crucial type of quantum noise, it further illustrates that distributed quantum algorithm has the superiority of resisting noise.

In the depolarization channel \cite{RefNielsen2002}, a single qubit will be depolarized with probability $p$. which means it is replaced by complete mixed state $I/2$. Also, it will remain unchanged with the probability of $1-p$. After this noise, the state of the quantum system  becomes
\begin{eqnarray}\label{noise2}
\varepsilon(\rho) = (1-p)\rho + p\left(\frac{I}{2}\right),
\end{eqnarray}
where $\rho$ represents the initial quantum state. For arbitrary state $\rho$, we have
\begin{eqnarray}\label{I2}
\frac{I}{2} = \frac{\rho+X\rho X+ Y\rho Y+ Z\rho Z}{4},
\end{eqnarray}
where $X,Y$ and $Z$ are Pauli operators. Inserting Eq.~\eqref{I2} into Eq.~\eqref{noise2} gives
\begin{eqnarray}\label{noise2com}
\varepsilon(\rho) = \left(1-\frac{3p}{4}\right)\rho + \frac{p}{4}\left( X\rho X+ Y\rho Y+ Z\rho Z \right).
\end{eqnarray}
For convenience, the depolarization channel is sometimes expressed as
\begin{eqnarray}\label{noise2com2}
\varepsilon(\rho) = \left(1-p\right)\rho + \frac{p}{3}\left( X\rho X+ Y\rho Y+ Z\rho Z \right),
\end{eqnarray}
which means the state $\rho$ will remain unchanged with the probability $1-p$, and apply the Pauli operators $X,Y$ or $Z$ with probability $p/3$ for each operators.

Furthermore, the depolarizing channel can be extended to a $d$-dimensional ($d > 2$) quantum system. Similarly, the depolarizing channel of a $d$-dimensional quantum system again replaces the quantum system with tcomplete mixed state $I/d$ with probability $p$, and remains unchanged with the probability of $1-p$. The corresponding quantum operation is
\begin{eqnarray}\label{noised}
\varepsilon(\rho) = (1-p)\rho + p\left(\frac{I}{d}\right).
\end{eqnarray}

\subsubsection{6-qubit DBVA, the hidden string $s=001011$}
In this subsection, we simulate the quantum algorithms running in the depolarization channel by adding noise behind each quantum gate. That is, we build the quantum circuits of Figure \ref{fullcir}, Figure \ref{fullciropti}, Figure \ref{dbva2ciropti} and Figure \ref{dbva3ciropti} in the depolarization channel as shown in Figure \ref{cirbvnoisecom} to Figure \ref{cirdbva3noise}. 
Set the noise parameter $p=0.03$.

Sample the above four circuits in the depolarizing channel 10,000 times, respectively. In order to better reproduce the experimental results, we set the random seeds of simulator and sampling both being 42, the noise parameter $p$ being $0.03$. At last, the results can be found in Figure \ref{cirbvnoiseoutput} to Figure \ref{cirdbva2noiseoutput}, respectively.

\begin{figure}[H]
\centering
\begin{minipage}{1\textwidth}
\centering
\includegraphics[width=1\textwidth]{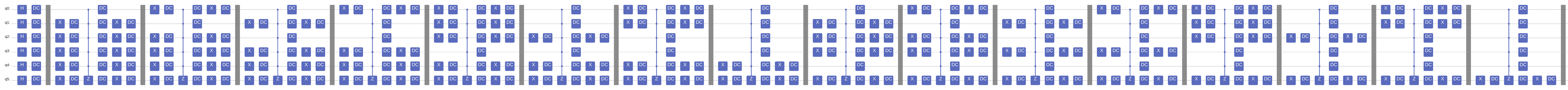}
\end{minipage}
\\
\begin{minipage}{1\textwidth}
\centering
\includegraphics[width=1\textwidth]{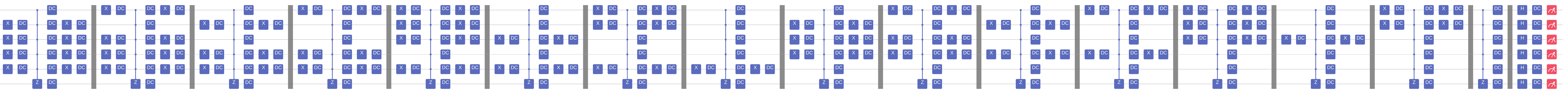}
\caption{\label{cirbvnoisecom} The quantum circuit of the BV algorithm in the depolarizing channel. Since circuit is too long, divide the whole quantum circuit into upper and lower parts. A gray rectangle represents a barrier gate, which is used to separate quantum gates. And DC represents the depolarization channel.}
\end{minipage}
\end{figure}

\begin{figure}[H]
\centering
\includegraphics[width=1\textwidth]{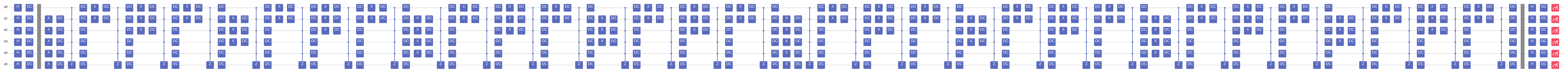}
\caption{\label{cirbvoptinoise.eps} The quantum circuit of the BV algorithm after optimization in the depolarizing channel.}
\end{figure}

\begin{figure}[H]
\centering
\includegraphics[width=0.9\textwidth]{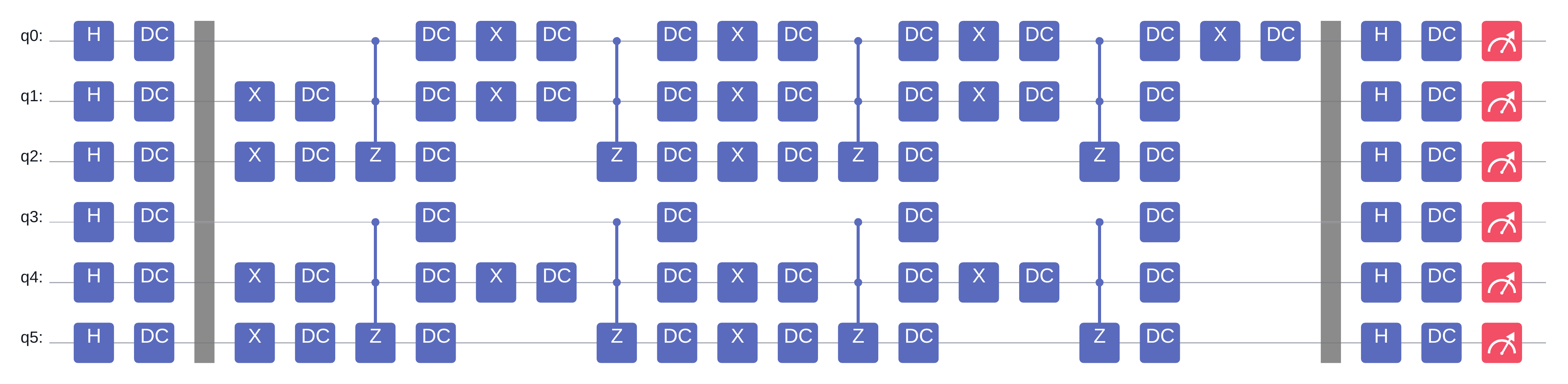}
\caption{\label{cirdbva2noise} The quantum circuit of DBVA with two computing nodes after optimization in the depolarizing channel.}
\end{figure}
\begin{figure}[H]
\centering
\includegraphics[width=0.8\textwidth]{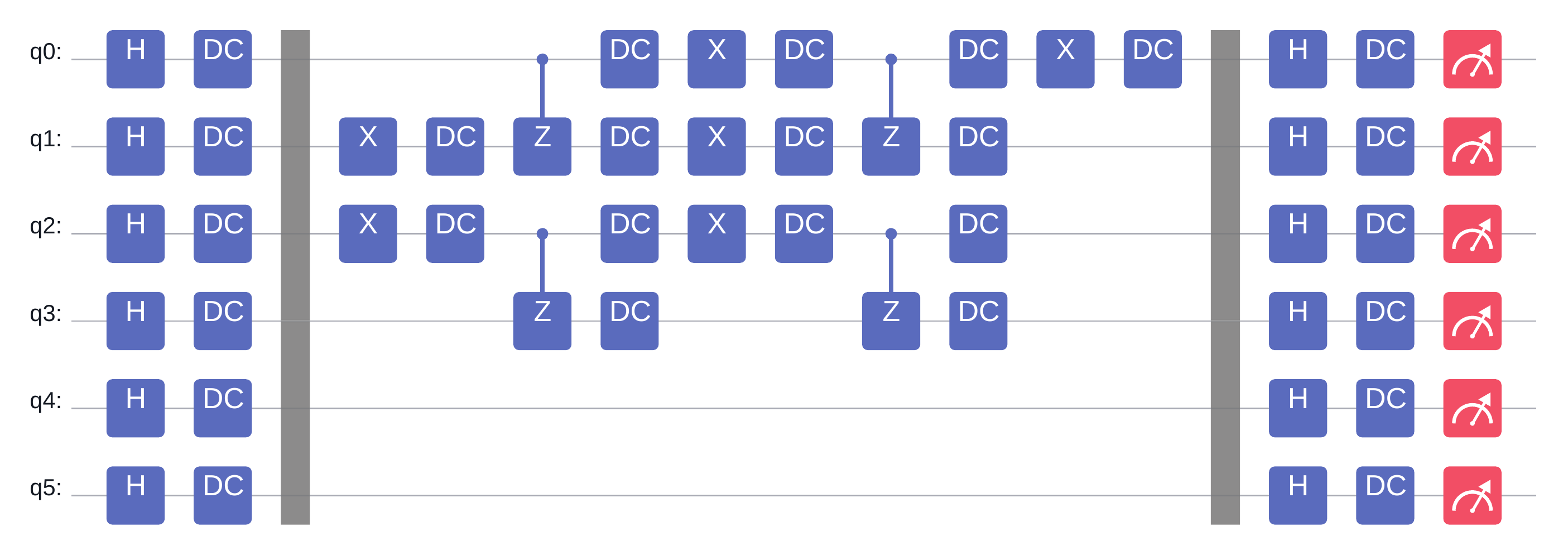}
\caption{\label{cirdbva3noise} The quantum circuit of DBVA with $t=3$ computing nodes after optimization in the depolarizing channel.}
\end{figure}



\begin{figure}[h]
\centering
\begin{minipage}{0.24\textwidth}
\centering
\includegraphics[width=1in]{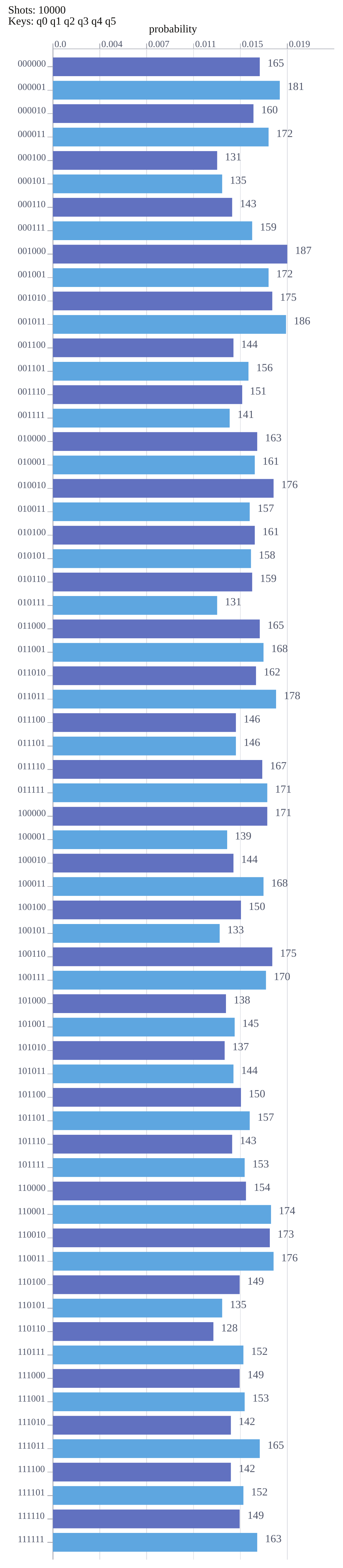}
\caption{\label{cirbvnoiseoutput}The sampling results of the BV algorithm in the depolarizing channel. The noise parameter $p=0.03$.}
\end{minipage}
\begin{minipage}{0.24\textwidth}
\centering
\includegraphics[width=1.02in]{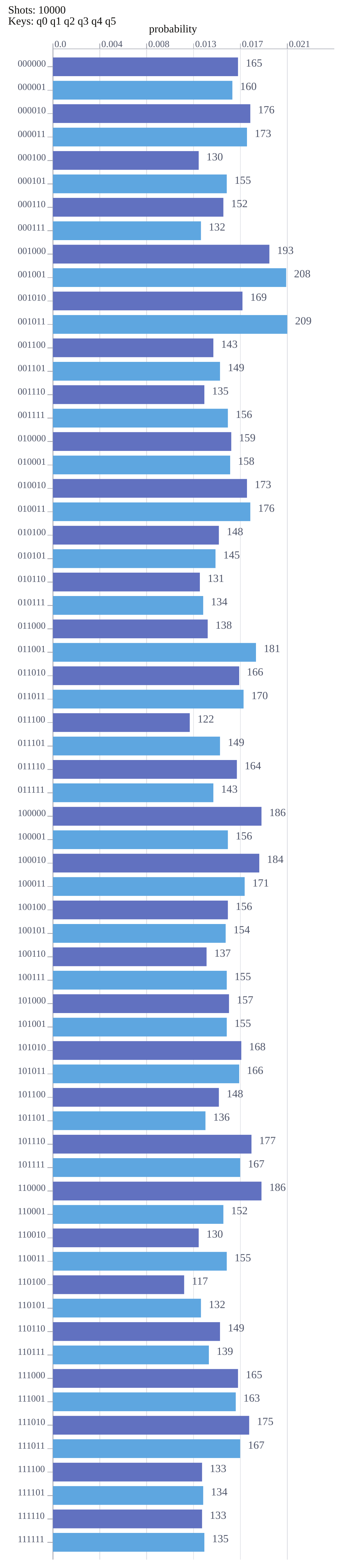}
\caption{\label{cirbvoptinoiseoutput}The sampling results of the BV algorithm after optimization in the depolarizing channel. The noise parameter $p=0.03$.}
\end{minipage}
\begin{minipage}{0.24\textwidth}
\centering
\includegraphics[width=1.05in]{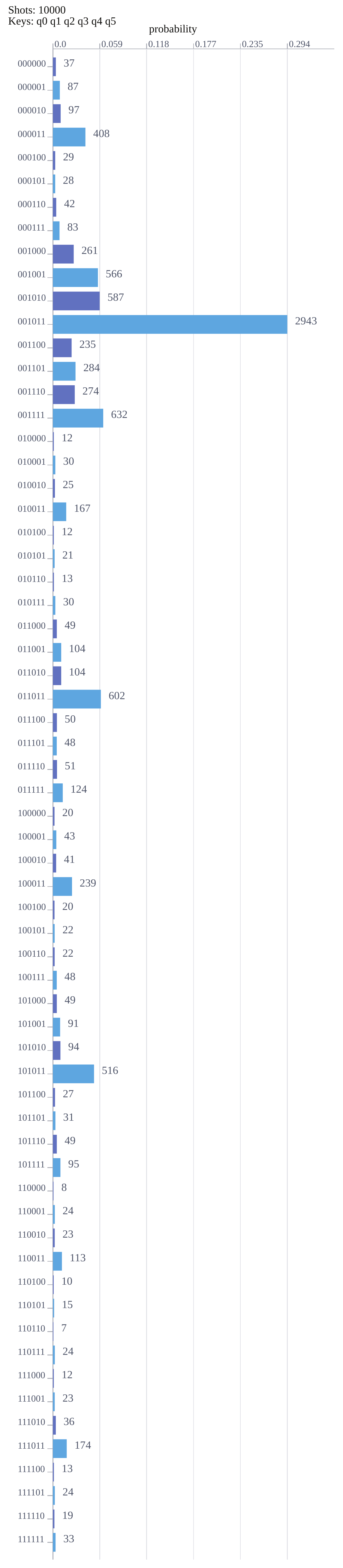}
\caption{\label{cirdbva2noiseoutput}The sampling results of DBVA with two computing nodes after optimization in the depolarizing channel. The noise parameter $p=0.03$.}
\end{minipage}
\begin{minipage}{0.24\textwidth}
\centering
\includegraphics[width=1.38in]{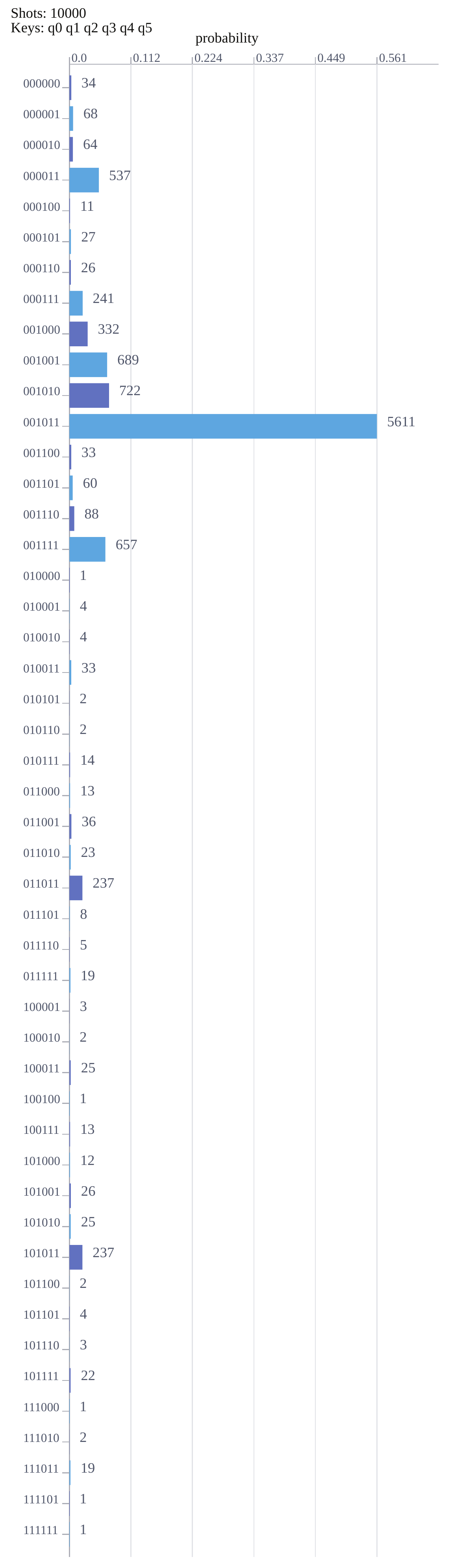}
\caption{\label{cirdbva2noiseoutput}The sampling results of DBVA with $t=3$ computing nodes after optimization in the depolarizing channel. The noise parameter $p=0.03$.}
\end{minipage}
\end{figure}

The probability of obtaining the hidden string $s=001011$ are 0.0186, 0.0209, 0.2943 and 0.5611, respectively. 
However, the desired results were drown out by the channel noise (see Figure \ref{cirbvnoiseoutput} and Figure \ref{cirbvoptinoiseoutput}). Furthermore, the probability in our DBVA are higher than in the optimized BV algorithm (0.2943 $>$ 0.0209, 0.5611 $>$ 0.0209). 
When $t$ is greater, the probability is higher (0.5611 $>$ 0.2943).

We already know that there is a certain error in the operation of each type of quantum gate in the depolarizing channel. In other words, the more quantum gates are, the greater the cumulative noise of the quantum circuit is, and the lower the probability of obtaining the target string is. Here, the number of quantum gates required by our DBVA are obviously less than that required by the BV algorithm. 
At present, to run quantum algorithms on real quantum computers, it is usually necessary to decompose the multi-qubit quantum gates into single-qubit gates and double-qubit gates. If we further consider the decomposition of $C^{n-1}Z$ gates into single-qubit gates and double-qubit gates, the difference between the number of quantum gates required by the BV algorithm and our DBVA will be larger, and the fidelity of theoretical and experimental final states will be lower. However, we will not go into too much details here.

As the number of qubits increases, the depth of quantum circuit or noise parameter $p$ increases, the errors caused by the noise will further accumulate. 
Fortunately, our DBVA needs fewer quantum gates, resulting in shallower quantum circuits. Hence, our DBVA has higher noise resistance.

\subsubsection{5-qubit DEGA, the target string $\tau = 01001$}
We can rebuild the quantum circuits of the Grover's algorithm, the algorithm by Long and our DEGA in the depolarization channel (see Figure \ref{grover01001noise} to Figure \ref{5bitdeganoise}). 
Set the noise parameter $p=0.01$ and DC represents the depolarization channel.

\begin{figure}[H]
\centering
\includegraphics[width=6.2in]{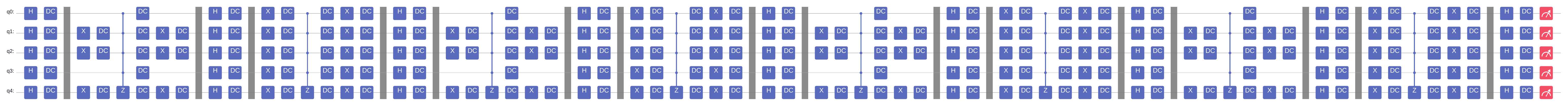}
\caption{\label{grover01001noise} The quantum circuit of the 5-qubit Grover's algorithm (target string $\tau = 01001$) in the depolarizing channel.}
\end{figure}

\begin{figure}[H]
\centering
\includegraphics[width=6.2in]{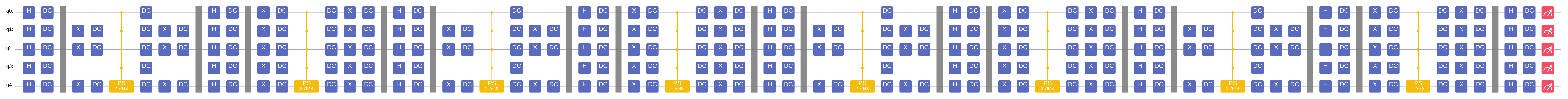}
\caption{\label{long01001noise} The quantum circuit of the 5-qubit algorithm by Long (target string $\tau = 01001$) in the depolarizing channel. The parameter in the circuit is $\phi=2\arcsin\left(\sin\left(\frac{\pi}{4J+6}\right) / \sin \theta\right) =2.764763603060391\approx 2.7648$, where $J=\lfloor(\pi/2-\theta)/(2\theta)\rfloor$ and $\theta= \arcsin {\sqrt{1/2^5}}$.}
\end{figure}

\begin{figure}[H]
\centering
\includegraphics[width=6in]{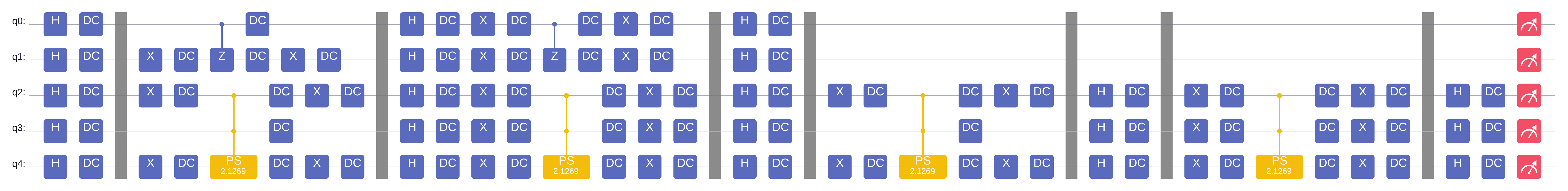}
\caption{\label{5bitdeganoise} The quantum circuit of the 5-qubit DEGA (target string $\tau = 01001$) in the depolarizing channel. The parameter in the circuit is $\phi=2\arcsin\left(\sin\left(\frac{\pi}{4J+6}\right) / \sin \theta\right) =2.1268800471555034\approx 2.1269$, where $J=\lfloor(\pi/2-\theta)/(2\theta)\rfloor$ and $\theta= \arcsin {\sqrt{1/2^3}}$.}
\end{figure}

Sample the above three circuits 10,000 times, respectively. In order to better reproduce the experimental results, we set the random seeds of simulator and sampling both being 42. At last, the sampling results can be found in Figure \ref{measure01001grovernoise} to Figure \ref{measure01001noise}, respectively. It can be seen that due to the existence of noise, it will not achieve exact search.

\begin{figure}[H]
\centering
\begin{minipage}{0.32\textwidth}
\centering
\includegraphics[width=0.7\textwidth]{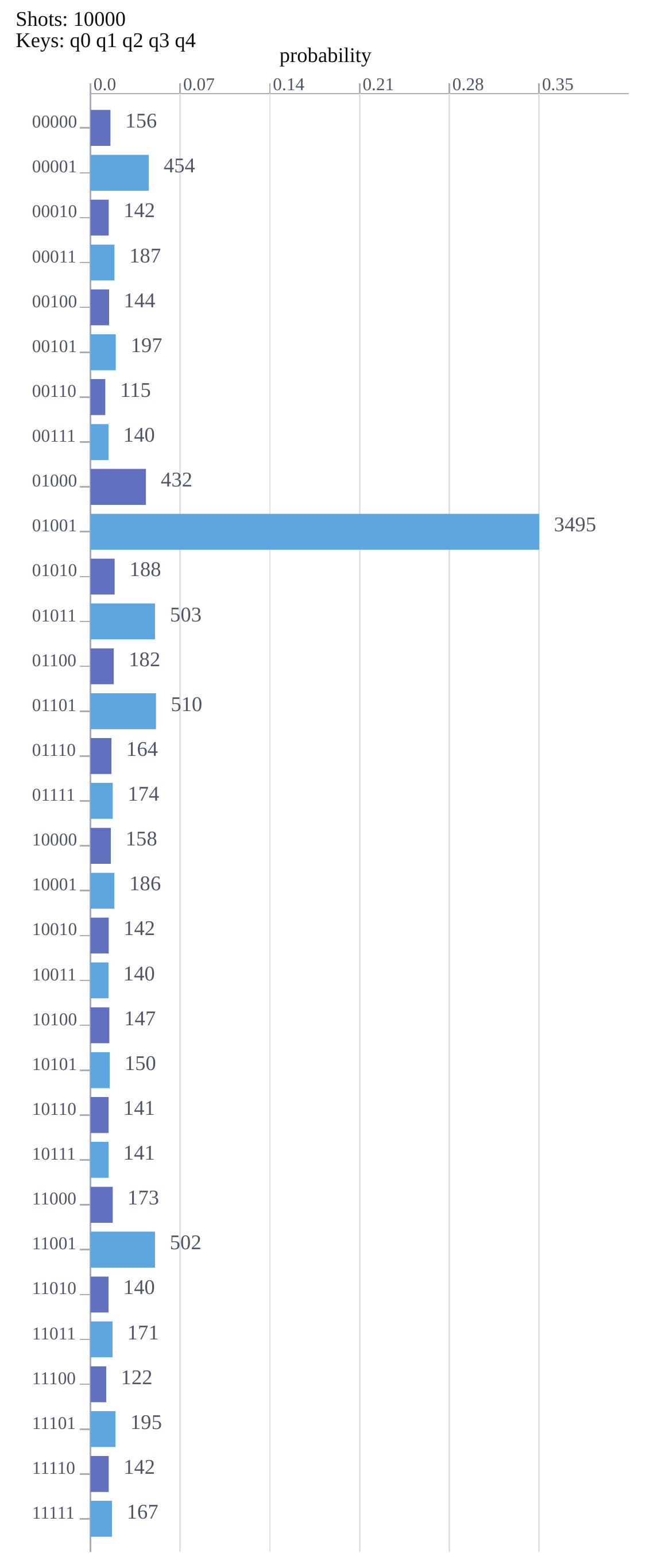}
\caption{\label{measure01001grovernoise} The sampling results of 5-qubit Grover's algorithm in the depolarizing channel. The noise parameter $p=0.01$.}
\end{minipage}
\begin{minipage}{0.32\textwidth}
\centering
\includegraphics[width=0.7\textwidth]{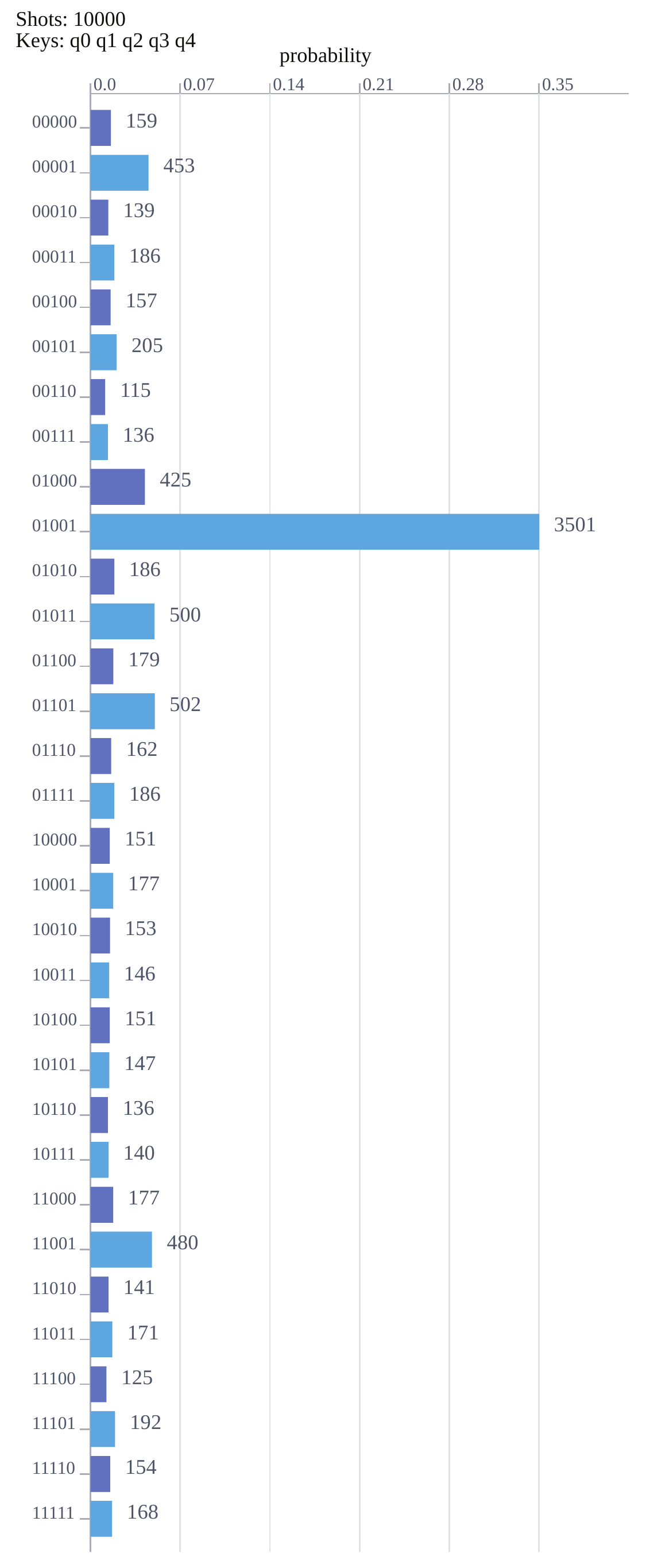}
\caption{\label{measure01001longnoise} The sampling results of 5-qubit the algorithm by Long in the depolarizing channel. The noise parameter $p=0.01$.}
\end{minipage}
\begin{minipage}{0.32\textwidth}
\centering
\includegraphics[width=0.7\textwidth]{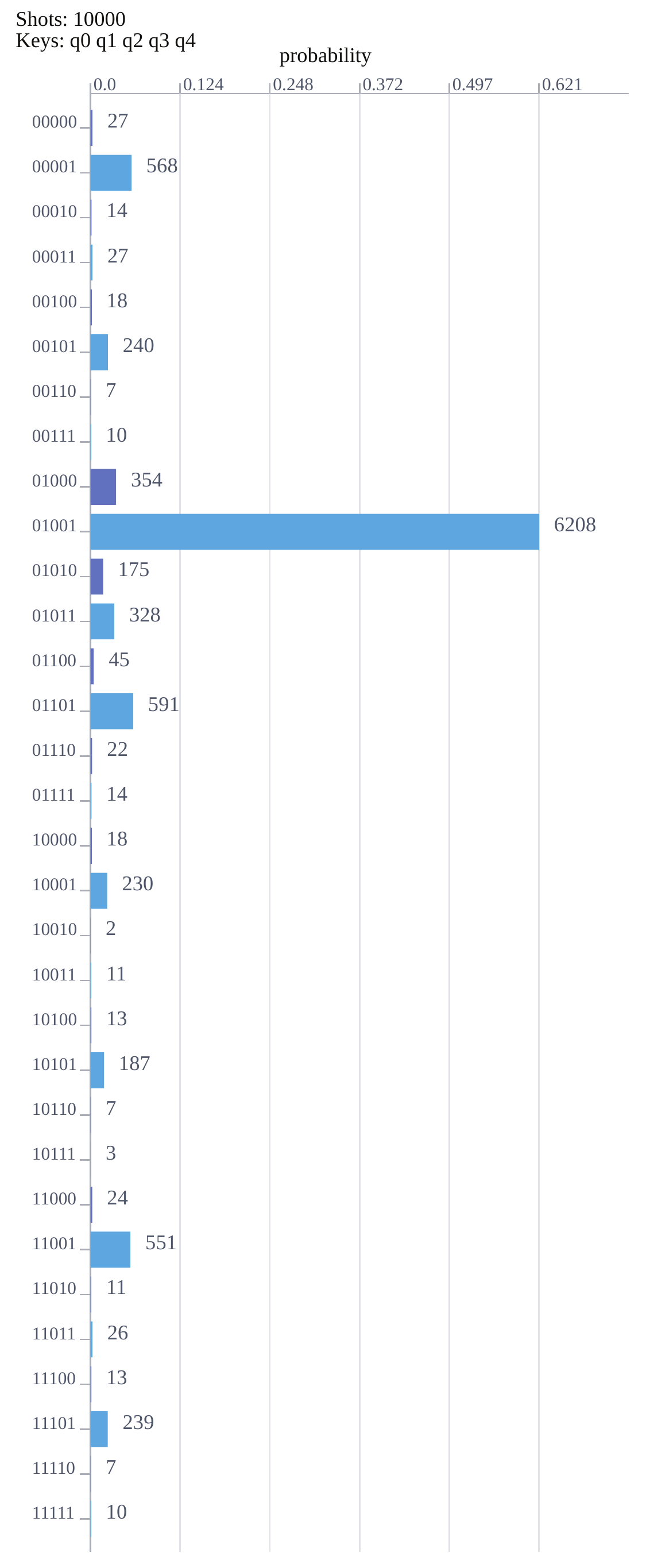}
\caption{\label{measure01001noise} The sampling results of 5-qubit DEGA in the depolarizing channel. The noise parameter $p=0.01$.}
\end{minipage}
\end{figure}

Furthermore, we set the noise parameter $p$ as $0, 0.01,\cdots, 0.09$. The statistical chart as shown in Figure \ref{probability01001}.

\begin{figure}[H]
\centering
\includegraphics[width=3in]{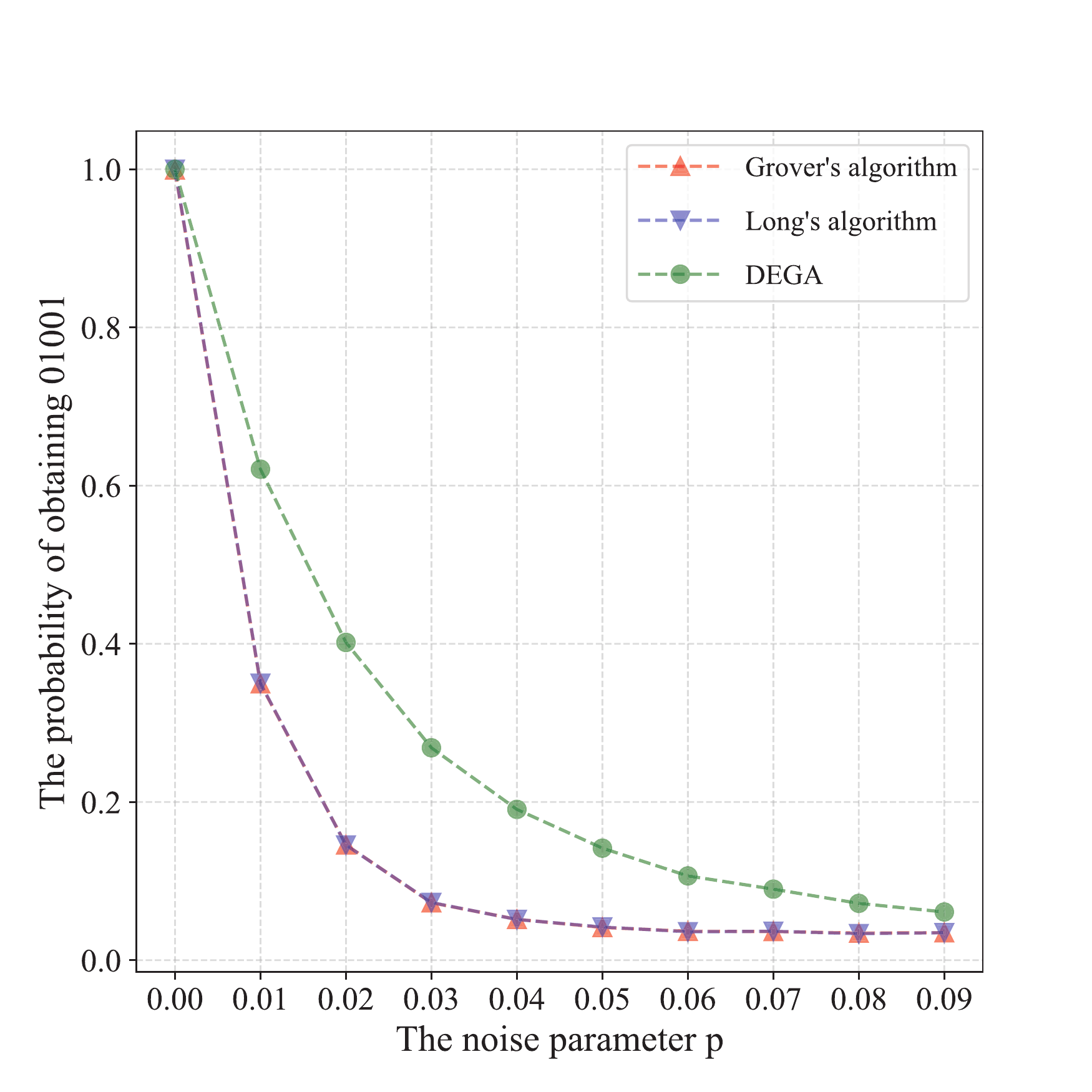}
\caption{\label{probability01001} The probability of obtaining $\tau=01001$ by different algorithms.}
\end{figure}

With the increase of noise parameter $p$, the probability of obtaining $\tau=01001$ decreases. In addition, 
the probabilities of getting target string by Grover's algorithm and the algorithm by Long are close. 
The probability of 
our DEGA is higher than that of Grover's algorithm and the algorithm by Long.

When $p=0.07$, the sampling results can be found in Figure \ref{measure01001grovernoise007} to Figure \ref{measure01001deganoise007}, respectively. 
The desired results are drown out by the channel noise in Grover's algorithm and the algorithm by Long. However, $\tau=01001$ can still be obtained with a higher probability than other states in our DEGA. Even if $p=0.09$, our DEGA still works (see Figure \ref{measure01001grovernoise009} to Figure \ref{measure01001deganoise009}). It can be found that although the number of qubits of our DEGA is the same as that of Grover's algorithm and the algorithm by Long, our circuit depth is shallower, so that our scheme is more resistant to the depolarization channel noise.

In short, by simulating the algorithms running in the depolarized channel, it further illustrates that distributed quantum algorithms have the superiority of resisting noise.

\begin{figure}[H]
\centering
\begin{minipage}{0.32\textwidth}
\centering
\includegraphics[width=1.45in]{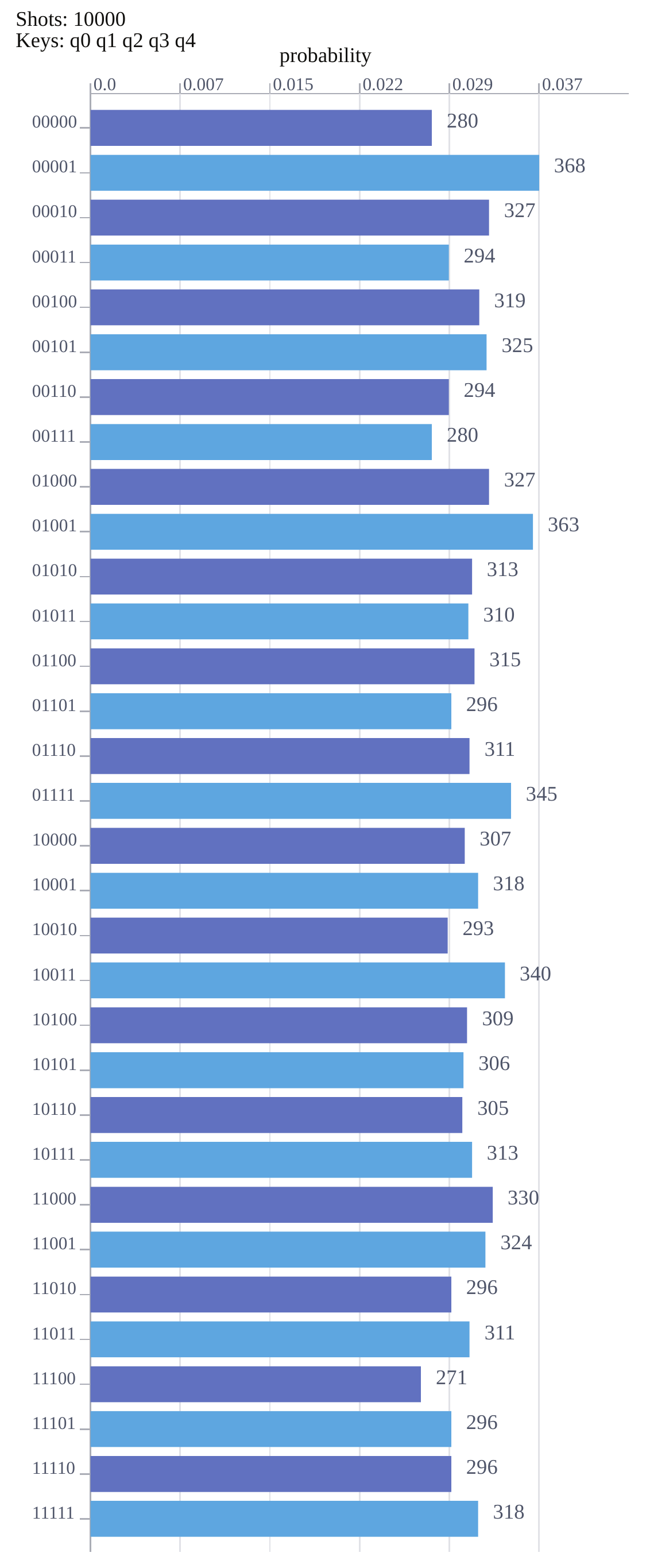}
\caption{\label{measure01001grovernoise007} The sampling results of the 5-qubit Grover's algorithm in the depolarizing channel. The noise parameter $p=0.07$.}
\end{minipage}
\begin{minipage}{0.32\textwidth}
\centering
\includegraphics[width=1.45in]{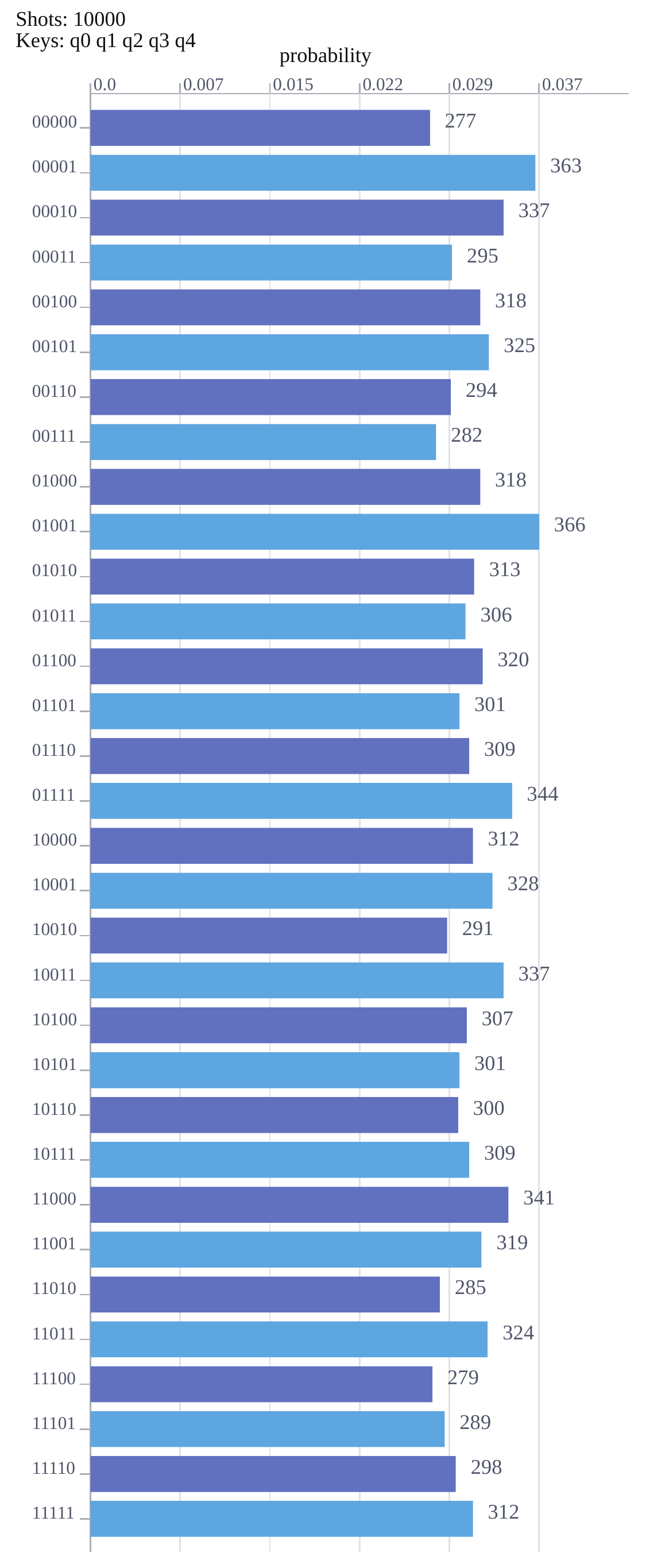}
\caption{\label{measure01001longnoise007} The sampling results of the 5-qubit algorithm by Long in the depolarizing channel. The noise parameter $p=0.07$.}
\end{minipage}
\begin{minipage}{0.32\textwidth}
\centering
\includegraphics[width=1.45in]{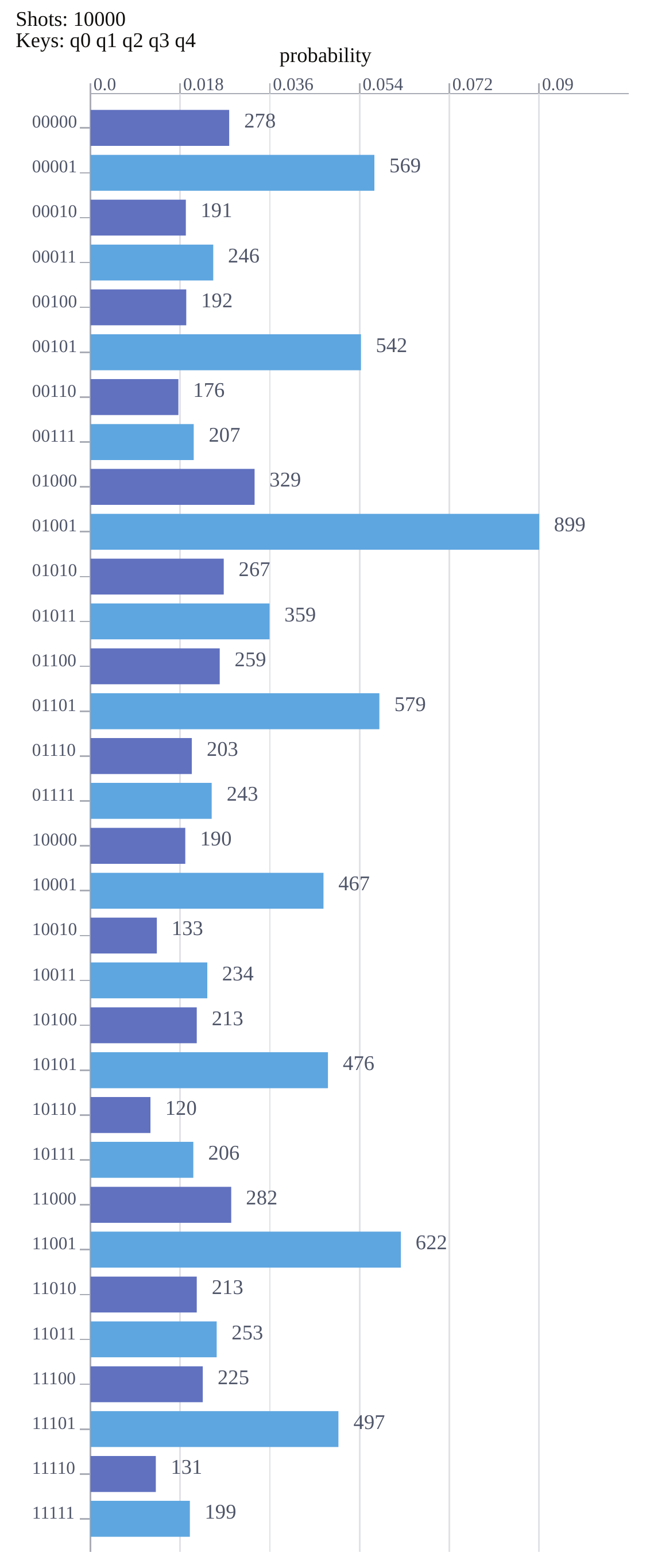}
\caption{\label{measure01001deganoise007} The sampling results of the 5-qubit DEGA in the depolarizing channel. The noise parameter $p=0.07$.}
\end{minipage}
\centering
\begin{minipage}{0.32\textwidth}
\centering
\includegraphics[width=1.45in]{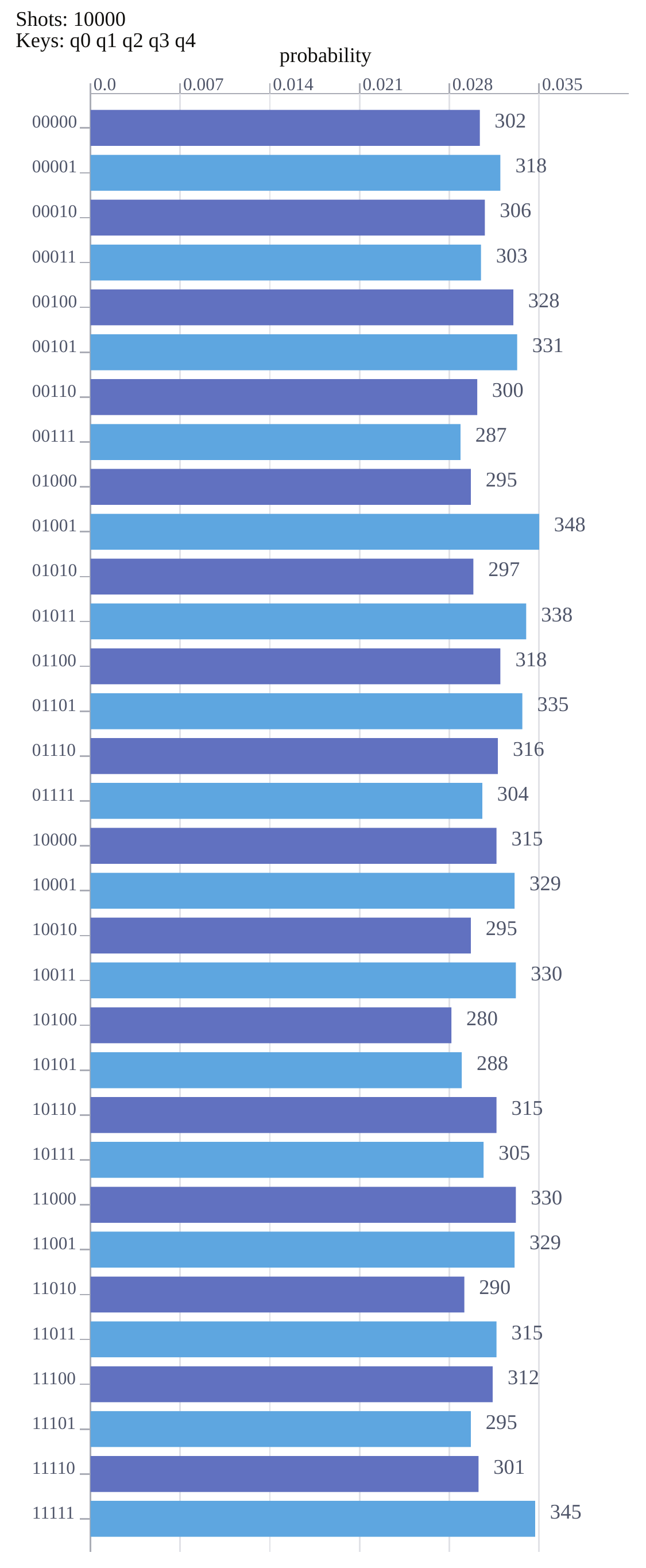}
\caption{\label{measure01001grovernoise009} The sampling results of the 5-qubit Grover's algorithm in the depolarizing channel. The noise parameter $p=0.09$.}
\end{minipage}
\begin{minipage}{0.32\textwidth}
\centering
\includegraphics[width=1.45in]{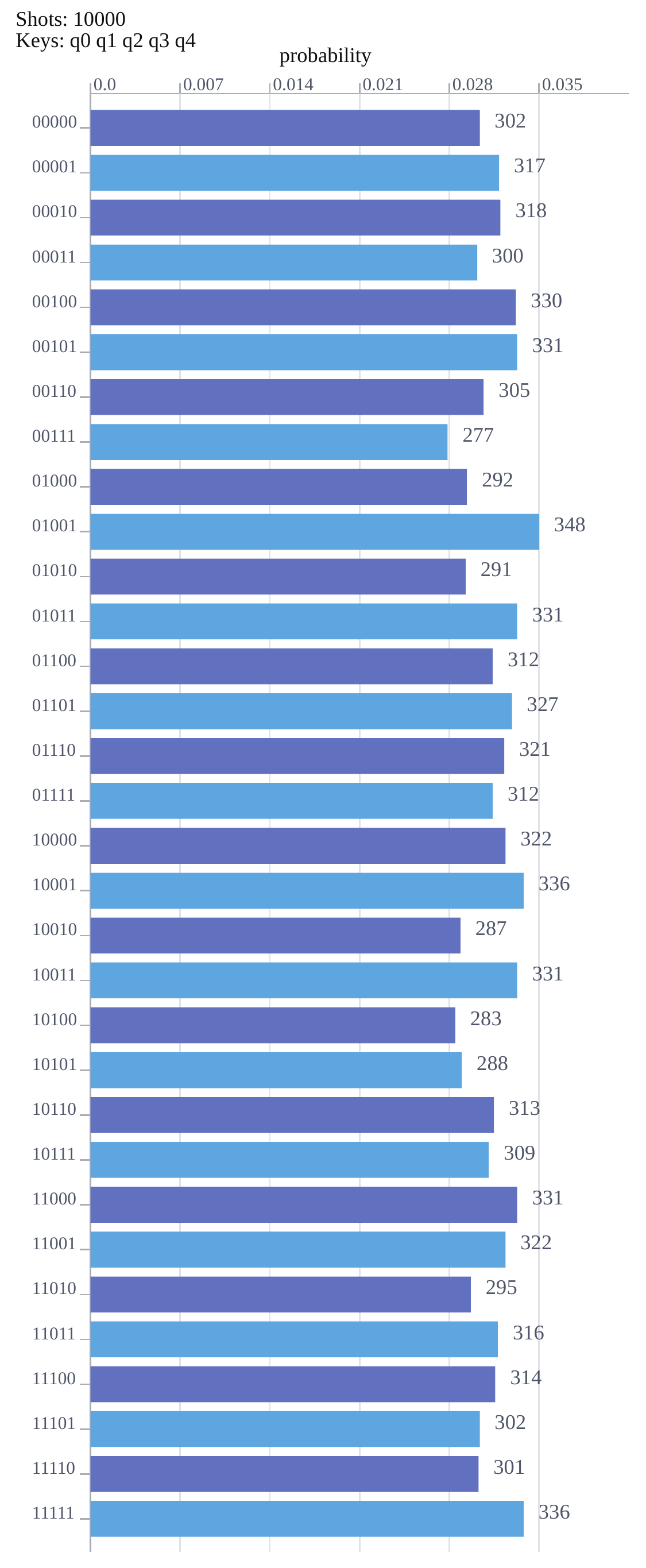}
\caption{\label{measure01001longnoise009} The sampling results of the 5-qubit algorithm by Long in the depolarizing channel. The noise parameter $p=0.09$.}
\end{minipage}
\begin{minipage}{0.32\textwidth}
\centering
\includegraphics[width=1.45in]{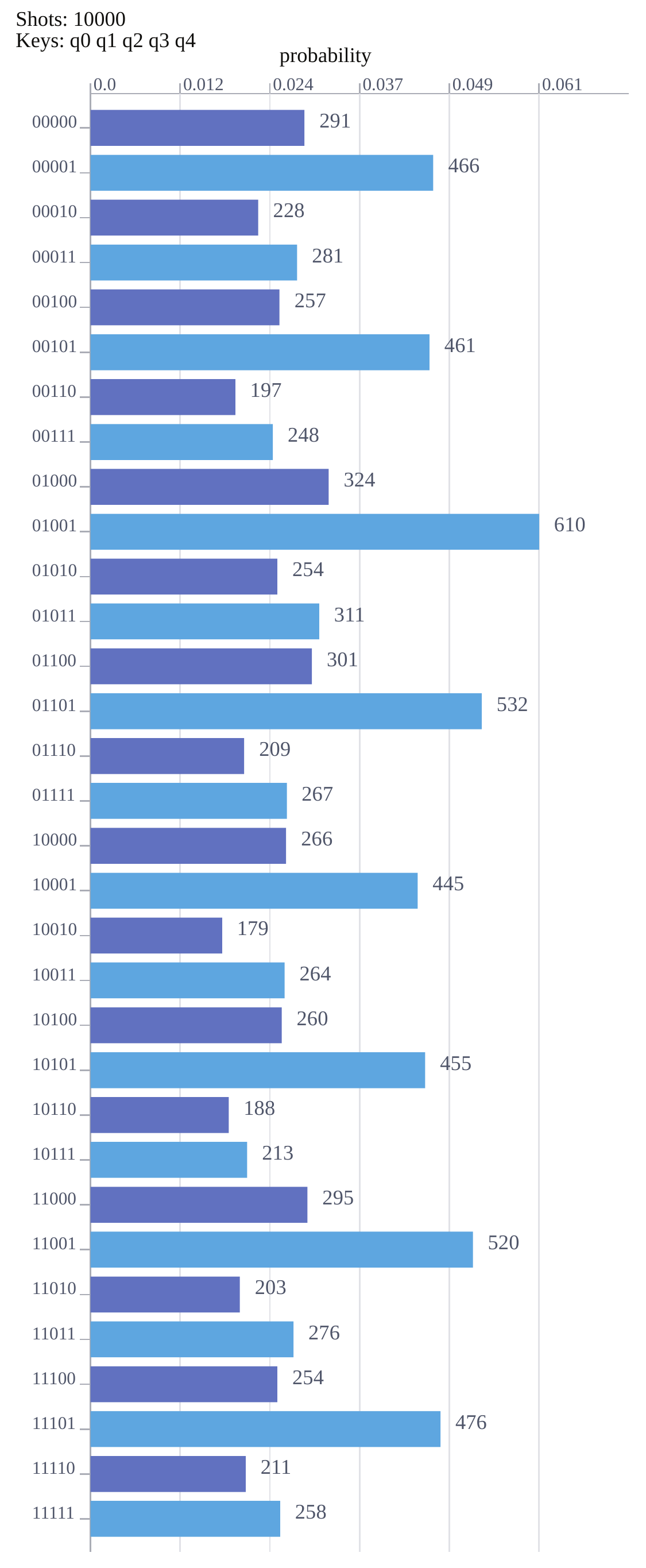}
\caption{\label{measure01001deganoise009} The sampling results of the 5-qubit DEGA in the depolarizing channel. The noise parameter $p=0.09$.}
\end{minipage}
\end{figure}

\section{Conclusion}\label{conclusion}
Distributed quantum computation has gained extensive attention since small-qubit quantum computers seem to be built more practically  in the noisy intermediate-scale quantum (NISQ) era. Hence, researchers are considering how several small-scale devices might collaborate to accomplish a task on a large one.


In this paper, we have given a distributed Bernstein-Vazirani algorithm (DBVA) with $t$ computing nodes, and a distributed exact Grover's algorithm (DEGA) that solve the search problem with only one target item in the unordered databases. In comparison to the BV algorithm, the circuit depth of DBVA is not greater than $2^{\text{max}(n_0, n_1, \cdots, n_{t-1})}+3$ instead of $2^{n}+3$. In addition, the circuit depth of DEGA is $8(n~\text{mod}~2)+9$, which is less than the circuit depth of Grover's algorithm,
$1 + 8\left\lfloor \frac{\pi}{4}\sqrt{2^n} \right\rfloor$.



Finally, we have provided particular situations of our DBVA and DEGA on MindQuantum to validate the correctness and practicability of our methods. First of all, we have shown how to decompose the original 6-qubit BV problem into two 3-qubit and three 2-qubit tasks. In the next part, we have explicated the specific steps of implementing $n$-qubit DEGA, where $n\in\{2,3,4,5\}$. In the end, we have simulated 6-qubit DBVA and 5-qubit DEGA running in the depolarization channel. 
By simulating the algorithms running in the depolarized channel, it further illustrates that distributed quantum algorithms have the superiority of resisting noise.

However, we have only designed a distributed exact quantum algorithm for the case of unique target in search problem, so it is still open that designing a distributed exact quantum algorithm solves the search problem with the case of multiple targets.



\section*{Acknowledgements}
This work was supported in part by the National Natural Science Foundation of China (Nos. 61572532, 61876195), and the Natural Science Foundation of Guangdong Province of China (No. 2017B030311011).

\balance


\nocite{*}

\end{document}